\newcommand{\URL}[1]{\url{\detokenize{#1}}}
\theoremstyle{definition}
 \newcommand{\eps}{\varepsilon}
\newtheorem{theorem}{Theorem}[section]
\newtheorem{lemma}[theorem]{Lemma}
\newtheorem{proposition}[theorem]{Proposition}
\theoremstyle{remark}
\newtheorem{rem}[theorem]{Remark}
\definecolor{Red}{rgb}{0,0,0}
\definecolor{DR}{rgb}{0,0,0}
\definecolor{Blue}{rgb}{0,0,0}
\definecolor{Green}{rgb}{0,0,0}
\definecolor{Grey}{rgb}{0,0,0}
\definecolor{Magenta}{rgb}{0,0,0}
\definecolor{Dgreen}{rgb}{0,0,0}
\title{{Optimal Placement Of A Small Order In A Diffusive Limit Order Book}}
\author{
Jos\'{e} E. Figueroa-L\'{o}pez
\thanks{Department of Mathematics, Washington University in St. Louis, St. Louis, MO, 63130, USA.\hspace{1 cm}
{\tt figueroa@math.wustl.edu}. Research supported in part by the NSF Grants: DMS-1561141 and DMS-1613016.}
\and
Hyoeun Lee
\thanks{Department of Statistics, Purdue University, West Lafayette, IN, 47907, USA.
{\tt lee1487@purdue.edu}.}
\and
Raghu Pasupathy
\thanks{Department of Statistics, Purdue University, West Lafayette, IN, 47907, USA.
{\tt pasupath@purdue.edu}.}
}
\begin{document}

\maketitle

\abstract{
We study the optimal placement problem of a stock trader who wishes to clear his/her inventory by a predetermined time horizon $t$, by using a limit order or a market order. For a diffusive market, we characterize the optimal limit order placement policy and analyze its behavior under different market conditions. In particular, we show that, in the presence of a negative drift, there exists a critical time $t_{0}>0$ such that, for any time horizon $t>t_{0}$, there exists an optimal placement, which, contrary to earlier work, is different from one that is placed ``infinitesimally" close to the best ask, such as the best bid and second best bid. We also propose a simple method to approximate the critical time $t_{0}$ and the optimal order placement.

%

\vspace{0.2 cm}
\noindent{\textbf{Keywords and phrases}: Optimal Order Placement, Diffusive Limit Order Book, High-Frequency Trading}

}

\section{Introduction}

\par In today's stock market, most stock exchanges have adopted electronic trading systems, {where} buyers and sellers can trade securities, foreign exchange or financial derivatives electronically. This has led to the development of algorithmic {trading, which} relies on trading strategies based on pre-programmed trading instructions. More generally, high frequency trading (HFT) is a recent trend with a focus on short time scales. {Estimates} of HFT exceeds 50\% of {the} U.S.-listed equities trading volume
\footnote{\texttt{https://www.sec.gov/marketstructure/research/hft\_lit\_review\_march\_2014.pdf}}.  

\par Typically, one of the first problems of stock traders is to split a large order into smaller orders to reduce market impact, that is, the {adverse} effect that an order may have {against the security's price, as} buying (respectively, selling) an asset tends to move the price upward (respectively, downward). Secondly, they need to place those small orders into multiple time intervals. When placing the orders, they also need to decide whether to use a market or a limit order, and, in the second case, which price level to put the order. Limit order is an order to trade an asset at a specified price. The buyer/seller can specify the price but the execution of the limit order is not guaranteed. By contrast, a market order is an order to buy or sell the asset at the best available price. The execution of the market order occurs immediately. The different methods used to solve these problems are broadly called optimal {execution/placement} strategies.

\par A limit order book (LOB) collects all the limit orders, including the quantities and the price. The LOB is updated as market orders are executed, or more limit orders are submitted, or existing limit orders are cancelled. In a traditional optimal execution problem, we are interested in deciding whether {(and when) to place a market order or a limit order, but only at the best bid or ask price (see \cite{jacquierLiu2017}, \cite{cartea2014modelling}, \cite{alfonsi2010}, \cite{GuilbaudPham2013}, \cite{Manglaras2015}, \cite{ContKukanov2013})}. 
However, a more recent stream of literature have also considered the problem of deciding whether placing the limit order deeper in the book could be preferable. This question of determining the optimal price level is often called the optimal placement problem of a limit order. 

In  \cite{guo2013optimal}, an optimal placement problem is studied under a discrete-time model for the level I prices of a LOB. Specifically, {Guo et.~al.}~\cite{guo2013optimal} investigates the optimal placement problem when the investor wants to buy one share of the asset by a certain specified time horizon $t$,  assuming that the best ask price follows a symmetric correlated random walk (CRW) (see \cite{renshaw1981correlated} for the definition of CRW).  {Guo et.~al.}~\cite{guo2013optimal} assumes a ``static" trading strategy where the investor's limit order cannot be cancelled before $t$, and a limit order not executed by time $t$ is automatically cancelled and changed to a market order at time $t$. {It is also assumed therein} that there is a positive constant probability that the investor's order would be executed each time the order's price becomes the best bid price of the LOB. A key conclusion in \cite{guo2013optimal} is that the optimal strategy that minimizes the investor's expected cost is one of three possibilities: (i) placement at the best bid; (ii) placement at the second best bid; or (iii) initial market buy order. Furthermore, the answer changes according to the respective values of the rebate, the market fee, and the transition probability. 

We refer to {strategy of placing} the order at the best or second best bid level (as in (i) or (ii) above) as the ``Level I-II" or ``trivial" solution throughout the current work.  
 Such terminology is on account of two reasons. 
 First, placing the order at the best or second best bid does not incorporate any information about the state of the book at time $0$, which is typically available and {should ideally be} taken into account when placing a limit order. Second, by construction, a symmetric random walk as what is assumed in~\cite{guo2013optimal} lacks ``drift," even though for {mid} range time horizons, actual price processes sometimes exhibit moderate drift {(see \cite{cartea2014modelling})}. Our investigation in this paper addresses both these issues directly. Specifically, we characterize the nature of the optimal placement when the price dynamics deviates from a symmetric correlated random walk, while also incorporating information about the initial state of the LOB.

In this paper, we discuss the optimal strategy when the price dynamics follow a diffusive model such as a  Brownian motion (BM) or a Geometric Brownian motion (GBM). A BM model, often called the Bachelier model, can be seen as a reasonable approximation of asset price dynamics at intermediate intraday time horizons  (see, e.g., \cite{cont2013price} and \cite{chavez2014one}). Also, bridging with the work of \cite{guo2013optimal}, {a BM with $0$ drift (respectively, nonzero drift) appears as the limit of a symmetric (respectively, asymmetric) correlated random walk} when the time step between price changes and the tick size goes  to $0$ in a certain way (cf. \cite[Section 3]{renshaw1981correlated}, \cite{gruber2006diffusion}). However, GBM (also known as the Black-Scholes model) is generally believed to better fit asset price dynamics for longer time periods, in line with more traditional macro asset price models.

It is expected that, under the presence of negative drift, there exists an optimal placement policy different from the Level I-II solution of \cite{guo2013optimal}. Intuitively, if the drift of the stock is $\mu<0$, so that on average the best ask price is at the level $S_{0}+\mu t$ at the time horizon $t$, we expect that placing the order around such a level would be better than placing it at a level close to the best ask $S_{0}$. Such intuition is rigorously justifiable and we demonstrate the existence of a critical time $t_{0}>0$ such that a nontrivial optimal solution exists for any horizon $t>t_{0}$. Furthermore, we find that such a time horizon $t_0$ admits the parsimonious closed-form approximation $\rho(0^{+})(r+f)/2|\mu|$ for the BM model, and $\rho(0^{+})(r+f)/2S_{0}|\mu|$ for the GBM model. Here, $r$ and $f$ are respectively the investor's rebate and fee per executed limit and market order, respectively, and $\rho(0^{+})$ measures the probability that an order placed at the initial best bid would be executed {before the best ask queue gets depleted}. In general, the optimal solution will depend on the time horizon $t$, the drift $\mu$, the volatility $\sigma$, and a function {$\rho:(0,\infty)\times(0,\infty)\to (0,1]$, such that $\rho(x,t)$ is the probability that an order placed at level $S_{0}-x$ is executed during the first {time period} that this level becomes the best bid price and before the investment's time horizon $t$}. We can incorporate information about the initial state of the LOB through {$\rho(x,t)$}: the larger the initial queue size at level $x$ is, the smaller {$\rho(x,t)$} would be. We also analyze the behavior of the non-trivial optimal solution in different market regimes. Thus, for instance, under a long horizon or small volatility regime, the optimal placement solution takes the form $-\mu t\theta$, where $\theta>1$ is explicitly characterized.

\par The paper is organized as follows. In section 2, we introduce the optimal placement problem, together with the investor's expected cost function that we aim to minimize.  In section 3, we study the problem under the BM model and show the existence of the critical horizon time $t_{0}$, together with the asymptotic behaviour of the optimal placement strategy when $t\searrow{}t_{0}$ and when $t\nearrow\infty$. In Section 4, we carry on the same plan for a GBM model and, in addition, we also consider the behavior of the optimal placement strategy in a small volatility regime $\sigma \searrow{}0$. {Section \ref{NwSect} investigates the behavior of the probability $\rho(x,t)$ defined above  and assess the plausibility of the assumptions used in the paper, both theoretically and empirically.} Section 6 gives some conclusions . The proofs of {our main results and some further details} are given in Appendices.

\paragraph{General Notation.}
{The partial derivatives of a function $f(x,t)$ are denoted by $\partial_{x}f$, $\partial_{t}f$, {$\partial_{t}\partial_{x}f$}, $\partial^{2}_{x}f$, etc. The pdf, cdf, and survival or tail distribution of a standard normal r.v. $Z$ are denoted by $\phi(z)=e^{-z^{2}/2}/\sqrt{2\pi}$, $N(z)=\int_{-\infty}^{z}\phi(x)dx$, and ${N}^{c}(z)={\bar{N}(z)}=1-N(z)$, respectively.}

\section{Toward The Optimal Placement Problem In Continuous Time}\label{Sec2}

An investor wishes to buy one share of the stock by some predetermined time horizon $t>0$. Obviously he wishes to buy at the lowest possible price, for which he places a limit buy order at the price level $\bar{S}_{0}-x$, where $x>0$. Hereafter, $\bar{S}_{u}:=\bar{S}^{(\delta,\eps)}_{u}$ denotes the best ask price per share at time $u\geq{}0$ when the average time span between price changes is governed by a parameter $\delta>0$ and the tick size, which is assumed to coincide with the price increment at each price change, is $\eps>0$. {In particular, we are also assuming} that the spread between the best bid and ask is always one-tick $\eps$ apart (see \cite{cont2013price} for some empirical evidence strongly supporting this assumption). Hence, the investor's order can only be executed when the best ask price is at level $\bar{S}_{0}-x+\eps$ {or, equivalently, when the best bid price is at level $\bar{S}_{0}-x$}. We also denote $\tau_{0}:=0$ and $0<\tau_{1}<\tau_{2}<\dots$ the consecutive {times of changes in the best ask price}. In particular, {we are assuming that} $\delta=E(\tau_{i+1}-\tau_{i})$, for all $i\geq{}0$. Let us remark here that,  in a more general setting, $\delta$ could be set to be a different model parameter such as one that increases the speed of the price changes as $\delta\searrow{}0$.

We adopt the following trading strategy and resulting investor's cost:
\begin{enumerate}
	\item If $\bar{S}_{u}>\bar{S}_{0}-x+\eps$, for all $0\leq{}u\leq{}t$ (in particular, $x>\varepsilon$), the investor's limit order won't be fulfilled and he will cancel the order at time $t$ and buy the share at the market price $\bar{S}_{t}$. In that case, the investor's cost/gain can be set to be $\bar{S}_{t}-\bar{S}_{0}+f$, where $f\geq{}0$ is the fee that the market imposes per executed market order.
	\item Otherwise, suppose that the first time that the ask price is at the level $\bar{S}_{0}-x+\eps$, hereafter denoted $\tau$, happens before time $t$. Set $j:=\min\{i\geq{}0:\tau_{i}=\tau\}$. Then, there are three possibilities:
	\begin{enumerate}
		\item[(i)] The investor's order is executed {at a time $s\leq{}t\wedge{}\tau_{j+1}$}. In that case, the investor's cost/gain is set to be $-x-r$, where $r\geq{}0$ is the rebate per executed limited order.
		\item[(ii)] On the contrary, if the order is not executed before $t$ and $\tau_{j+1}\leq{}t$ (so that necessarily $\bar{S}_{\tau_{j+1}}=\bar{S}_{0}-x+2\eps$), the investor will cancel the order and buy the share at the market price $\bar{S}_{0}-x+2\eps$. In that case, his cost/gain will be $-x+f+2\eps$.
		\item[(iii)] If, again, the order is not executed before $t$ and $\tau_{j+1}>t$,
 so that the next price change happens after the time horizon $t$, then the investor will cancel the order at time $t$ and buy the share at the market price $\bar{S}_{t}=\bar{S}_{0}-x+\varepsilon$. In that case, his cost/gain will be $\bar{S}_{t}-\bar{S}_{0}+f$, which is the same as the case 1 above.
	\end{enumerate}
\end{enumerate}
{We aim to minimize the investor's expected cost, as introduced in the points 1-2 above. Let us first derive an explicit formula for it, for which we need to define the event $E_{t}$ that the best bid price reaches the level $\bar{S}_{0}-x$ {by} time $t$ and that, during the first time period that this happens, the order is executed {before time} $t$. In terms of {this event  $E_{t}$,} the running minimum $\bar{Y}_{t}:=\bar{Y}^{(\delta,\eps)}_t := \inf\limits_{0 \leq u \leq t}\bar{S}^{(\delta,\eps)}_u$, and $j:=\min\{i\geq{}0:{S_{\tau_{i}}}=\bar{S}_{0}-x+\varepsilon\}$,  the cost function can then be written  as follows for $x>0$ (see Appendix \ref{SuppLmsC} for its derivation):
\begin{align}\label{eq:CostContinuousApprox}\nonumber
\bar{C}_{\delta,\eps}(x,t)&=E\left[\left.\bar{S}_t-\bar{S}_0\right|\bar{Y}_t >\bar{S}_0 -x+\eps\right]P\left(\bar{Y}_t>\bar{S}_0-x+\eps\right)\\
&\quad +P(\bar{Y}_t \leq \bar{S}_0-x+\eps)\left[-x-{\rho(x,t)}(r+f)+2\varepsilon(1-\rho(x,t))\right]+f\\
\nonumber
&\quad+ P(\bar{Y}_t \leq \bar{S}_0-x+\eps,\tau_{j+1}>{}t,E_{t}^{c})\eps,
\end{align}
where $\rho(x,t):=\rho^{(\delta,\varepsilon)}(x,t)=P(E_t|\bar{Y}_t \leq S_0 - x + \epsilon)$. 
The cost function (\ref{eq:CostContinuousApprox}) is inspired by that of \cite{guo2013optimal}, but there is a significant difference in the treatment of the situation 2-(ii) described above since early cancellation is not allowed in \cite{guo2013optimal}.}
\begin{rem}\label{CmptRho}
As stated above, $\rho(x,t)$ is the probability that {a bid limit} order placed at level $\bar{S}_{0}-x$ at time $0$ is executed {before time $t$} and during the first time period when {the best bid price is at level $\bar{S}_{0}-x$}, given that the latter event happened. One can {numerically} compute {$\rho(x,t)$} based on the initial bid queue size $Q^{b}_{x}(0)$ at level $\bar{S}_{0}-x$ and some specific assumptions about how the order flow takes place and the size of the best ask queue after a price change in a stationary state.  Intuitively, we expect {$\rho(x,t)$} to move in opposite direction with the initial queue size $Q^{b}_{x}(0)$: the larger $Q^{b}_{x}(0)$ is, the smaller the probability {$\rho(x,t)$} would be. For instance, slightly modifying the framework in \cite{cont2013price}, {suppose that, after a depletion of the best bid queue, which momentarily widens the spread, a flow of sell limit orders quickly fills the gap, and the spread reduces again to one tick\footnote{{This typically happens in just a few milliseconds according to \cite{cont2013price}. If, instead, a flow of buy limit orders fill the gap, we don't consider this time as a change of the ask price and move on to analyze the next time that the best bid price changes again.}}. We assume that the resulting size of the best ask queue after the gap is filled is drawn at random from a stationary distribution $f^{a}:\mathbb{Z}_{+}\to[0,1]$. Then,} $\rho(x,t)$ could be modeled as follows, for $x>\varepsilon$,
	\begin{equation}\label{ExprRho0}
		\rho(x,t):=\sum_{i=1}^{\infty}\sum_{j=0}^{Q^{b}_{x}(0)}{f^{a}(i)}\int_{0}^{t}f_{\tau}(s|0<\tau<t)P({{N^{b,x}_{s}}}=j)\alpha_{t-s}(i,Q^{b}_{x}(0)-j+1){ds,}
	\end{equation}
	where $\tau$ is the first time that the best bid price hits the level $\bar{S}_{0}-x$, {$f_{\tau}(s|0<\tau<t)$ is the density of $\tau$ conditioning on $0<\tau<t$}, ${N^{b,x}_{s}}$ represents the number orders cancelled by time $s$ out of the initial $Q^{b}_{x}(0)$ orders outstanding at level $\bar{S}_{0}-x$, and $\alpha_{u}(i,\ell)$ is the probability that the best bid of a LOB gets depleted before the best ask and before time $u$ when there are $i$ orders at the best ask and $\ell$ orders at the best bid at time $0$. We assume $\alpha$ only depends on the state of the LOB through $i$ and $\ell$. For $x=\varepsilon$, {${\rho(\eps,t)=\alpha_{t}(Q^{a}_{\eps}(0),Q_{\eps}^{b}(0)+1)}$, where now $Q^{a}_{\eps}(0)$} represents the outstanding orders at the best ask price at {time $0$}. The function $\rho(x,t)$ could be numerically computed, after imposing some reasonable assumptions on the LOB order flow and the dynamics of the best ask price (which determines $f_{\tau}$), and after estimating $f^{a}$ from real LOB data. Some details regarding the computation of $\rho(x,t)$ are provided in Section \ref{NwSect}.
\end{rem}

{Now, we are ready to move {to} the optimal placement problem in continuous time. We} assume that the {(average) time-step between price changes $\delta$} and the tick-size $\eps$ are {small-enough and related to each other in such a way that $\bar{S}^{(\delta,\eps)}$ can be approximated well by a suitable continuous-time process ${S:=\{S_{u}\}_{u\geq{}0}}$ {and $\rho^{(\delta,\varepsilon)}(x,t)\to \rho(x,t)$ for a smooth function $\rho(x,t)$}. Then, in the limit ($\delta,\eps\to{}0$), the analogous continuous time problem to (\ref{eq:CostContinuousApprox}) can be written as 
\begin{align}\label{eq:CostContinuous}\nonumber
C(x,t)&=E\left[\left.S_t-S_{0}\right|Y_t >S_0 -x\right]P\left(Y_t>S_0-x\right)\\
&\quad +P(Y_t \leq S_0-x)\left(-x-{\rho(x,t)}(r+f)\right)+f, \quad x>0.
\end{align}
In particular, the optimal placement $x^{*}(t)$, which minimizes ${C}(x,t)$ over all $x>0$, {only} depends on $r$ and $f$ through $r+f$, which can be considered as the ``penalty" for a non-executed limit order.} {In what follows we study the existence and behavior of the optimal placement $x^{*}(t)$ for arguably the two most important continuous models in finance: The Bachelier and the Black-Scholes Models.}

\section{{Optimal Order Placement Under The Bachelier Model}}
 In this section, we investigate the behavior of the optimal placement problem when 
the price process {$\{S_{t}\}_{t\geq{}0}$} follows a Brownian motion ({BM}) with drift, {often called the Bachelier model}.  All the proofs of this section are deferred to Appendix A. 

A Brownian motion with drift is a reasonable approximation for intermediate intraday time horizons (such as a few minutes) as shown by some recent works (see, e.g., \cite{cont2013price} and \cite{chavez2014one}). Also, we can see this model as the continuous-time counterpart of a correlated random walk (CRW). Concretely, by making the time step between price changes and tick size decay to $0$ in a certain way, a symmetric CRW converges to a drift-less Brownian Motion (cf. \cite[Section 3]{renshaw1981correlated}), while certain asymmetric CRW converges to a Brownian Motion with nonzero drift (cf. \cite{gruber2006diffusion}).

Let us start by giving a closed-form representation for the cost function.

\begin{restatable}{lemma}{primelemma}	
\label{BMCostFunc}

	Let  $dS_t = \mu dt + \sigma dW_t$, where  $W=\{W_t\}_{t\geq{}0}$ is a standard Brownian motion and $\mu \in \mathbb{R}$ and $\sigma>0$ are the drift and the volatility of the price process $S$, respectively. Then,  the cost function introduced in Eq.~(\ref{eq:CostContinuous}) admits the following representation:
	\begin{gather*}
	\begin{aligned}
		C(x,t)=& \left\{N\left(\frac{-x-\mu t}{\sigma\sqrt{t}}\right)+e^{\frac{-2x \mu }{\sigma^2} }N\left(\frac{-x+\mu t}{\sigma\sqrt{t}}\right) \right\}(-x-{\rho(x,t)(r+f)})\\
&+\mu t N\left(\frac{x+\mu t}{\sigma\sqrt{t}}\right)+e^{\frac{-2x \mu }{\sigma^2} }(2x-\mu t) N\left(\frac{-x+\mu t}{\sigma\sqrt{t}}\right)+f, \quad x>0.
	\end{aligned}
\end{gather*}
\end{restatable}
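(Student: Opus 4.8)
The plan is to strip away the level $S_0$ and work with the centered process $X_u := S_u - S_0 = \mu u + \sigma W_u$ and its running minimum $m_t := \inf_{0\le u\le t} X_u = Y_t - S_0$. Under this change the two conditioning events become $\{Y_t > S_0 - x\} = \{m_t > -x\}$ and $\{Y_t \le S_0 - x\} = \{m_t \le -x\}$. Rewriting the first term of (\ref{eq:CostContinuous}) as the unconditional expectation $E[X_t\,\mathbf{1}_{\{m_t > -x\}}]$ and observing that the coefficient of $-x-\rho(x,t)(r+f)$ is exactly $P(m_t \le -x)$, the entire identity collapses to computing two classical quantities for a Brownian motion with drift: the law of its running minimum and a truncated first moment.

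First I would record the distribution of the running minimum. By the reflection principle combined with a Girsanov change of measure to remove the drift, one has, for $a<0$,
\[
P(m_t \le a) = N\!\left(\frac{a-\mu t}{\sigma\sqrt{t}}\right) + e^{2\mu a/\sigma^2}\,N\!\left(\frac{a+\mu t}{\sigma\sqrt{t}}\right).
\]
Setting $a=-x$ produces precisely the bracketed factor multiplying $-x-\rho(x,t)(r+f)$ in the stated formula, and the additive constant $f$ is carried over verbatim from (\ref{eq:CostContinuous}). Thus the whole second line of the cost reduces to the running-minimum computation, and only the truncated first moment remains.

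The substantive step is evaluating $E[X_t\,\mathbf{1}_{\{m_t > -x\}}]$. I would use the joint sub-density of $(X_t,m_t)$ furnished by the same reflection argument: for $y>-x$,
\[
P(X_t\in dy,\ m_t>-x) = \frac{1}{\sigma\sqrt{t}}\left[\phi\!\left(\frac{y-\mu t}{\sigma\sqrt{t}}\right) - e^{-2\mu x/\sigma^2}\,\phi\!\left(\frac{y+2x-\mu t}{\sigma\sqrt{t}}\right)\right]dy.
\]
Integrating $y$ against this density splits the expectation into two Gaussian integrals; after the affine substitutions $z=(y-\mu t)/(\sigma\sqrt{t})$ and $w=(y+2x-\mu t)/(\sigma\sqrt{t})$, together with the elementary facts $\int_c^\infty \phi(u)\,du = N(-c)$ and $\int_c^\infty u\,\phi(u)\,du = \phi(c)$, each integral yields one ``$\mu t$-type'' term (matching the claimed $N$-terms $\mu t\,N(\tfrac{x+\mu t}{\sigma\sqrt{t}})$ and $e^{-2\mu x/\sigma^2}(2x-\mu t)\,N(\tfrac{-x+\mu t}{\sigma\sqrt{t}})$) plus a boundary $\phi$-term.

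The one place to watch, which I expect to be the main obstacle, is bookkeeping rather than conceptual: keeping the signs and arguments of the normal terms straight and, crucially, noticing that the two boundary contributions $\sigma\sqrt{t}\,\phi(c_1)$ and $e^{-2\mu x/\sigma^2}\sigma\sqrt{t}\,\phi(c_2)$, with $c_1=(-x-\mu t)/(\sigma\sqrt{t})$ and $c_2=(x-\mu t)/(\sigma\sqrt{t})$, cancel rather than survive. This cancellation is the identity $e^{-2\mu x/\sigma^2}\phi(c_2)=\phi(c_1)$, verified by completing the square in the exponent, since $-2\mu x/\sigma^2 - c_2^2/2 = -(x+\mu t)^2/(2\sigma^2 t) = -c_1^2/2$. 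Once the $\phi$-terms drop out, the surviving expression is exactly the first line of the asserted formula, and assembling it with the running-minimum term and the constant $f$ completes the identification.
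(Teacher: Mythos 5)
Your proposal is correct and follows essentially the same route as the paper: centering at $S_0$, invoking the joint law of the terminal value and running minimum of a drifted Brownian motion (the paper quotes it from Jeanblanc et al., you derive it by reflection plus Girsanov), reading off the marginal law of the minimum for the bracketed factor, and computing the truncated first moment $\int_{-x}^{\infty} y\, P(X_t \in dy,\, m_t > -x)$. Your explicit verification that the boundary terms $\sigma\sqrt{t}\,\phi(c_1)$ and $e^{-2\mu x/\sigma^2}\sigma\sqrt{t}\,\phi(c_2)$ cancel is a detail the paper leaves implicit, but it is exactly the computation hiding behind the paper's final sentence.
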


When {$\mu=0$,
	\begin{equation*}
		C(x,t) =-2N\left(\frac{-x}{\sigma\sqrt{t}}\right){\rho(x,t)}(r+f)+f, \quad x>0,
	\end{equation*}
which} is strictly increasing on  $x\in(0,\infty)$ {if, for instance, ${x\to\rho(x,t)}$ is nonincreasing}\footnote{{Per Remark \ref{CmptRho}, this is expected to happen if the initial state of the book $Q^{b}_{x}(0)$ is nondecreasing and, thus, the conclusion make sense.}}.
 Also, note that, for any $\mu \in (-\infty,\infty)$,
\begin{equation*}
	C(0^{+},t):=\lim \limits_{x\rightarrow 0^+} C(x,t)=-{\rho(0^{+},t)}(r+f)+f<  
	f,
\end{equation*}
and, {therefore, it is never optimal to immediately buy at the market order {at time $0$. Therefore,} for a zero drift {BM} {with nonincreasing ${x\to\rho(x,t)}$}, $C(0^{+},t)<{}C(x,t)$, for all $x>{}0$}. In that case, we {say that $x=0^+$ is the optimal placement solution and call it the ``trivial" optimal  placement solution}. Intuitively, the value $0^+$ represents an ``infinitesimal" number and, in practice, this can be interpreted as the strategy of putting the limit buy order at the best {or second best bid price}. The just mentioned optimal order placement for a driftless {BM} is consistent with the conclusion of \cite{guo2013optimal} for a symmetric correlated random walk, which is {expected}, since, as mentioned {above}, the Brownian motion with zero drift is the diffusion limit of a symmetric correlated random walk {used in \cite{guo2013optimal}}.

 It should be expected that when the drift is positive, the optimal placement policy is still $x=0^{+}$ {for nonincreasing functions {$x\to\rho(x,t)$}}. However, for negative drifts, there should exist a non trivial optimal placement solution. As shown in Figure~\ref{Graph:Expcost_two_t}, this is not necessarily the case if the time horizon is small. The following result explores conditions for the existence of a nontrivial optimal placement solution.

\begin{figure}
\centering
	\includegraphics[height=7cm,width=0.6\textwidth]{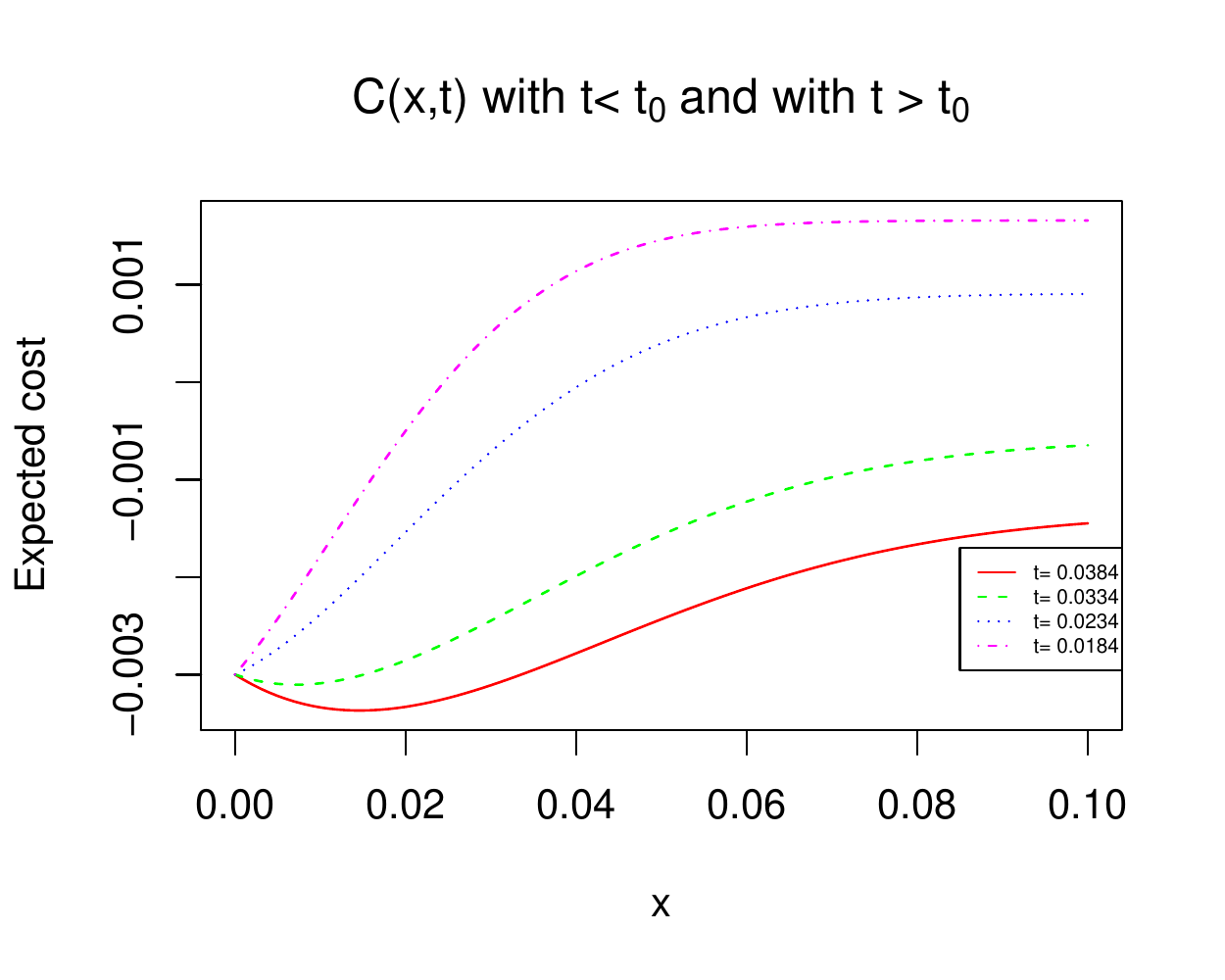}
	\caption{{$C(x,t)$ against $x$ with $t=0.0184$ (magenta), $t=0.0234$ (blue), $t=0.0334$ (green), and $t=0.0384$ (red) when $t_0=0.0284$, $(r+f)\rho(0^{+})=0.006, \sigma=0.2$.}}
	\label{Graph:Expcost_two_t}
\end{figure}

\begin{restatable}{theorem}{maintheorem}
\label{mainresult}
Let $C(x,t)$ and $\{S_t\}_{t\geq{}0}$ be as in Lemma \ref{BMCostFunc} {and suppose that $r+f>0$ and $\rho(x,t)$ is $C^{2}$ on $(0,\infty)\times (0,\infty)$. Then, the} following assertions hold:
\begin{enumerate}
	\item[(i)] If $\mu>0$ and $x\to{}\rho(x,t)$ is nonincreasing for each $t>0$, then $C(0^+,t) < C(x,t)$, for all $x>{}0$.
	\item[(ii)] Suppose that $\mu<0$ and that the following conditions hold for a given time horizon $t>0$:
\begin{align}\label{Sec5:Conditions0}
		\rho(0^{+},t)<\frac{2|\mu|t}{r+f}, \qquad {\frac{\partial\rho(0^{+},t)}{\partial x}\geq{}0}{.}
\end{align}
Then, there exists $x^*(t)\in(0,\infty]$ such that
	\[
		{C(x^*(t),t) \leq C(x,t), \quad \text{for all } x> 0.}
	\]
	Furthermore, $x^{*}(t)<\infty$, if the following additional conditions hold:
	\begin{align}
		\label{Sec5:Conditions0b}
		{\limsup_{x\to\infty}\frac{\partial \rho(x,t)}{\partial x}\leq{}0, \quad
		\liminf_{x\to\infty}\rho(x,t)x^{2}>\frac{2|\mu|\sigma^{2}t^{2}}{r+f},}
\end{align}
\end{enumerate}
\end{restatable}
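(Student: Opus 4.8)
The plan is to reduce every assertion to the sign of $\partial_{x}C$ near $x=0^{+}$ and as $x\to\infty$, working from the closed form of Lemma~\ref{BMCostFunc}. Writing $u=u(x)=\frac{-x-\mu t}{\sigma\sqrt t}$ and $v=v(x)=\frac{-x+\mu t}{\sigma\sqrt t}$, the one computational input I need repeatedly is the identity $e^{-2\mu x/\sigma^{2}}\phi(v)=\phi(u)$ (complete the square), which produces large cancellations upon differentiation. I decompose the cost as $C(x,t)=A(x,t)+f-\rho(x,t)(r+f)F(x,t)$, where $F(x,t)=P(Y_{t}\le S_{0}-x)=N(u)+e^{-2\mu x/\sigma^{2}}N(v)$ is the tail of the running minimum and $A(x,t):=E[(S_{t}-S_{0})\mathbf 1_{Y_t>S_0-x}]-xF(x,t)$ is the $\rho$-free part. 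Then $A(0^{+},t)=0$ and $F(0^{+},t)=1$, so $C(0^{+},t)=f-\rho(0^{+},t)(r+f)$, consistent with the remark preceding the theorem.

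A useful ingredient, from optional stopping applied to the martingale $S_{u}-S_{0}-\mu u$ and the first-passage time $\tau_{x}=\inf\{u:S_{u}-S_{0}=-x\}$, is the representation $A(x,t)=\mu\bigl(t-E[(t-\tau_{x})^{+}]\bigr)$, valid for every $\mu$. Since $0\le E[(t-\tau_{x})^{+}]<t$ for $x>0$, this gives $A(x,t)>0$ when $\mu>0$ (and $A(x,t)<0$ when $\mu<0$). Part~(i) then follows at once: when $x\mapsto\rho(x,t)$ is nonincreasing and $F\le 1$ we have $\rho(x,t)F(x,t)\le\rho(0^{+},t)$, whence
\begin{equation*}
C(x,t)-C(0^{+},t)=A(x,t)+(r+f)\bigl(\rho(0^{+},t)-\rho(x,t)F(x,t)\bigr)\ge A(x,t)>0,\qquad x>0 .
\end{equation*}

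For the existence claim in (ii) I show that~(\ref{Sec5:Conditions0}) forces $\partial_{x}C(0^{+},t)<0$. Differentiating the closed form and simplifying with the identity above gives $\partial_{x}C(0^{+},t)=\psi(w)-(r+f)\partial_{x}\rho(0^{+},t)+(r+f)\rho(0^{+},t)\frac{2}{\sigma\sqrt t}\bigl(\phi(w)+wN(w)\bigr)$, where $w:=\mu\sqrt t/\sigma<0$ and $\psi(w):=2w\phi(w)-N(-w)+N(w)(1+2w^{2})$. Since $\partial_{x}\rho(0^{+},t)\ge0$ and $\phi(w)+wN(w)>0$ (its $w$-derivative is $N(w)>0$ and it vanishes at $-\infty$), the hypothesis $\rho(0^{+},t)(r+f)<2|\mu|t=-2\mu t$ yields $\partial_{x}C(0^{+},t)<\psi(w)-4w\bigl(\phi(w)+wN(w)\bigr)=:\Psi(w)$. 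A short calculation shows $\Psi(0)=0$ and $\Psi'(w)=-4wN(w)$, so $\Psi$ is increasing on $(-\infty,0)$ and hence $\Psi(w)<0$ there; thus $\partial_{x}C(0^{+},t)<0$. Consequently $C(x_{1},t)<C(0^{+},t)$ for some small $x_{1}>0$. Extending $C(\cdot,t)$ continuously to the compactification $[0,\infty]$ (with $C(\infty,t)=\mu t+f$, the pure market-order cost, since $xF\to0$ and $\rho F\to0$), the minimum over $[0,\infty]$ is attained and cannot sit at $0$, giving $x^{*}(t)\in(0,\infty]$.

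Finally, for finiteness I rule out $x^{*}(t)=\infty$ by proving $C(x,t)<\mu t+f=C(\infty,t)$ for all large $x$ under~(\ref{Sec5:Conditions0b}). Here I use the exact identity $C(x,t)-(\mu t+f)=-N(u)\bigl(x+\mu t+\rho(r+f)\bigr)+e^{-2\mu x/\sigma^{2}}N(v)\bigl(x-\mu t-\rho(r+f)\bigr)$ together with the Mills-ratio expansion $N^{c}(\xi)=\phi(\xi)(\xi^{-1}-\xi^{-3}+\cdots)$. The leading $\phi(u)$-terms cancel, and the surviving $O(x^{-3})$ coefficient is proportional to $-2\mu\sigma^{2}t^{2}-(r+f)\,\rho(x,t)x^{2}$; the assumption $\liminf_{x\to\infty}\rho(x,t)x^{2}>\frac{2|\mu|\sigma^{2}t^{2}}{r+f}$ makes it negative for all large $x$, so $C(x,t)<C(\infty,t)$ there and $x^{*}(t)<\infty$. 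The condition $\limsup_{x\to\infty}\partial_{x}\rho(x,t)\le0$ enters to guarantee that $\rho$ is eventually monotone, which is what lets me convert the $\liminf$ into a uniform lower bound and control the $\rho$-dependent remainder of the expansion. I expect this matching at infinity to be the main obstacle: the delicate cancellation of the leading Gaussian-order terms forces a careful higher-order expansion that has to be balanced precisely against the $x^{2}\rho$ threshold, whereas the behaviour at $0^{+}$ is governed by the elementary monotone auxiliary function $\Psi$.
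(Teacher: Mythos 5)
Your proof is correct, and it takes a genuinely different route from the paper's in two of its three steps. For part (i), the paper never leaves the closed form: it shows that the $\rho$-free part of $\partial_{x}C$ is decreasing in $x$, positive at $0^{+}$, and tends to $0$ at infinity, hence is positive everywhere, so $C(\cdot,t)$ is increasing. Your optional-stopping identity $A(x,t)=E[S_{t\wedge\tau_{x}}-S_{0}]=\mu\bigl(t-E[(t-\tau_{x})^{+}]\bigr)$ replaces all of that with a one-line sign argument; it is shorter and probabilistically transparent, though it yields only the stated comparison $C(0^{+},t)<C(x,t)$ rather than the monotonicity the paper gets as a by-product. For existence in (ii), both arguments reduce to $\partial_{x}C(0^{+},t)<0$, and your formula for that derivative agrees with the paper's Eq.~(\ref{eq:bm:1st:x0}); your auxiliary function $\Psi$ with $\Psi(0)=0$ and $\Psi'(w)=-4wN(w)$ is a cleaner, unified substitute for the paper's two-case argument built on $2N(a_{t})-1<2\phi(a_{t})a_{t}$. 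The real divergence is in the finiteness step: the paper bounds $\partial_{x}C(x,t)$ from below for large $x$, where the term $-c\,\partial_{x}\rho(x,t)\bigl(N(-\alpha_{t})+e^{-2x\mu/\sigma^{2}}N(\beta_{t})\bigr)$ appears and the first condition in (\ref{Sec5:Conditions0b}) is used precisely to neutralize it; you instead compare values, proving $C(x,t)<C(\infty,t)=\mu t+f$ for all large $x$ from the exact identity plus Mills-ratio bounds. Since no derivative of $\rho$ ever enters your identity, your route obtains finiteness from the $\liminf$ condition alone. One technical caution on that step: the one-term bound $N^{c}(\xi)\le\phi(\xi)/\xi$ of (\ref{imp.inequ}) is not sharp enough, because the $\xi^{-3}$ terms from the $N(u)$ and $N(v)$ sides must cancel against each other to expose the $O(x^{-3})$ coefficient $2|\mu|\sigma^{2}t^{2}-(r+f)\rho(x,t)x^{2}$; you need the two-sided alternating bounds $\phi(\xi)(\xi^{-1}-\xi^{-3})\le N^{c}(\xi)\le\phi(\xi)(\xi^{-1}-\xi^{-3}+3\xi^{-5})$, applied as a lower bound to the $N(u)$ term and an upper bound to the $N(v)$ term. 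With those, the computation closes exactly as you outline.

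One side remark of yours is wrong, although it does not damage the proof: $\limsup_{x\to\infty}\partial_{x}\rho(x,t)\le 0$ does not imply that $\rho$ is eventually monotone (the derivative may keep changing sign while its $\limsup$ is $\le 0$), and no monotonicity is needed to turn the $\liminf$ hypothesis into a uniform eventual bound — by definition of $\liminf$, the second condition in (\ref{Sec5:Conditions0b}) already yields $\eta>0$ and $X$ with $\rho(x,t)x^{2}\ge 2|\mu|\sigma^{2}t^{2}/(r+f)+\eta$ for all $x\ge X$, which is all your expansion requires. So in your write-up the first condition of (\ref{Sec5:Conditions0b}) is simply idle (which is acceptable, since it is a hypothesis, not an obligation); delete the monotonicity justification rather than lean on it.
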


\begin{rem}\label{Remark3.3}
	Per our discussion in Section \ref{Sec2}, one can interpret $\rho(0^{+},t)$ as $\rho(\varepsilon,t)$, the probability that a limit order put on the best bid queue at time $0$ is executed before (or at) the next price change and before time t. The condition $\partial_x \rho(0^+,t)>0$ essentially says that the $\rho(2\varepsilon,t)>\rho(\varepsilon,t)$ {and, as argued in Section \ref{NwSect}, is typically met in practice when there is some LOB imbalance (e.g., $0<Q^{b}_{\eps}(0)-Q^{a}_{\eps}(0)\leq{}Q^{b}_{2\eps}(0)-\sum_{i}if^{a}(i)$, where we used the same notation as in (\ref{ExprRho0}))}. {The conditions in (\ref{Sec5:Conditions0}) are} needed to rule out that $x=0^{+}$ may be optimal, while the conditions in (\ref{Sec5:Conditions0b}) are needed to rule out that $x=+\infty$ may be optimal. In Section \ref{NwSect}, we prove that, for a large class of models, the first condition in (\ref{Sec5:Conditions0b}) holds, while, the quantity $\liminf_{x\to\infty}\rho(x,t)x^{2}$ appearing in the second condition therein can be {lower bounded by an explicit quantity and, thus, precise constrains on the model's parameters can be imposed} for this condition to be satisfied {(see Proposition \ref{Rhoxtx2:xInfty:General} and Remark \ref{Rhoxtx2:xInfty:GeneralRem} below)}.
\end{rem}

{The two conditions in (\ref{Sec5:Conditions0}) are needed to guarantee that $\partial_{x}C(0^{+},t)<0$ (and, hence, to rule out that $x^{*}(t)={}0$). It is natural to consider the smallest time horizon $t_0$ such that a nontrivial optimal placement policy would exist for any $t>t_{0}$. Concretely, let
\begin{equation}\label{Dfnt0a}
	{t_0:=\inf\left\{u\geq{}0: \frac{\partial C}{\partial x}(0^{+},s)<0,\;\text{ for all }s>u\right\}}.
\end{equation}
This} threshold $t_0$ 
is important because if the  investor's preferred time horizon were bigger than $t_0$,  then there would be a nonzero optimal {placement} for the limit order available to him/her. 
As a corollary of the proof of Theorem \ref{mainresult}, we deduce the following upper bound for $t_{0}$, {which remarkably is the same for any value of $\sigma$.}
\begin{restatable}{corollary}{secondthm}
\label{Crl1}
{Suppose that $\mu<0$ and that the second condition in (\ref{Sec5:Conditions0}) is satisfied for all $t$. Let $t_{0}$ be defined as in (\ref{Dfnt0a}).} Then, we have that $t_{0}\leq\bar{t}_{0}$, where $\bar{t}_{0}$ is such that $\rho(0^{+},t)<2|\mu|t/(r+f)$ for all $t>\bar{t}_{0}$. {In particular, $t_{0}\leq{}(r+f)/2|\mu|$.}
\end{restatable}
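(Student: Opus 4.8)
The plan is to read off the sign of $\partial_x C(0^+,t)$ from the proof of Theorem~\ref{mainresult} and then feed it into the definition (\ref{Dfnt0a}) of $t_0$. The analytic input I would borrow verbatim from the proof of Theorem~\ref{mainresult}(ii) is the implication that, when $\mu<0$, the two conditions in (\ref{Sec5:Conditions0}) --- that is, $\rho(0^+,t)<2|\mu|t/(r+f)$ together with $\partial_x\rho(0^+,t)\geq 0$ --- force $\partial_x C(0^+,t)<0$. This is obtained by differentiating the closed form of Lemma~\ref{BMCostFunc}, evaluating at $x=0^+$ (where the probability factor $N(\cdot)+e^{-2x\mu/\sigma^2}N(\cdot)$ collapses to $1$ and the three Gaussian densities coincide); the first condition dominates the $\rho(0^+,t)$ contribution while the second removes the $-\partial_x\rho(0^+,t)(r+f)$ term, yielding strict negativity. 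For the present corollary I would simply cite this fact.

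Granting this implication, the remainder is pure bookkeeping against (\ref{Dfnt0a}). The corollary assumes the second condition in (\ref{Sec5:Conditions0}) holds for \emph{every} $t$, and $\bar t_0$ is, by its defining property, a time past which the first condition holds; hence for every $s>\bar t_0$ both conditions are in force and so $\partial_x C(0^+,s)<0$. This exhibits $\bar t_0$ as a member of the set $\{u\geq 0:\partial_x C(0^+,s)<0\ \text{for all}\ s>u\}$, and taking the infimum gives $t_0\leq\bar t_0$ at once.

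To obtain the explicit, $\sigma$-free bound I would invoke only that $\rho(0^+,t)$ is a probability, so $\rho(0^+,t)\leq 1$ for all $t$. Then for every $t>(r+f)/(2|\mu|)$ one has $2|\mu|t/(r+f)>1\geq\rho(0^+,t)$, i.e.\ the first condition in (\ref{Sec5:Conditions0}) is automatically satisfied on that range. Thus $(r+f)/(2|\mu|)$ is itself an admissible value of $\bar t_0$, and the previous paragraph yields $t_0\leq(r+f)/(2|\mu|)$, a bound that does not involve $\sigma$.

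The proof is short, so there is no serious obstacle once Theorem~\ref{mainresult} is in hand; the one place I would be careful is the passage through the infimum in (\ref{Dfnt0a}). One must check that negativity of $\partial_x C(0^+,\cdot)$ on the whole open half-line $(\bar t_0,\infty)$, rather than at isolated points, is exactly what the definition of $t_0$ requires, and that no monotonicity of $s\mapsto\partial_x C(0^+,s)$ need be assumed for the infimum comparison to go through.
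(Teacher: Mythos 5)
Your proposal is correct and is essentially the paper's intended argument: the paper presents this corollary as a direct consequence of the proof of Theorem~\ref{mainresult}, whose key output is exactly the implication you cite (under $\mu<0$, the two conditions in (\ref{Sec5:Conditions0}) at time $t$ force $\partial_{x}C(0^{+},t)<0$), after which the comparison with the infimum in (\ref{Dfnt0a}) and the bound $\rho(0^{+},t)\leq 1$ give $t_{0}\leq\bar{t}_{0}\leq (r+f)/2|\mu|$ just as you wrote. Your closing caution about needing negativity on the whole half-line $(\bar{t}_{0},\infty)$, with no monotonicity assumption, is also the right reading of the definition of $t_{0}$.
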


{It is easy to see that $\rho(0^{+},t)$ is nondecreasing with $t$ and, as argued in Section \ref{NwSect}, it typically converges to its maximum value in just a few seconds under reasonable market conditions  {(see Figure \ref{rhoxt_timesx} therein)}. For these reasons, hereafter we assume that $\rho(0^+,t)$ is constant in $t$ and use  $\rho(0^+)$ for the rest of this Section 3.} 

Corollary \ref{Crl1} implies that $t_0$ is upper bounded by $(r+f)/2|\mu|$ and, thus, it gets smaller when the drift gets negatively larger or when the sum of the rebate and fee gets smaller. 
While having $\mu$ large may be too much to ask in practice, we do have that $r+f$ is quite small {in practice}. Hence, it is natural to ask about the asymptotic behavior of $t_{0}$ as $(r+f)\to{}0$. The following result provides further information about $t_{0}$.

\begin{restatable}{theorem}{secondthm}
\label{thm2b}
Let the {second condition in (\ref{Sec5:Conditions0}) and the two conditions in (\ref{Sec5:Conditions0b}) be satisfied for all $t>0$. Let also assume} that $\rho(0^{+},t)\equiv \rho(0^{+})\in(0,1]$, for all $t$, and $\limsup_{t\to{}0}\partial_{x}\rho(0,t)<\infty$. Then, {the critical time $t_{0}$ defined in (\ref{Dfnt0a}) is positive and, hence, is such that} $\partial_{x}C(0^{+},t_{0})=0$. Furthermore, we have 
\begin{equation*}
	{\lim_{(r+f)\to 0} \frac{t_0}{(r+f)} = \frac{{\rho(0^{+})}}{2 |\mu|}}.
\end{equation*}
{If, in addition, $\partial_{t}\partial_{x}\rho(0,t)\geq{}0$, for all $t$, then $t_{0}$ is the only solution of the equation $\partial_{x} C(0^{+},t)=0$.}
\end{restatable}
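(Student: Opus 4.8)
The plan is to reduce all three assertions to a study of the single scalar function
\[
\psi(t):=\partial_x C(0^+,t)=\lim_{x\downarrow 0}\partial_x C(x,t),
\]
whose sign decides whether $x=0^+$ can be optimal and whose zeros are the candidates for $t_0$. First I would obtain a closed form for $\psi$. Writing $C(x,t)=H(x,t)+A(x,t)\bigl(-x-\rho(x,t)(r+f)\bigr)+f$ with $A(x,t)=P(Y_t\le S_0-x)$ and $H(x,t)=E[(S_t-S_0)\mathbf{1}_{\{Y_t>S_0-x\}}]$ read off from Lemma \ref{BMCostFunc}, I differentiate in $x$ and let $x\downarrow0$. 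Setting $a:=\mu\sqrt t/\sigma<0$, the two elementary computations $A(0^+,t)=1$,
\[
\partial_x A(0^+,t)=-\frac{2}{\sigma\sqrt t}\bigl(\phi(a)+aN(a)\bigr),\qquad
\partial_x H(0^+,t)=2a\phi(a)+2(a^2+1)N(a),
\]
give
\[
\psi(t)=\underbrace{2a\phi(a)+2(a^2+1)N(a)-1}_{=:F(t)}
+\frac{2\rho(0^+)(r+f)}{\sigma\sqrt t}\bigl(\phi(a)+aN(a)\bigr)-(r+f)\,\partial_x\rho(0^+,t).
\]
No interchange of limit and derivative is needed, since for fixed $t>0$ the map $x\mapsto C(x,t)$ is the explicit smooth function of Lemma \ref{BMCostFunc} (using $\rho\in C^2$); all three parts are then analysed through $a$ and the auxiliary function $h(a):=\phi(a)+aN(a)$.

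For positivity of $t_0$ and the identity $\partial_x C(0^+,t_0)=0$: since $F(0^+)=0$, $h(a)\to\phi(0)=1/\sqrt{2\pi}>0$ as $t\downarrow0$, and $\partial_x\rho(0^+,t)$ stays bounded near $0$ by hypothesis, the middle term forces $\psi(t)\to+\infty$, so $\psi>0$ on some interval $(0,\eta)$. On the other hand $F(t)\to-1$ as $t\to\infty$ while the middle term vanishes and $-(r+f)\partial_x\rho(0^+,t)\le0$ by the second condition in (\ref{Sec5:Conditions0}); hence $\psi(t)\le F(t)+o(1)<0$ for all large $t$. By continuity of $\psi$, the quantity $T:=\sup\{t>0:\psi(t)\ge0\}$ lies in $(0,\infty)$; one checks that the $t_0$ of (\ref{Dfnt0a}) equals $T$ and that $\psi(T)=0$ by a squeeze, so $t_0>0$ and $\partial_x C(0^+,t_0)=0$.

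For the limit $\lim_{(r+f)\to0}t_0/(r+f)=\rho(0^+)/(2|\mu|)$, I substitute the small-$t$ expansions $F(t)=\tfrac{4\mu}{\sigma\sqrt{2\pi}}\sqrt t\,(1+o(1))$ and $h(a)=\tfrac1{\sqrt{2\pi}}(1+o(1))$ into $\psi(t_0)=0$, multiply by $\sqrt{t_0}$, and divide by $(r+f)$. Since $t_0\to0$ as $(r+f)\to0$ by Corollary \ref{Crl1}, the term $(r+f)\sqrt{t_0}\,\partial_x\rho(0^+,t_0)=o(r+f)$ is of strictly lower order than the two retained $O(r+f)$ terms, and the surviving balance is $2\mu\,t_0+\rho(0^+)(r+f)\to0$. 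I would establish that $t_0/(r+f)$ is bounded above (Corollary \ref{Crl1}) and bounded away from $0$ (otherwise the $1/\sqrt{t_0}$ factor would blow up, contradicting $\psi(t_0)=0$), so every subsequential limit $\lambda$ solves $2\mu\lambda+\rho(0^+)=0$, yielding $\lambda=\rho(0^+)/(2|\mu|)$.

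For uniqueness under $\partial_t\partial_x\rho(0,t)\ge0$, I show $\psi$ is strictly decreasing. The identity $F'(a)=4h(a)$, together with $h'(a)=N(a)>0$ and $h(-\infty)=0$, gives $h>0$ for $a<0$, so $\partial_tF=4h(a)\,\tfrac{\mu}{2\sigma\sqrt t}<0$; the middle term equals $\tfrac{2\rho(0^+)(r+f)}{\sigma}\cdot\tfrac{h(a)}{\sqrt t}$, a product of two positive $t$-decreasing factors ($1/\sqrt t$ and $h(a)$, the latter because $h$ increases in $a$ while $a$ decreases in $t$), hence decreasing; and $-(r+f)\partial_x\rho(0^+,t)$ is non-increasing by hypothesis. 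A continuous strictly decreasing function with the sign behaviour above has exactly one zero, which is therefore $t_0$. The hard part will be the rigorous asymptotic bookkeeping in the limit above: controlling the remainders uniformly and pinning the order of $t_0$ tightly enough to confirm that the $(r+f)\partial_x\rho$ contribution does not enter the leading balance.
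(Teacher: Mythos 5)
Your proposal is correct and follows essentially the same route as the paper's proof: your closed form for $\psi(t)$ is exactly the paper's expression (\ref{eq:bm:1st:x0}) regrouped via $h(a)=\phi(a)+aN(a)$, and the three steps — sign analysis of $\psi$ as $t\to 0$ and $t\to\infty$ for positivity of $t_{0}$, the $\sqrt{t}$-expansion combined with the bound $t_{0}\leq (r+f)/2|\mu|$ of Corollary \ref{Crl1} for the limit, and strict monotonicity of $\psi$ in $t$ for uniqueness — are precisely the paper's steps. The only cosmetic difference is that you obtain monotonicity term-by-term (using $F'=4h$ and $h'=N$) where the paper computes the mixed partial $\partial_{t}\partial_{x}C(0,t)$ in closed form as in (\ref{bm:2ndMixedDer:x0}); these are the same computation.
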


From a practical point of view, the approximation provided by the previous result is quite important since in most markets $r+f$ is negligible. For instance, for any US exchange, there is a fee and rebate cap of \$0.003 per share\footnote{See Code of Federal Regulations, Title 17, 242.610(c)(1) and Securities Exchange Act Release No. 51808 (Jun. 9, 2005), 70 FR 37496, 37545 (Jun. 29, 2005) (File No. S7-10-04)}, which makes $r+f\leq \$0.006$. Broadly, if the investor can wait longer than $(r+f){\rho(0^{+})}/2|\mu|$, he/she can use an optimal placement strategy that is better than the placement at the best bid or a market order. In the left panel of Figure~\ref{Graph:RelError_t0}, we show the relative error of $\bar{t}_{0}:=(r+f)\rho(0^{+})/2|\mu|$ against $|\mu|$ when $(r+f)\rho(0^{+})=0.006$ and $\sigma=0.1$. 

\begin{figure}
\centering
\includegraphics[height=8cm,width=0.5\textwidth]{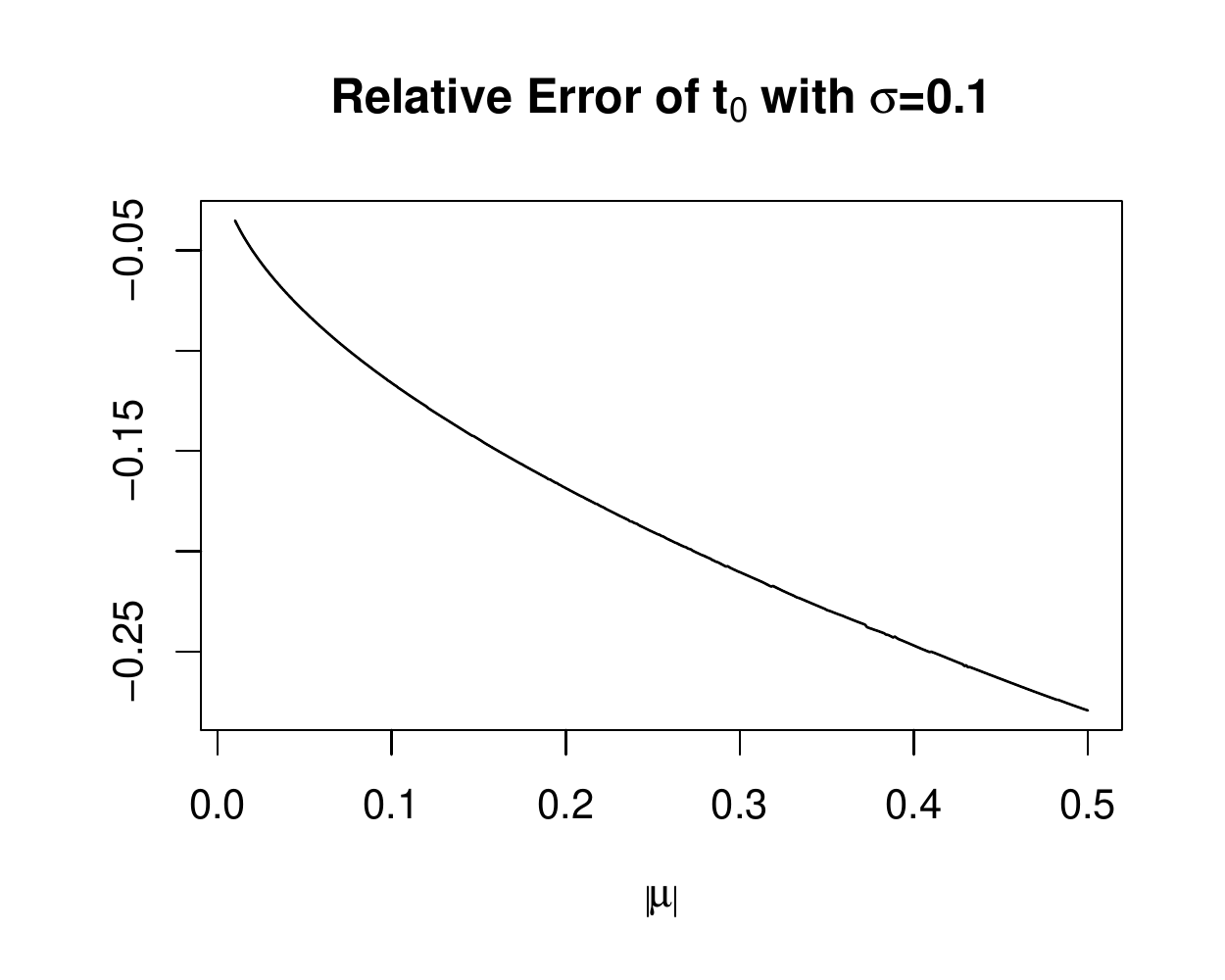}\includegraphics[width=0.5\textwidth,height=8 cm]{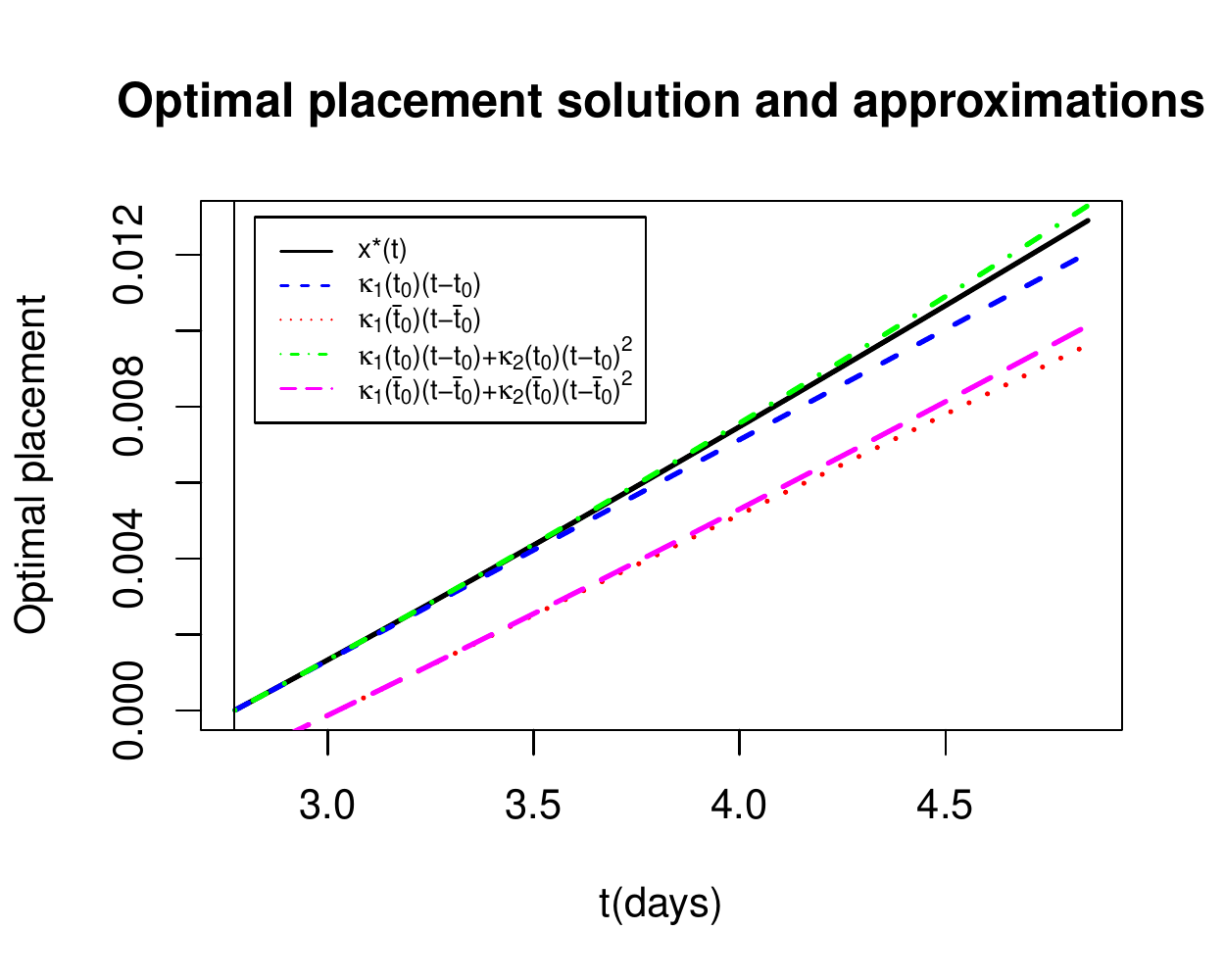}
\caption{Left Panel: Relative error, $(t_0-\bar{t}_0)/t_{0}$ for different values of $|\mu|$ when $(r+f)\rho(0^{+})=0.006$, $\sigma=0.1$, and $\bar{t}_0=\rho(0^{+})(r+f)/2/|\mu|$. Right Panel: Performance of the first- and second-order approximation derived from (\ref{FirstSecondApprx}) when $(r+f)\rho(0^{+})=0.006,\sigma=0.2,\mu=-0.25$: $x^*(t)$(black), $\kappa_1(t_0)(t-t_0)$(blue), $\kappa_1(\bar{t_0})(t-\bar{t}_0)$(red), $\kappa_1(t_0)(t-t_0)+\kappa_2(t-t_0)^2$(green), and $\kappa_1(\bar{t}_0)(t-\bar{t}_0)+\kappa_2(\bar{t}_0)(t-\bar{t}_0)^2$(magenta) against $t$(days).}\label{Graph:RelError_t0}
\end{figure}

 We now analyze the behavior of optimal {placement, where} the investor should put a limit order to minimize the expected cost.
\begin{theorem}
\label{BM:xt:ttot0}
	 {Suppose that the conditions of Theorem~\ref{thm2b} are satisfied and also that $\partial_{x}^{2}\rho(0^{+},t)<0$ and $\partial_{t}\partial_{x}\rho(0^{+},t)\geq{}0$, for all $t>0$. Also, let $t_0>0$ and $x^*(t)$, for $t>t_{0}$,  be as described in Theorems \ref{mainresult} and \ref{thm2b}. Then, as $t \searrow t_0 $, 
	 	\begin{equation}\label{FirstSecondApprx}
	  x^*(t)=\kappa_{1}(t-t_0)+\kappa_2(t-t_0)^{2}+o((t-t_0)^{2}),
	\end{equation}
	where} 
 \[
\kappa_1 :=-\frac{\frac{\partial^2 C}{\partial t \partial x} \left(0, t_{0}\right)}{\frac{\partial^2 C}{\partial x^2}\left({0, t_0}\right)}>0,\quad \kappa_2 := -\frac{\frac{1}{2}\frac{\partial^3 C}{\partial x^3}(0,t_0)\kappa_1^2 + \frac{\partial^3 C}{\partial t \partial x^2}(0,t_0)\kappa_1 + \frac{1}{2}\frac{\partial^3 C}{\partial x \partial t^2}(0,t_0) }{\frac{\partial^2 C}{\partial x^2}(0,t_0)} .
\]
\end{theorem}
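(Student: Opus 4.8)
The plan is to characterize $x^{*}(t)$ by its first-order (stationarity) condition and to analyze the resulting zero-level set of $\partial_{x}C$ near the degenerate point $(x,t)=(0,t_{0})$ by means of the Implicit Function Theorem and a second-order Taylor expansion. By Theorems~\ref{mainresult} and~\ref{thm2b}, for every $t>t_{0}$ there is a finite interior minimizer $x^{*}(t)\in(0,\infty)$, which must satisfy $\partial_{x}C(x^{*}(t),t)=0$; moreover $\partial_{x}C(0^{+},t_{0})=0$ while $\partial_{x}C(0^{+},t)<0$ for $t>t_{0}$. Setting $F(x,t):=\partial_{x}C(x,t)$, the pair $(0,t_{0})$ is therefore a zero of $F$, and the goal is to extract from it a twice-differentiable branch $t\mapsto g(t)$ with $g(t_{0})=0$ along which $F(g(t),t)\equiv 0$, and then to identify $g(t)=x^{*}(t)$ for $t$ slightly above $t_{0}$.

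The central step, and the main obstacle, is the nondegeneracy $\partial_{x}^{2}C(0^{+},t_{0})>0$; this is what both licenses the Implicit Function Theorem at $(0,t_{0})$ and forces the linear leading order in~(\ref{FirstSecondApprx}) (a vanishing second derivative would instead produce square-root-type behavior $x^{*}(t)\sim\sqrt{t-t_{0}}$). To verify it I would use the closed form of Lemma~\ref{BMCostFunc}, written as $C(x,t)=P(x,t)\bigl(-x-\rho(x,t)(r+f)\bigr)+Q(x,t)$, where $P(x,t)=P(Y_{t}\le S_{0}-x)$ satisfies $P(0^{+},t)=1$ and $Q$ gathers the remaining Gaussian terms. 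Differentiating twice in $x$ and sending $x\to 0^{+}$ expresses $\partial_{x}^{2}C(0^{+},t)$ as an explicit combination of $\phi$ and $N$ evaluated at $u:=\mu\sqrt{t}/\sigma$ together with $\rho(0^{+},t)$, $\partial_{x}\rho(0^{+},t)$, and $\partial_{x}^{2}\rho(0^{+},t)$. Since $P(0^{+},t)=1$, the contribution of the second $\rho$-derivative is $-\partial_{x}^{2}\rho(0^{+},t)(r+f)$, which is strictly positive exactly because of the hypothesis $\partial_{x}^{2}\rho(0^{+},t)<0$; the delicate part is to check that, at $t=t_{0}$, this term combines favorably with the Gaussian contributions so that the sum is strictly positive, using $\mu<0$ and $\partial_{x}\rho(0^{+},t_{0})\ge 0$. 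In parallel, the strict sign change of $F(0^{+},\cdot)$ at $t_{0}$ (or the same direct computation) gives $\partial_{t}\partial_{x}C(0^{+},t_{0})<0$, so that $\kappa_{1}=-\partial_{t}\partial_{x}C(0,t_{0})/\partial_{x}^{2}C(0,t_{0})>0$, as claimed.

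Granting $\partial_{x}^{2}C(0^{+},t_{0})>0$, and assuming $\rho\in C^{3}$ near $(0^{+},t_{0})$ (implicit in the well-posedness of the coefficient $\kappa_{2}$, which involves $\partial_{x}^{3}C$, $\partial_{t}\partial_{x}^{2}C$, and $\partial_{x}\partial_{t}^{2}C$), the map $F$ is $C^{2}$ in a neighborhood of $(0,t_{0})$, and the Implicit Function Theorem yields a unique $C^{2}$ branch $g$ with $g(t_{0})=0$ and $F(g(t),t)\equiv 0$. Because $g'(t_{0})=\kappa_{1}>0$ we have $g(t)>0$ for $t>t_{0}$, and since $\partial_{x}^{2}C>0$ along the branch it is a strict local minimizer. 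The remaining point is to identify this branch with the global minimizer $x^{*}(t)$ and to conclude $x^{*}(t)\to 0$ as $t\searrow t_{0}$; this follows from the uniqueness of the stationary point established in Theorem~\ref{thm2b} (under $\partial_{t}\partial_{x}\rho(0,t)\ge 0$) together with $\partial_{x}C(0^{+},t)<0$ for $t>t_{0}$, by a routine continuity argument.

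It then remains to compute the coefficients by implicit differentiation of the identity $\partial_{x}C(g(t),t)\equiv 0$. A first differentiation gives $\partial_{x}^{2}C\,g'+\partial_{t}\partial_{x}C=0$, whence at $t_{0}$ (where $g(t_{0})=0$) one obtains $\kappa_{1}=g'(t_{0})=-\partial_{t}\partial_{x}C(0,t_{0})/\partial_{x}^{2}C(0,t_{0})$. A second differentiation gives $\partial_{x}^{3}C\,(g')^{2}+2\,\partial_{t}\partial_{x}^{2}C\,g'+\partial_{x}^{2}C\,g''+\partial_{x}\partial_{t}^{2}C=0$, and solving for $g''(t_{0})$ and setting $\kappa_{2}=\frac{1}{2}g''(t_{0})$ reproduces precisely the stated formula. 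The expansion~(\ref{FirstSecondApprx}) is then just Taylor's theorem for the $C^{2}$ function $g$ at $t_{0}$, using $g(t_{0})=0$. I expect the sign verification $\partial_{x}^{2}C(0^{+},t_{0})>0$ to be the only genuinely nontrivial ingredient, with the rest amounting to the Implicit Function Theorem and bookkeeping.
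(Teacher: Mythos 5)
Your overall architecture --- characterize $x^*(t)$ by stationarity, establish the nondegeneracy $\partial_x^2C(0^+,t_0)>0$, apply the Implicit Function Theorem at $(0,t_0)$, and extract $\kappa_1,\kappa_2$ by implicit differentiation (equivalent to the paper's mean-value/Taylor expansion of $\partial_xC$) --- is the same as the paper's, and your coefficient bookkeeping is correct. The genuine gap sits exactly at the step you yourself call ``the only genuinely nontrivial ingredient'': you never prove $\partial_x^2C(0^+,t_0)>0$, and the route you sketch for it does not go through as stated. Writing $c:=r+f$ and $a_t:=\mu\sqrt{t}/\sigma<0$, the direct computation you propose gives
\begin{align*}
\frac{\partial^2 C}{\partial x^2}(0,t)=-\left( \phi \left(a_t \right)+ a_tN\left(a_t\right) \right)\left(\frac{4\mu\left(c\rho(0,t)+\mu t\right)}{\sigma^3\sqrt{t}}-\frac{4c\,\partial_{x}\rho(0,t)}{\sigma\sqrt{t}}\right)-\frac{4\mu}{\sigma^2}N\left(a_t\right)-c\,\partial_{x}^{2}\rho(0,t).
\end{align*}
The hypotheses $\partial_x\rho\ge0$, $\partial_x^2\rho<0$, $\mu<0$ make three of the contributions nonnegative, but the sign of the remaining one hinges on the sign of $c\rho(0,t)+\mu t$, which your listed ingredients do not control; and when $c\rho(0,t)+\mu t<0$ that contribution is negative and cancels the positive term $-4\mu N(a_t)/\sigma^2$ at leading order, so the crude bounds $\phi(z)(1/z-1/z^3)\le N(-z)\le\phi(z)/z$ are inconclusive. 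The missing idea, which is how the paper closes this step, is to use the \emph{defining property of} $t_0$ rather than pointwise sign hypotheses: substituting the stationarity identity $\partial_xC(0,t_0)=0$ (Eq.~(\ref{eq:bm:1st:x0})) into the display yields the simplified expression (\ref{bm:2nd:x0}), and Corollary~\ref{Crl1} gives $t_0\le c\rho(0^+)/(2|\mu|)$, hence $c\rho(0,t_0)+\mu t_0\ge c\rho(0^+)/2>0$, after which every term of (\ref{bm:2nd:x0}) is manifestly positive (note $1-2N(a_{t_0})>0$ since $a_{t_0}<0$). A purely direct verification is in fact possible, but only with the sharper middle bound $N(-z)\ge z\phi(z)/(1+z^2)$ from (\ref{imp.inequ}) after bounding $(|\mu|t-c\rho)/(\sigma\sqrt{t})$ by $|a_t|$; nothing in your proposal points at either device, so as written the crux remains an unproven assertion.

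Two smaller points. First, your identification of the IFT branch with the global minimizer appeals to ``uniqueness of the stationary point established in Theorem~\ref{thm2b}'', but that theorem asserts uniqueness of the root $t_0$ of $t\mapsto\partial_xC(0^+,t)$, not uniqueness of critical points of $x\mapsto C(x,t)$; what is actually needed (and what the paper itself treats only briefly) is that the global minimizer $x^*(t)$ tends to $0$ as $t\searrow t_0$, so that it eventually lies inside the IFT neighborhood. Second, a sign-change argument for $\partial_xC(0^+,\cdot)$ at $t_0$ only yields $\partial_t\partial_xC(0^+,t_0)\le0$; the strict inequality required for $\kappa_1>0$ comes from the explicit formula (\ref{bm:2ndMixedDer:x0}) combined with $\partial_t\partial_x\rho(0,t)\ge0$ and (\ref{imp.inequ}), which is the ``direct computation'' you should actually invoke.
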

{Let us remark that all the partial derivatives involved in $\kappa_{1}$ and $\kappa_{2}$ can be evaluated in closed form in terms of $N(\mu\sqrt{t_{0}}/\sigma)$ {(e.g., see (\ref{bm:2ndMixedDer:x0}) and (\ref{bm:2nd:x0}) below for the derivatives involved in $\kappa_{1}$)}.} 
 The above result gives us the first and second order approximation of the optimal placement solution when {$t \searrow t_0$}.
These approximations require the value of $t_0$, which, in light of Corollary ~\ref{thm2b}, can be approximated well when {$r+f \to 0$. Therefore,} the combination of Theorems~\ref{thm2b} and \ref{BM:xt:ttot0} gives us a simple approximation of the optimal placement solution $x^*(t)$, when $t \to t_0$ and {$(r+f)$ is} small, which is a reasonable assumption in most markets.
In the right panel of Figure~\ref{Graph:RelError_t0}, we graph the explicit optimal place solution, the first- and  second- order approximations using $t_{0}$ and also replacing $t_0$ with its approximation $\bar{t}_0=\rho(0^{+})(r+f)/2|\mu|$. As displayed in the figure, the performance of approximations using $\bar{t}_{0}$ is slightly less than the approximations using $t_0$.

 In the rest of this section, we analyze the behavior of the optimal placement solution {$x^*(t)$} for large time horizons $t$. For simplicity, we assume that $\rho=\rho(x,t)$ is constant in $x$ and $t$. Our first result in this direction gives upper and lower {estimates.}
\begin{restatable}{theorem}{thirdthm}
\label{xtupbd}
 {Let $\mu<0$,  $\theta_0:= \sqrt{1- 2\sigma^2/(\mu \rho (r+f))}$, and {let $x^*(t)$, for $t>{}t_{0}$,} be as in Theorem ~\ref{mainresult}. Then, 
 \[
 	-\sigma \sqrt{t} - \mu t \theta_0\leq{} x^*(t)\leq{}-\mu\theta_0 t,
\]
where the first and second inequalities above hold for any $t> \max\Big(\frac{\rho(r+f)}{-\mu}, \frac{\sigma^2}{\mu^2 (\theta_0-1)^2}  \Big)$ and $t>t_{0}$, respectively.} 
\end{restatable}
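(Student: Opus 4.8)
The plan is to reduce both inequalities to sign statements about $\partial_x C(x,t)$ and to analyze that sign through Gaussian tail (Mills ratio) estimates. Write $a:=\rho(r+f)$, now a constant. Differentiating the closed form of Lemma~\ref{BMCostFunc} and using the elementary identity $e^{-2\mu x/\sigma^2}\phi(v)=\phi(u)$, with $u:=-(x+\mu t)/(\sigma\sqrt t)$ and $v:=(\mu t-x)/(\sigma\sqrt t)$, massive cancellation occurs and one is left with
\[
\partial_x C(x,t)=\frac{2(\mu t+a)}{\sigma\sqrt t}\,\phi(u)-N(u)+\Bigl(1+\tfrac{2\mu(\mu t+a-x)}{\sigma^2}\Bigr)e^{-2\mu x/\sigma^2}N(v).
\]
At every point we care about, $x>-\mu t$, so $u<0$ and $v<0$; writing $N(u)=\phi(u)R(|u|)$ and $e^{-2\mu x/\sigma^2}N(v)=\phi(u)R(|v|)$ with $R(z):=N^{c}(z)/\phi(z)$ the Mills ratio, the sign of $\partial_x C$ coincides with the sign of
\[
B(x,t):=\frac{2(\mu t+a)}{\sigma\sqrt t}-R(|u|)+\Bigl(1+\tfrac{2\mu(\mu t+a-x)}{\sigma^2}\Bigr)R(|v|).
\]

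Since Theorem~\ref{mainresult} guarantees a finite minimizer characterized by $\partial_x C=0$, it suffices to control $B$ at the two candidate endpoints together with a single-crossing (unimodality) property of $C(\cdot,t)$, namely that $\partial_x C$ changes sign at most once. Granting this, $\partial_x C(-\mu\theta_0 t,t)\ge 0$ forces $x^*(t)\le-\mu\theta_0 t$, and $\partial_x C(-\mu\theta_0 t-\sigma\sqrt t,t)\le 0$ forces $x^*(t)\ge-\mu\theta_0 t-\sigma\sqrt t$. Should unimodality prove inconvenient, I would instead show $B\ge 0$ on all of $[-\mu\theta_0 t,\infty)$ and $\partial_x C\le 0$ on all of $(0,-\mu\theta_0 t-\sigma\sqrt t]$; the latter splits at $x=-\mu t$ into the range $u\ge 0$, where $N(u)\ge 1/2$ manifestly dominates the remaining (exponentially small) terms, and the range $u<0$, which is handled through $B$ exactly as at the endpoint.

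The endpoint signs are extracted from the two-sided bounds $z/(z^2+1)<R(z)<1/z$ ($z>0$). For the upper bound I would lower-bound $B$ by replacing $R(|u|)$ with $1/|u|$ and $R(|v|)$ with $|v|/(|v|^2+1)$, which is legitimate once the coefficient $1+2\mu(\mu t+a-x)/\sigma^2$ is checked to be positive for $x\ge-\mu\theta_0 t$ (there $\mu t+a-x<0$ and $\mu<0$), reducing ``$B\ge 0$'' to an explicit rational inequality in $x$ and $t$; for the lower bound I would reverse the two estimates, invoking the threshold $t>\rho(r+f)/|\mu|$ to force $\mu t+a<0$ and $t>\sigma^2/(\mu^2(\theta_0-1)^2)$ to keep the endpoint strictly above $-\mu t$ (equivalently $|u|>1$). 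The precise role of $\theta_0$ is the heart of the matter: substituting $x=-\mu\theta t$ one finds that the two $O(\sqrt t)$ contributions to $B$ cancel identically for \emph{every} $\theta$, and that the surviving $O(t^{-1/2})$ coefficient is proportional to $a/\sigma-2\sigma/\bigl(|\mu|(\theta^2-1)\bigr)$, which vanishes exactly at $\theta=\theta_0$ because $\theta_0^2-1=2\sigma^2/(|\mu|a)$, being positive for $\theta>\theta_0$ and negative for $\theta<\theta_0$.

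The main obstacle is precisely this double cancellation: because $B$ vanishes through its first two orders at $x=-\mu\theta_0 t$, no crude asymptotic can certify a sign there, and the required inequalities must hold non-asymptotically for every $t$ above the stated thresholds rather than merely for large $t$. This is exactly why the statement is asymmetric. The upper bound is stated at the sharp level $-\mu\theta_0 t$, where $B\ge 0$ must be squeezed directly out of the Mills inequalities; the lower bound instead carries the slack $-\sigma\sqrt t$, and this slack shifts the standardized variable by exactly one unit (decreasing $x$ by $\sigma\sqrt t$ increases $u$ by $1$), while the threshold $t>\sigma^2/(\mu^2(\theta_0-1)^2)$ guarantees $|u|>1$ at that endpoint, an $O(1)$ displacement that is just large enough to tip the delicate subleading balance to a robustly negative value. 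Verifying the two resulting rational inequalities, and establishing the single-crossing property that collapses everything to endpoint evaluations, is where the remaining labor lies.
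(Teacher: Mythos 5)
The skeleton you propose is the paper's own: the same derivative formula, the same normalization $B=\partial_x C/\phi(u)$, the same two-sided Mills-ratio estimates, and (in your fallback) the same logic of proving sign-definiteness of $\partial_x C$ on the whole intervals $[-\mu\theta_0 t,\infty)$ and below the lower endpoint, which is exactly how the paper pins the minimizer --- so your unproven ``single-crossing'' property is indeed dispensable. The genuine gap is in the step you defer, and it is not mere labor: \emph{your specific assignment of Mills bounds provably cannot certify the upper endpoint in the typical regime.} Write $P=2|\mu|(x-\mu t)/\sigma^2>0$, $Q=2|\mu|\rho(r+f)/\sigma^2>0$, so your coefficient is $K=1+P-Q$. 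Bounding $KR(|v|)\ge K|v|/(v^2+1)$ and $-R(|u|)\ge -1/|u|$ yields an envelope that differs from the paper's exact one (which simplifies to $f(x,t)\propto x^{2}-\mu^{2}t^{2}\theta_0^{2}$, vanishing at $-\mu\theta_0 t$) by precisely $\frac{1}{|v|(v^{2}+1)}\bigl[\,Q-1+\tfrac{P}{v^{2}}\bigr]$, which at $x=-\mu\theta_0 t$ equals $\frac{3-\theta_0}{(\theta_0-1)\,|v|(v^{2}+1)}$. This is strictly negative whenever $\theta_0>3$, i.e.\ whenever $\sigma^{2}>4|\mu|\rho(r+f)$ --- the standard situation, since $r+f$ is tiny (the paper's own example $\rho(r+f)=0.006$, $\sigma=0.2$, $\mu=-0.25$ gives $\theta_0\approx 7.4$). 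So your lower envelope for $B$ is negative at the very point where you must certify $B\ge 0$, and pushing your estimates through would only prove $x^*(t)\le -\mu\theta' t$ for some $\theta'>\theta_0$. The paper avoids this by never applying a one-sided bound to the lumped coefficient $K$: it splits it into the always-positive part $P$ (lower Mills bound $1/|v|-1/|v|^{3}$), the always-negative part $-Q$ (upper bound $1/|v|$), and pairs the leftover $+1$ with $-R(|u|)$ through the auxiliary inequality (\ref{imp.ineq}), equivalently $R(|v|)-R(|u|)>1/|v|-1/|u|$ (monotonicity of the Mills defect $z\mapsto R(z)-1/z$, applied with $|v|>|u|$). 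That paired inequality --- not derivable from the termwise two-sided bounds you quote, since $R(|v|)\ge 1/|v|$ is false --- is exactly what makes the envelope factor with root $-\mu\theta_0 t$; incidentally, this also removes your need to check the sign of $K$, a check which anyway fails for $t$ near $t_0$ (it needs roughly $t\ge\rho(r+f)/(|\mu|(1+\theta_0))$, which $t>t_0$ does not give).

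Two further claims would not survive scrutiny. First, on $(0,-\mu t]$ you assert that $N(u)\ge 1/2$ ``manifestly dominates the remaining (exponentially small) terms.'' This is false near $x=-\mu t$: there $\phi(u)\approx\phi(0)$, and the first and third terms of $\partial_x C$ are each of order $|\mu|\sqrt{t}\,\phi(0)/\sigma$ --- large, not small --- with opposite signs; only their \emph{sum} is small, so this region requires the same cancellation bookkeeping as the rest, not a domination argument. (The paper does not attempt this region at all: its lower-bound argument lives on $x>-\mu t$, the threshold $t>\sigma^{2}/(\mu^{2}(\theta_0-1)^{2})$ serving only to place $-\sigma\sqrt{t}-\mu t\theta_0$ strictly above $-\mu t$, and the conclusion carries the proviso $x^*(t)>-\mu t$.) Second, that same threshold gives $|u|>0$ at the lower endpoint, not $|u|>1$ as you claim; $|u|>1$ there would require $t>4\sigma^{2}/(\mu^{2}(\theta_0-1)^{2})$. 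In short: your reduction and the identification of the cancellation structure are correct and match the paper, but closing the argument needs the paper's finer three-way splitting plus the Mills-defect monotonicity lemma, and a genuine treatment (or avoidance) of the region $x\le-\mu t$; as written, the proposal's two concrete sign claims fail.
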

{As a} corollary of Theorem \ref{xtupbd}, we can deduce the first order approximation of $x^{*}(t)$ {as $|\mu|\theta_{0}t$, when the investor's time horizon $t$ is large}. The following theorem provides the second order approximation.
\begin{restatable}{theorem}{asymptotics}
\label{asymptotics}
Let $\mu<0$ and {$x^{*}(t)$} be the optimal position as defined in Theorem \ref{mainresult}  and $\theta_{0}$ be as in Theorem \ref{xtupbd}. Then,
\begin{equation}\label{MAH}
	\lim_{t\to\infty}t\left(\frac{{{x^{*}(t)}^{2}}}{t^{2}}-\mu^{2}\theta_{0}^{2}\right)=\theta_{1},
\end{equation}
where
	$\theta_{1}:={\frac{\sigma^{4}}{{2\rho}(r+f)|\mu|\theta_{0}}\left[-6\frac{(\theta_{0}-1)}{(\theta_{0}+1)^{2}}+\left(1+\frac{2 \mu {\rho}(r+f)}{\sigma^{2}}\right)\frac{(\theta_{0}-1)}{\theta_{0}+1} -\frac{(\theta_{0}+1)^{2}}{(\theta_{0}-1)^{2}}\right]}$.
\end{restatable}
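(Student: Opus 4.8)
The plan is to analyze the first-order condition $\partial_{x}C(x^{*}(t),t)=0$ asymptotically as $t\to\infty$, after rescaling $x=|\mu|\theta t$ and isolating the $1/t$-correction to the leading slope $\theta_{0}$. First I would compute $\partial_{x}C$ in closed form from Lemma~\ref{BMCostFunc} (here $\rho$ is constant, so $\partial_{x}\rho\equiv0$). Writing $g:=r+f$, $a_{\pm}:=(-x\pm\mu t)/(\sigma\sqrt{t})$ and $z_{\pm}:=-a_{\pm}$, the crucial observation is the reflection identity $e^{-2\mu x/\sigma^{2}}\phi(a_{+})=\phi(a_{-})$, which exactly cancels the exponential blow-up of $e^{-2\mu x/\sigma^{2}}$ against the Gaussian tails. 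Using it together with the Mills ratio $M(z):={N}^{c}(z)/\phi(z)$, one gets $e^{-2\mu x/\sigma^{2}}N(a_{+})=\phi(z_{-})M(z_{+})$ and $N(a_{-})=\phi(z_{-})M(z_{-})$, so that dividing $\partial_{x}C=0$ by $\phi(z_{-})>0$ yields the equivalent condition
\begin{equation}\label{foc}
\frac{2\mu\sqrt{t}}{\sigma}+\frac{2\rho g}{\sigma\sqrt{t}}+M(z_{+})\Big[\tfrac{2|\mu|}{\sigma^{2}}(x-\rho g-\mu t)+1\Big]-M(z_{-})=0 .
\end{equation}

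Next I would substitute $x=|\mu|\theta t$ (so $z_{\pm}=|\mu|(\theta\pm1)\sqrt{t}/\sigma$) and expand \eqref{foc} in powers of $t^{-1/2}$ via $M(z)\sim z^{-1}-z^{-3}+3z^{-5}-\cdots$. Since the bracket in \eqref{foc} is exactly linear in $t$, the left side becomes an asymptotic series $F_{1/2}(\theta)\sqrt{t}+F_{-1/2}(\theta)t^{-1/2}+F_{-3/2}(\theta)t^{-3/2}+\cdots$. A short computation shows that $F_{1/2}(\theta)\equiv0$ for every $\theta$: the leading $O(\sqrt t)$ parts of $2\mu\sqrt t/\sigma$ and of $M(z_{+})\cdot\frac{2\mu^{2}(\theta+1)}{\sigma^{2}}t$ cancel identically. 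Thus $\theta_{0}$ is \emph{not} pinned down at leading order but at the next one: one verifies $F_{-1/2}(\theta_{0})=0$ using $\theta_{0}^{2}-1=-2\sigma^{2}/(\mu\rho g)$, which re-derives the first-order law $x^{*}(t)/t\to|\mu|\theta_{0}$ of Theorem~\ref{xtupbd}. Carrying the Mills expansion one further term produces the explicit coefficients $F_{-3/2}(\theta_{0})$ and $F_{-1/2}'(\theta_{0})$; after substituting $k_{0}:=1+2\mu\rho g/\sigma^{2}=1-4/(\theta_{0}^{2}-1)$ and simplifying, $F_{-1/2}'(\theta_{0})$ collapses to $\tfrac{8\sigma\theta_{0}^{2}}{|\mu|(\theta_{0}-1)^{2}(\theta_{0}+1)^{3}}$.

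Finally I would upgrade the a~priori bound $\theta(t):=x^{*}(t)/(|\mu|t)=\theta_{0}+O(t^{-1/2})$ (which follows from the two-sided estimate in Theorem~\ref{xtupbd}) to the sharp expansion $\theta(t)=\theta_{0}+c/t+o(1/t)$. Setting $w:=1/t$, \eqref{foc} multiplied by $\sqrt{t}$ reads $G(\theta,w)=F_{-1/2}(\theta)+F_{-3/2}(\theta)w+O(w^{2})=0$ with $G$ smooth near $(\theta_0,0)$, $G(\theta_{0},0)=0$ and $\partial_{\theta}G(\theta_{0},0)=F_{-1/2}'(\theta_{0})\neq0$. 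Because $\theta(t)\to\theta_{0}$, a Taylor expansion of $G$ about $(\theta_{0},0)$ with the a~priori control $\theta-\theta_{0}=O(w^{1/2})$ first forces $\theta-\theta_{0}=O(w)$ (the quadratic remainder $O((\theta-\theta_{0})^{2})=O(w)$ is thereby absorbed), and a second pass yields $\theta-\theta_{0}=-\,\frac{F_{-3/2}(\theta_{0})}{F_{-1/2}'(\theta_{0})}\,w+o(w)$; equivalently this is the branch furnished by the implicit function theorem. Writing $c:=-F_{-3/2}(\theta_{0})/F_{-1/2}'(\theta_{0})$ and using $x^{*}(t)^{2}/t^{2}=\mu^{2}\theta(t)^{2}$ gives
\begin{equation*}
t\Big(\frac{x^{*}(t)^{2}}{t^{2}}-\mu^{2}\theta_{0}^{2}\Big)=\mu^{2}\big(\theta(t)-\theta_{0}\big)\big(\theta(t)+\theta_{0}\big)\,t\longrightarrow 2\mu^{2}\theta_{0}\,c ,
\end{equation*}
and a direct algebraic simplification (plugging in the closed forms of $F_{-3/2}(\theta_{0})$ and $F_{-1/2}'(\theta_{0})$ together with the prefactor identity $\frac{\sigma^{4}}{2\rho g|\mu|\theta_{0}}=\frac{\sigma^{2}(\theta_{0}^{2}-1)}{4\theta_{0}}$) identifies $2\mu^{2}\theta_{0}c$ with the stated $\theta_{1}$.

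The main obstacle is conceptual rather than merely computational: because the $O(\sqrt t)$ term in \eqref{foc} vanishes \emph{identically} in $\theta$, the target quantity lives two orders below the scale on which $\theta_{0}$ itself is determined, which forces a three-term Mills-ratio expansion and meticulous bookkeeping of the $O(t^{-3/2})$ coefficient. The second delicate point is that Theorem~\ref{xtupbd} only supplies $O(t^{-1/2})$ a~priori control on $\theta-\theta_{0}$, too weak to read off the $1/t$ coefficient directly; the bootstrap above (or, equivalently, checking $F_{-1/2}'(\theta_{0})\neq0$ and invoking the implicit function theorem) is exactly what promotes that control to the required $O(1/t)$ accuracy.
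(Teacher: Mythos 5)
Your proposal is correct and follows essentially the same route as the paper's proof: both reduce the first-order condition $\partial_{x}C(x^{*}(t),t)=0$ via the reflection identity $e^{-2x\mu/\sigma^{2}}\phi(\beta_{t})=\phi(\alpha_{t})$ to an equation in Mills ratios, expand those ratios to three terms ($1/z-1/z^{3}+3/z^{5}$, i.e., the paper's $\mathcal{E}_{1},\mathcal{E}_{2}$), and use the a priori first-order control $x^{*}(t)/t\to|\mu|\theta_{0}$ to extract the $1/t$ coefficient, arriving at the same value $\theta_{1}$. The only difference is organizational: the paper solves the non-remainder part of the equation exactly for $x^{2}/t^{2}-\mu^{2}\theta_{0}^{2}$ (the factorization already exhibited in the proof of Theorem \ref{xtupbd}) and then passes to the limit in the remainder terms, which makes your bootstrap/implicit-function step and the nondegeneracy computation $F_{-1/2}'(\theta_{0})\neq 0$ unnecessary.
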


\section{{Optimal Order Placement Under The Black-Scholes Model}}

While {a} Brownian motion {with drift is able to broadly} capture the price movement in short time (say, just a few minutes), geometric Brownian motion (GBM), also known as the Black-Scholes model, is generally believed to provide a better fit for longer time periods. Therefore, it is both natural and important to study the behavior of the optimal placement problem under this paradigm. All the proofs in this section are deferred to Appendix B.

The following lemma provides a closed form {representation} of the expected cost function {introduced in Eq.~(\ref{eq:CostContinuous}),} when the price process $S$ follows {a} geometric Brownian motion {with volatility $\sigma$ and drift $\mu$}.
\begin{lemma}
\label{GBMCostFunc}
	{Let $S_{t}:=S_{0}\exp\Big((\mu-\sigma^{2}/2)t+\sigma W_{t}\Big)$, $t\geq{}0$, %
where $\{W_t\}_{t\geq{}0}$ is the standard Brownian motion, and let $\widetilde{C}(y,t)$ be the expected cost if one puts a limit order at the price level $S_0 e^{-y}$ $(y>0)$; i.e., $\widetilde{C}(y,t):=C(S_{0}-S_0 e^{-y},t)$, with $C(x,y)$ defined as in Eq.~(\ref{eq:CostContinuous}). Similarly, fix $\tilde{\rho}(y,t):=\rho(S_{0}-S_{0}e^{-y},t)$}. Then, $\widetilde{C}(y,t)$ can be written as 
\begin{equation*}
\begin{aligned}
{\widetilde{C}(y,t)} &=  (S_0 e^{-y}-{(r+f)\tilde{\rho}(y,t)}) \bigg[ N \bigg(\frac{-y-\mu t+ \frac{\sigma^2 t}{2} }{\sigma\sqrt{t}} \bigg) + e^{-\frac{2y\mu}{\sigma^2}+y} N \bigg(\frac{-y+\mu t- \frac{\sigma^2 t}{2} }{\sigma\sqrt{t}} \bigg) \bigg]\\
	&+ S_0 \bigg[ e^{\mu t} N \bigg(\frac{y+\mu t+ \frac{\sigma^2 t}{2} }{\sigma\sqrt{t}} \bigg) 
	-e^{ -\frac{2y\mu}{\sigma^2}+\mu t -y} N \bigg(\frac{-y+\mu t+ \frac{\sigma^2 t}{2} }{\sigma\sqrt{t}} \bigg) \bigg] +f-S_0.
\end{aligned}
\end{equation*}
\end{lemma}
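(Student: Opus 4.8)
The plan is to reduce everything to two computations about the running minimum of a drifted Brownian motion. Write $X_u := \log(S_u/S_0) = (\mu-\sigma^2/2)u + \sigma W_u$, so that $X$ is a Brownian motion with drift $\tilde\mu := \mu-\sigma^2/2$ and volatility $\sigma$, and let $m_t := \inf_{0\le u\le t} X_u$. Since $x\mapsto S_0 e^{x}$ is increasing, the event $\{Y_t \le S_0 e^{-y}\}$ coincides with $\{m_t \le -y\}$. Substituting $x = S_0(1-e^{-y})$ (so that $S_0-x = S_0 e^{-y}$ and $-x = S_0 e^{-y}-S_0$) into (\ref{eq:CostContinuous}) and writing the conditional expectation as an unconditional one, I would first record
\begin{equation*}
\widetilde{C}(y,t) = E\big[(S_t-S_0)\mathbf{1}_{\{m_t>-y\}}\big] + P(m_t\le -y)\big(S_0 e^{-y}-S_0-(r+f)\tilde\rho(y,t)\big) + f.
\end{equation*}
Then I would split $E[(S_t-S_0)\mathbf{1}_{\{m_t>-y\}}] = S_0\, E[e^{X_t}\mathbf{1}_{\{m_t>-y\}}] - S_0\, P(m_t>-y)$ and absorb $-S_0 P(m_t>-y) - S_0 P(m_t\le -y) = -S_0$, which isolates the constant $-S_0$ appearing at the end of the claimed formula.

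Two ingredients then remain. First, the classical distribution of the running minimum of a drifted Brownian motion gives, for $a\le 0$,
\begin{equation*}
P(m_t\le a) = N\!\left(\frac{a-\tilde\mu t}{\sigma\sqrt{t}}\right) + e^{2\tilde\mu a/\sigma^2}\,N\!\left(\frac{a+\tilde\mu t}{\sigma\sqrt{t}}\right).
\end{equation*}
Taking $a=-y$ and inserting $\tilde\mu = \mu-\sigma^2/2$ reproduces, after simplifying the exponent $2\tilde\mu(-y)/\sigma^2 = -2\mu y/\sigma^2 + y$, exactly the bracket $[\,N(\cdots)+e^{-2y\mu/\sigma^2+y}N(\cdots)\,]$ multiplying $(S_0 e^{-y}-(r+f)\tilde\rho(y,t))$ in the statement. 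Second, I would compute the exponentially weighted quantity $E[e^{X_t}\mathbf{1}_{\{m_t>-y\}}]$ by a change of measure: since $e^{X_t-\mu t} = e^{\sigma W_t-\sigma^2 t/2}$ is a mean-one exponential martingale, define $\mathbb{Q}$ by $d\mathbb{Q}/d\mathbb{P}|_{\mathcal{F}_t} = e^{X_t-\mu t}$; by Girsanov, under $\mathbb{Q}$ the process $X$ is again a Brownian motion with volatility $\sigma$ but with the shifted drift $\mu+\sigma^2/2$, while the pathwise event $\{m_t\le -y\}$ is unchanged. Hence $E[e^{X_t}\mathbf{1}_{\{m_t\le -y\}}] = e^{\mu t}\,\mathbb{Q}(m_t\le -y)$, and applying the running-minimum formula above under $\mathbb{Q}$ (drift $\mu+\sigma^2/2$) yields a closed form.

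Finally I would assemble the pieces. Using $E[e^{X_t}]=e^{\mu t}$ and $E[e^{X_t}\mathbf{1}_{\{m_t>-y\}}] = e^{\mu t}-E[e^{X_t}\mathbf{1}_{\{m_t\le -y\}}]$, together with the identity $1-N(-z)=N(z)$ to flip the sign of the first normal argument, gives
\begin{equation*}
S_0\, E[e^{X_t}\mathbf{1}_{\{m_t>-y\}}] = S_0\Big[e^{\mu t}N\!\Big(\tfrac{y+\mu t+\sigma^2 t/2}{\sigma\sqrt{t}}\Big) - e^{-2y\mu/\sigma^2+\mu t-y}N\!\Big(\tfrac{-y+\mu t+\sigma^2 t/2}{\sigma\sqrt{t}}\Big)\Big],
\end{equation*}
which is precisely the $S_0[\cdots]$ bracket in the second line; combined with the first-line term and the constants $f-S_0$, this is the asserted representation. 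I expect the only real obstacle to be bookkeeping: correctly tracking how the measure change sends the drift $\mu-\sigma^2/2$ to $\mu+\sigma^2/2$ and verifying that the several exponential prefactors ($e^{\mu t}$, $e^{-2y\mu/\sigma^2\pm y}$) combine as claimed. The probabilistic content is entirely standard, and one could equally avoid Girsanov by integrating $e^{x}$ directly against the explicit reflection-principle joint density of $(X_t,m_t)$.
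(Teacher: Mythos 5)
Your proposal is correct, and its skeleton coincides exactly with the paper's proof: the paper likewise passes to $X_t=\ln(S_t/S_0)$, a Brownian motion with drift $\mu-\sigma^2/2$, identifies the event $\{Y_t\le S_0e^{-y}\}$ with $\{\tilde Y_t\le -y\}$ for the running minimum $\tilde Y_t$ of $X$, and reduces the cost to precisely your intermediate identity $\widetilde C(y,t)=S_0\,E\big[e^{X_t}\mathbf{1}_{\{\tilde Y_t>-y\}}\big]+\big(S_0e^{-y}-(r+f)\tilde\rho(y,t)\big)P(\tilde Y_t\le -y)+f-S_0$. Where you genuinely diverge is in evaluating the weighted term $E\big[e^{X_t}\mathbf{1}_{\{\tilde Y_t>-y\}}\big]$: the paper computes it by integrating $e^z$ directly against the reflection-principle joint density $P(X_t\in dz,\,\tilde Y_t>-y)$ quoted in (\ref{JntCndSY}), following ``similar steps as those used in Lemma \ref{BMCostFunc}'', whereas you use the exponential martingale $e^{X_t-\mu t}=e^{\sigma W_t-\sigma^2 t/2}$ and Girsanov's theorem to trade the weight $e^{X_t}$ for a drift shift $\mu-\sigma^2/2\mapsto\mu+\sigma^2/2$, reducing the term to $e^{\mu t}$ times an unweighted barrier probability under the new measure. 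Both are valid and your bookkeeping checks out: $e^{2\tilde\mu a/\sigma^2}\big|_{a=-y}=e^{-2\mu y/\sigma^2+y}$ gives the first bracket, and $e^{\mu t}e^{-2(\mu+\sigma^2/2)y/\sigma^2}=e^{-2\mu y/\sigma^2+\mu t-y}$ gives the second. The measure-change route buys conceptual clarity (it explains structurally why the second bracket is a running-minimum probability with drift $\mu+\sigma^2/2$ and prefactor $e^{\mu t}$) and avoids the completing-the-square Gaussian integration hidden in the direct computation; the paper's route is more elementary, needing only the quoted joint law, and treats the Bachelier and Black--Scholes lemmas with one uniform calculation.
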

With certain abuse of notation, in what follows, we simply write $C(y,t)$ and $\rho(y,t)$ for $\widetilde{C}(y,t)$ and $\tilde{\rho}(y,t)$, respectively. As with the {BM} with drift, we have
	\[
	C(0^+,t):=\lim \limits_{y \to 0} C(y,t) = f-(r+f)\rho(0^{+},t)<f,
	\]
and, thus, it is again never optimal to immediately place a market order. Similarly, when $\mu=0$, 
\begin{equation}\label{zeromu}
	{C}(y,t)=-(r+f)\rho(y,t)\left[ N\left(\frac{-y+\frac{\sigma^2}{2}t}{\sigma \sqrt{t}}\right)+e^{y}  N\left(\frac{-y-\frac{\sigma^2}{2}t}{\sigma \sqrt{t}}\right)\right]+f-S_{0},
\end{equation}
which is an increasing function on $y\in (0,\infty)$ when $y\to{}\rho(y,t)$ is nonincreasing\footnote{This is because the function inside the square brackets in (\ref{zeromu}) is the probability that a limit order placed at $S_{0}-S_{0}e^{-y}$ is executed, which decreases as $y$ increases.}. We can again say that $y=0^+$ is the optimal placement solution, which can be interpreted as the strategy where the investor place his limit order at the best or second best bid price. {As before, we call this the ``trivial" limit order  placement.}	
Now, the following theorem shows that there is {a} non-trivial optimal placement solution {when} the drift is negative {and the time horizon is long enough}.
\begin{theorem}
\label{gbmtknotexistence}
	Let $C(y,t):=\tilde{C}(y,t)$ be as in Lemma \ref{GBMCostFunc} and suppose that $\mu<0$ and that the following conditions hold:
\begin{equation}\label{ConditionsGBM}
	\rho(0^{+},t)<\frac{a{}S_{0}|\mu|t}{r+f},\qquad {\frac{\partial\rho(0^{+},t)}{\partial y}\geq{}0},
\end{equation}
where $a=2$ if $\mu>-\sigma^{2}/2$ and $a=1$ if $\mu\leq{}-\sigma^{2}/2$. Then, there exists a $y^*(t) \in (0,\infty]$ such that
	\[
		C(y^{*}(t),t)\leq{}C(y,t),\quad \text{for all}\quad y>{}0.
	\]
Furthermore, $y^{*}(t)<\infty$ if the following additional conditions hold:
	\begin{equation}\label{ConditionsGBM2}
		\limsup_{y\to\infty}\frac{\partial \rho(y,t)}{\partial y}\leq{}0, \quad
		\liminf_{y\to\infty}e^{y}y^{2}\rho(y,t)>\frac{2S_{0}\sigma^{2}t^{2}|\mu|}{r+f}.
\end{equation}
\end{theorem}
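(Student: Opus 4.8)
The plan is to follow the blueprint of Theorem~\ref{mainresult}: reduce the existence of a nontrivial minimizer to the sign of the one-sided derivative $\partial_{y}C(0^{+},t)$ together with a compactness argument on the extended half-line $[0,\infty]$, and reduce the finiteness of $y^{*}(t)$ to the behaviour of $C(y,t)$ as $y\to\infty$. Throughout I write $d_{1},d_{2},d_{3},d_{4}$ for the four normal arguments appearing in Lemma~\ref{GBMCostFunc}, and set $\lambda:=1-2\mu/\sigma^{2}$ and $\beta:=1+2\mu/\sigma^{2}$, so that the two exponential prefactors in $\widetilde C$ are $e^{\lambda y}$ and $e^{\mu t}e^{-\beta y}$. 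The elementary identities I will exploit are $d_{2}(0)=-d_{1}(0)$, $d_{3}(0)=d_{4}(0)$ and, by direct comparison of exponents, $\phi(d_{1}(0))=e^{\mu t}\phi(d_{3}(0))$; the last one makes the density terms collapse after differentiation.

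For the existence part I first note $P(0^{+},t):=N(d_{1}(0))+N(d_{2}(0))=1$ and that the market-order bracket vanishes at $y=0$ because $d_{3}(0)=d_{4}(0)$, which recovers $C(0^{+},t)=f-(r+f)\rho(0^{+},t)$. Differentiating $C$ once, evaluating at $y=0$, and using the three identities above to cancel the stray $\phi(d_{1}(0))/(\sigma\sqrt t)$ contributions, I expect the clean form
\[
\partial_{y}C(0^{+},t)=S_{0}\,h(t)+(r+f)\rho(0^{+},t)\,g(t)-(r+f)\,\partial_{y}\rho(0^{+},t),
\]
where $h(t):=-1+\lambda N^{c}(d_{1}(0))+e^{\mu t}\beta N(d_{3}(0))$ and $g(t):=\lambda\big(\phi(d_{1}(0))/d_{1}(0)-N^{c}(d_{1}(0))\big)$. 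Two structural facts drive the argument: $g(t)>0$ for every $t$ by Mills' inequality $N^{c}(z)<\phi(z)/z$ (and $\lambda>0$ since $\mu<0$), and the hypothesis $\partial_{y}\rho(0^{+},t)\ge 0$ lets me discard the last term. It then suffices that $S_{0}h(t)+(r+f)\rho(0^{+},t)g(t)<0$, and since $g(t)>0$ this reduces to the deterministic Gaussian inequality $-h(t)\ge a|\mu|t\,g(t)$, because then $p:=(r+f)\rho(0^{+},t)<aS_{0}|\mu|t$ forces $pg(t)<aS_{0}|\mu|t\,g(t)\le -S_{0}h(t)$. The case split is forced here: the only term of $h(t)$ whose sign is not fixed is $e^{\mu t}\beta N(d_{3}(0))$, and $\mathrm{sign}(\beta)=\mathrm{sign}(d_{3}(0))=\mathrm{sign}(\mu+\sigma^{2}/2)$, so one bounds $h$ against $g$ by two different comparisons according to whether $\mu>-\sigma^{2}/2$ (giving $a=2$) or $\mu\le-\sigma^{2}/2$ (giving $a=1$). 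Once $\partial_{y}C(0^{+},t)<0$ is established, continuity of $C(\cdot,t)$ on $(0,\infty)$ together with its finite limits at $0^{+}$ and at $\infty$ yields a minimizer $y^{*}(t)$ of the continuous extension to the compact set $[0,\infty]$, and the strict sign rules out $y^{*}(t)=0$, so $y^{*}(t)\in(0,\infty]$.

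For the finiteness part I analyse $y\to\infty$. Every Gaussian factor with a divergent argument decays super-exponentially, so the price/market-order block converges and $\lim_{y\to\infty}C(y,t)=S_{0}(e^{\mu t}-1)+f=:C(\infty,t)$; it then suffices to exhibit one finite $y_{1}$ with $C(y_{1},t)<C(\infty,t)$, for then the compact minimum cannot sit at $\infty$. Writing $C(y,t)-C(\infty,t)=(S_{0}e^{-y}-(r+f)\rho)P-S_{0}e^{\mu t}N^{c}(d_{3})-S_{0}e^{\mu t}e^{-\beta y}N(d_{4})$ and inserting the Mills expansion $N^{c}(z)=\phi(z)(z^{-1}-z^{-3}+\cdots)$, the four $S_{0}$-terms pair off through the exact identities $e^{-y}\phi(d_{1})=e^{\mu t}\phi(d_{3})$ and $e^{(\lambda-1)y}\phi(d_{2})=e^{\mu t-\beta y}\phi(d_{4})$; a short computation shows these pairs cancel not only at order $1/y$ but also at order $1/y^{2}$, so the residual market-order contribution is of order $e^{-y}\phi(d_{1})/y^{3}$. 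The decisive term is then $-(r+f)\rho\,P\sim -2(r+f)\rho\,\phi(d_{1})\,\sigma\sqrt t/y$, and the hypothesis $\liminf_{y\to\infty}e^{y}y^{2}\rho(y,t)>2S_{0}\sigma^{2}t^{2}|\mu|/(r+f)$ is precisely calibrated so that this negative term dominates the residual; hence $C(y,t)-C(\infty,t)<0$ for all large $y$, ruling out $y^{*}(t)=\infty$. The first condition $\limsup_{y\to\infty}\partial_{y}\rho(y,t)\le 0$ enters to keep the $\partial_{y}\rho$ contributions from reversing the relevant sign, i.e.\ to guarantee that $C$ approaches $C(\infty,t)$ monotonically from below.

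The two genuinely hard steps are exactly these technical cores. For existence the obstacle is the purely analytic inequality $-h(t)\ge a|\mu|t\,g(t)$ for all $t>0$ with the sharp constant $a\in\{1,2\}$: both sides vanish like $\sqrt t$ as $t\to0$ with ratio exactly $2$, and one must show that the ratio $-h(t)/(|\mu|t\,g(t))$ never drops below $a$, which is delicate precisely in the regime $\mu\le-\sigma^{2}/2$, where it can descend toward $1$. For finiteness the obstacle is the higher-order tail expansion: because the market-order terms cancel through order $1/y^{2}$, one must extract the $1/y^{3}$ residual and compare it against $\rho$, and it is this comparison that fixes the constant $2S_{0}\sigma^{2}t^{2}|\mu|$ in the second hypothesis of (\ref{ConditionsGBM2}).
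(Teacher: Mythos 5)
Your existence argument is set up correctly and, in fact, algebraically coincides with the paper's: your expression $S_{0}h(t)+(r+f)\rho(0^{+},t)g(t)-(r+f)\partial_{y}\rho(0^{+},t)$ is exactly the paper's formula (\ref{1stD:y0:L1})--(\ref{1stD:y0:L2}) for $\partial_{y}C(0^{+},t)$, and your reduction (use $\partial_{y}\rho(0^{+},t)\geq 0$ and $g(t)>0$, then invoke $-h(t)\geq a|\mu|t\,g(t)$ together with the first condition in (\ref{ConditionsGBM})) is the same logical skeleton the paper uses. The problem is that the pivotal inequality $-h(t)\geq a|\mu|t\,g(t)$ is precisely where the theorem's content lives --- it is what produces the constant $a\in\{1,2\}$ and the case split at $\mu=-\sigma^{2}/2$ --- and you leave it unproven, explicitly flagging it as the ``genuinely hard step.'' The paper devotes a separate result to it: Lemma \ref{SmplIdntN}, inequality (\ref{SmplIneqBVU}), proved by (i) an integral rewriting showing $e^{\mu t}N(\beta\sqrt{t})-N(\alpha\sqrt{t})>0$ (here $\alpha=(\mu-\sigma^{2}/2)/\sigma$, $\beta=(\mu+\sigma^{2}/2)/\sigma$ in the paper's notation), then (ii) for $\mu\leq-\sigma^{2}/2$ a two-point CDF comparison $N(-\alpha\sqrt{t})-N(\beta\sqrt{t})\geq\min\{\phi(\alpha\sqrt{t}),\phi(\beta\sqrt{t})\}(-\alpha-\beta)\sqrt{t}$ using $-\alpha>\beta$ and $|\alpha|>|\beta|$, and for $\mu>-\sigma^{2}/2$ a mean-value bound exploiting $\beta>0$. (The paper does not even compare $-h$ and $g$ directly: it sandwiches both against the intermediate quantity $2\phi(\alpha\sqrt{t})/(\sigma\sqrt{t})$, which upper-bounds $g$ and, by the lemma, lower-bounds $-h/(a|\mu|t)$.) Without this lemma or a substitute, your argument proves only that the theorem follows from an unproven Gaussian inequality; that is a genuine gap.

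On finiteness your route is genuinely different from the paper's --- you compare values $C(y,t)$ against $C(\infty,t)=S_{0}(e^{\mu t}-1)+f$, whereas the paper shows $\partial_{y}C(y,t)>0$ for all large $y$ by applying (\ref{imp.inequ}) to each $N(\cdot)$ term of (\ref{gbm:1st:Dy:L1})--(\ref{gbm:1st:Dy:L3}) --- and your calibration does check out: the two $S_{0}$-pairs cancel at orders $1/y$ and $1/y^{2}$, the residual is $4S_{0}|\mu|\sigma^{3}t^{5/2}e^{-y}\phi(d_{1})/y^{3}\,(1+o(1))$, and $P\sim 2\sigma\sqrt{t}\,\phi(d_{1})/y$, so dominance of $-(r+f)\rho P$ is equivalent to the second condition in (\ref{ConditionsGBM2}). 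But this, too, is only sketched (``one must extract the $1/y^{3}$ residual''), so the same criticism applies, even though here the sketch does survive being carried out. One further inaccuracy: the role you assign to the first condition in (\ref{ConditionsGBM2}) is spurious for your own route --- a value comparison contains no $\partial_{y}\rho$ terms, so no monotonicity ``from below'' is needed, and your argument, once completed, would prove finiteness without that hypothesis. That condition is needed by the paper's derivative-based proof, where it makes the term (\ref{gbm:1st:Dy:L4}) eventually nonnegative; recognizing this distinction would have been worth stating, since it means your approach yields a (slightly) stronger statement.
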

\begin{rem}\label{Remark3.3GBM}
	{In Section 5, we shall verify the plausibility of the conditions in Theorem \ref{gbmtknotexistence}. In particular, we prove that the first condition in (\ref{ConditionsGBM2}) is satisfied for a large class of models, while the second condition in (\ref{ConditionsGBM2}) is always satisfied under mild conditions. Based on empirically reasonable parameters, the second condition in (\ref{ConditionsGBM}) is also typically satisfied.}
\end{rem}

{As in the Bachelier model, the two conditions in (\ref{ConditionsGBM}) guarantee that $\partial_{y}C(0^{+},t)<0$ and, hence, that the optimal placement problem admits a nontrivial solution. It is then natural to define
\begin{equation}\label{Dfntstar0}
	t_0^{*}:=\inf\left\{u\geq{}0: \frac{\partial C}{\partial y}(0^{+},s)<0,\;\text{ for all }s>u\right\}.
\end{equation}
{The critical} value $t_{0}^{*}$ is such that a nontrivial {optimal} placement policy would exist for any investment horizon $t>t_{0}^{*}$.
As in the Bachelier model, in order to specify further the critical time $t_{0}^{*}$ and provide an estimate when $r+f\to{}0$, we need further assumptions. This is provided in the following result.}
\begin{theorem}\label{prop:t0limit}
Let the {second condition in (\ref{ConditionsGBM}) and the two conditions in (\ref{ConditionsGBM2}) be satisfied for all $t>0$. Let $t_0^*$ be defined as in (\ref{Dfntstar0}). In addition, we assume that $\rho(0^{+},t)\equiv \rho(0^{+})\in(0,1]$, for all $t$, and $\limsup_{t\to{}0}|\partial_{y}\rho(0,t)|<\infty$. Then, $t_{0}^{*}$ is positive and, thus, is such that $\partial_{y}C(0^{+},t_{0}^{*})=0$. Furthermore, when $(r+f)/S_0 \to 0$, we have
			\begin{equation}\label{LRNH}
				t_0^{*} \sim \frac{\rho({0^{+}})(r+f)}{2|\mu|S_0}.
		\end{equation}
If, in addition, $\partial_{t}\partial_{y}\rho(0,t)\geq{}0$, for all $t$, then $t_{0}^{*}$ is the only solution of the equation $\partial_{y} C(0^{+},t)=0$.}
\end{theorem}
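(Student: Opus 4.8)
The plan is to reduce everything to the single function $g(t):=\partial_{y}C(0^{+},t)$, obtained by differentiating the closed form of Lemma~\ref{GBMCostFunc} in $y$ and letting $y\to0^{+}$. Two algebraic identities make this tractable. First, writing $a_{\pm}:=(\mu\pm\sigma^{2}/2)\sqrt{t}/\sigma$, the four Gaussian arguments in Lemma~\ref{GBMCostFunc} collapse at $y=0$ to $\pm a_{-}$ and $a_{+}$ (twice), and the bracket multiplying $S_{0}e^{-y}-(r+f)\rho$ equals $N(-a_{-})+N(a_{-})=1$; this reflects that the level $S_{0}-S_{0}e^{-y}$ is reached with probability one as $y\to0^{+}$. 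Second, since $a_{+}^{2}-a_{-}^{2}=2\mu t$, one has $e^{\mu t}\phi(a_{+})=\phi(a_{-})$, and this cancels the two $1/\sqrt{t}$ terms coming from the $S_{0}$-part. After collecting terms I expect a representation of the form
\[
g(t)=S_{0}\Phi(t)-(r+f)\big[\partial_{y}\rho(0^{+},t)+\rho(0^{+})\Psi(t)\big],
\]
with $\Phi(t)=-1+(1-\beta)N(a_{-})+(1+\beta)e^{\mu t}N(a_{+})$, $\Psi(t)=-2\phi(a_{-})/(\sigma\sqrt{t})+(1-\beta)N(a_{-})$ and $\beta:=2\mu/\sigma^{2}$; here $\Phi$ is bounded with $\Phi(0^{+})=0$, while all the blow-up of $g$ at $t=0$ now sits in $\Psi$.

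Positivity of $t_{0}^{*}$ and the identity $\partial_{y}C(0^{+},t_{0}^{*})=0$ then follow from a boundary analysis of $g$. Since $\Psi(t)\sim-2\phi(0)/(\sigma\sqrt{t})$, we get $g(t)\sim 2(r+f)\rho(0^{+})\phi(0)/(\sigma\sqrt{t})\to+\infty$ as $t\to0^{+}$, so $g>0$ on some interval $(0,\delta)$ and hence $t_{0}^{*}\ge\delta>0$. Because $\rho(0^{+},t)\equiv\rho(0^{+})$, the first inequality in (\ref{ConditionsGBM}) reads $t>t_{1}:=(r+f)\rho(0^{+})/(aS_{0}|\mu|)$, so by (the proof of) Theorem~\ref{gbmtknotexistence} we have $g<0$ on $(t_{1},\infty)$, whence $t_{0}^{*}\le t_{1}<\infty$. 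Continuity of $g$ on $(0,\infty)$, together with $g<0$ on $(t_{0}^{*},\infty)$ and a sequence $s_{n}\uparrow t_{0}^{*}$ with $g(s_{n})\ge0$ (forced by $t_{0}^{*}=\inf\{u:g<0\text{ on }(u,\infty)\}$), gives $g(t_{0}^{*})=0$.

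For the asymptotics (\ref{LRNH}) I would expand $g$ for small $t$. Using $N(a_{\pm})=\tfrac12+\phi(0)a_{\pm}+O(t^{3/2})$, $e^{\mu t}=1+\mu t+O(t^{2})$, and once more $a_{+}^{2}-a_{-}^{2}=2\mu t$, the constant term of $\Phi$ cancels and $S_{0}\Phi(t)=\tfrac{4\mu S_{0}\phi(0)}{\sigma}\sqrt{t}+O(t^{3/2})$, while $-(r+f)\rho(0^{+})\Psi(t)=\tfrac{2(r+f)\rho(0^{+})\phi(0)}{\sigma\sqrt{t}}+O(r+f)$; the term $-(r+f)\partial_{y}\rho(0^{+},t)$ stays $O(r+f)$ by the hypothesis $\limsup_{t\to0}|\partial_{y}\rho(0,t)|<\infty$. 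The a priori bound $t_{0}^{*}\le t_{1}$ forces $t_{0}^{*}\to0$ as $(r+f)/S_{0}\to0$, so substituting $t=t_{0}^{*}$ into $g(t_{0}^{*})=0$ and keeping the two leading terms yields $(r+f)\rho(0^{+})=2|\mu|S_{0}t_{0}^{*}(1+o(1))$, i.e. $t_{0}^{*}\sim\rho(0^{+})(r+f)/(2|\mu|S_{0})$; one then checks the $O(r+f)$ and $O(t^{3/2})$ remainders are $o$ of the two balanced terms at this scale.

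Finally, uniqueness under $\partial_{t}\partial_{y}\rho(0,t)\ge0$ I would get by showing $g$ is strictly decreasing, since $g'(t)=S_{0}\Phi'(t)-(r+f)\partial_{t}\partial_{y}\rho(0,t)-(r+f)\rho(0^{+})\Psi'(t)$. A short computation, again using $e^{\mu t}\phi(a_{+})=\phi(a_{-})$ and the identity $\tfrac{(\mu-\sigma^{2}/2)^{2}}{\sigma^{3}}+\tfrac{(1-\beta)(\mu-\sigma^{2}/2)}{2\sigma}=0$, simplifies $\Psi'(t)=\phi(a_{-})/(\sigma t^{3/2})>0$ and $\Phi'(t)=\mu\big[2\phi(a_{-})/(\sigma\sqrt{t})+(1+\beta)e^{\mu t}N(a_{+})\big]$. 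The middle term is $\le0$ by hypothesis and the $\Psi'$ term is $<0$, so the whole obstacle is the sign of $\Phi'(t)$: when $\mu\le-\sigma^{2}/2$ one has $1+\beta<0$ and the bracket is a difference of positive quantities. Here the key step is to divide by $e^{\mu t}$ (using $\phi(a_{-})=e^{\mu t}\phi(a_{+})$) and reduce the required inequality to $\phi(b)>b\bar{N}(b)$ for $b:=-a_{+}>0$, which is exactly the standard Mills-ratio bound; this makes the bracket positive, so $\Phi'(t)<0$ for every $t>0$ and every $\mu<0$. Hence $g'<0$ throughout, $g$ has a unique zero, and it must be $t_{0}^{*}$. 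The main difficulty of the whole argument is precisely this sign control of $\Phi'$ for strongly negative drift, which the Mills-ratio inequality resolves cleanly; the exact cancellation of the constant and $1/\sqrt{t}$ terms in the small-$t$ expansion is the other place where care is needed.
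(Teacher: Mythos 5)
Your proposal is correct and follows essentially the same route as the paper's own proof: the same closed form for $\partial_{y}C(0^{+},t)$ (your $\Phi$, $\Psi$ decomposition is just a reorganization of the paper's expression), the blow-up $\partial_{y}C(0^{+},t)\sim 2(r+f)\rho(0^{+})\phi(0)/(\sigma\sqrt{t})$ as $t\to 0^{+}$ for positivity, the small-$t$ expansion balanced at $\partial_{y}C(0^{+},t_{0}^{*})=0$ for the asymptotics, and strict monotonicity of $t\mapsto\partial_{y}C(0^{+},t)$ via the identity $e^{\mu t}\phi(a_{+})=\phi(a_{-})$ together with the Mills-ratio bound for uniqueness. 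The only (minor, and arguably cleaner) deviation is that you deduce $t_{0}^{*}\to 0$ from the a priori bound $t_{0}^{*}\leq (r+f)\rho(0^{+})/(aS_{0}|\mu|)$ inherited from the proof of Theorem \ref{gbmtknotexistence}, whereas the paper runs a contradiction argument along sequences with $c_{n}/S_{n0}\to 0$ using Lemma \ref{SmplIdntN}.
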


In the left panel of Figure~\ref{Graph:GBM:t0}, we graph $t^*_0$ and $\bar{t}_{0}:=(r+f)\rho(0^{+})/(2|\mu|S_0)$ from Theorem~\ref{prop:t0limit} against $S_0$. As shown therein, $\hat{t}_{0}$ converges to $t^*_0$ at a significantly fast rate, so that $\hat{t}_{0}$ is a reasonably accurate approximation of $t^*_0$. {Per our discussion after Theorem \ref{gbmtknotexistence} and (\ref{LRNH}), we know that $\bar{t}_{0}:=(r+f)\rho(0^{+})/(2|\mu|S_0)$ is a tight upper bound for $t^*_0$ when $\mu>-\sigma^{2}/2$. However, for $\mu\leq{}-\sigma^{2}/2$, it is not known whether or not $\bar{t}_{0}$ is still an upper bound for $t^*_0$. However, $\tilde{t}:=(r+f)\rho(0^{+})/(|\mu|S_0)$ is an upper bound (in both cases).}

\begin{figure}
\centering
\includegraphics[height=8cm,width=0.5\textwidth]{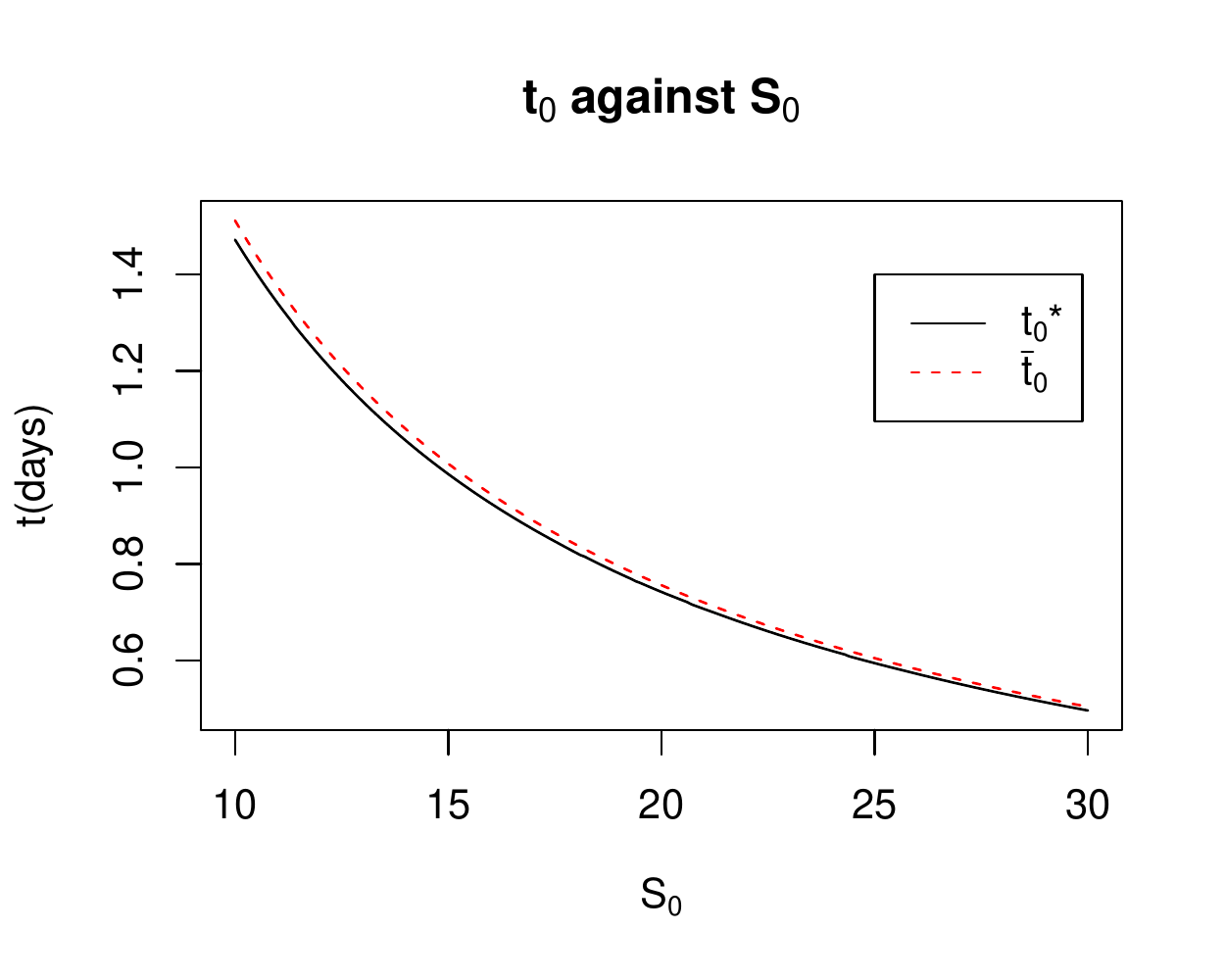}\includegraphics[height=8cm, width=0.5\textwidth]{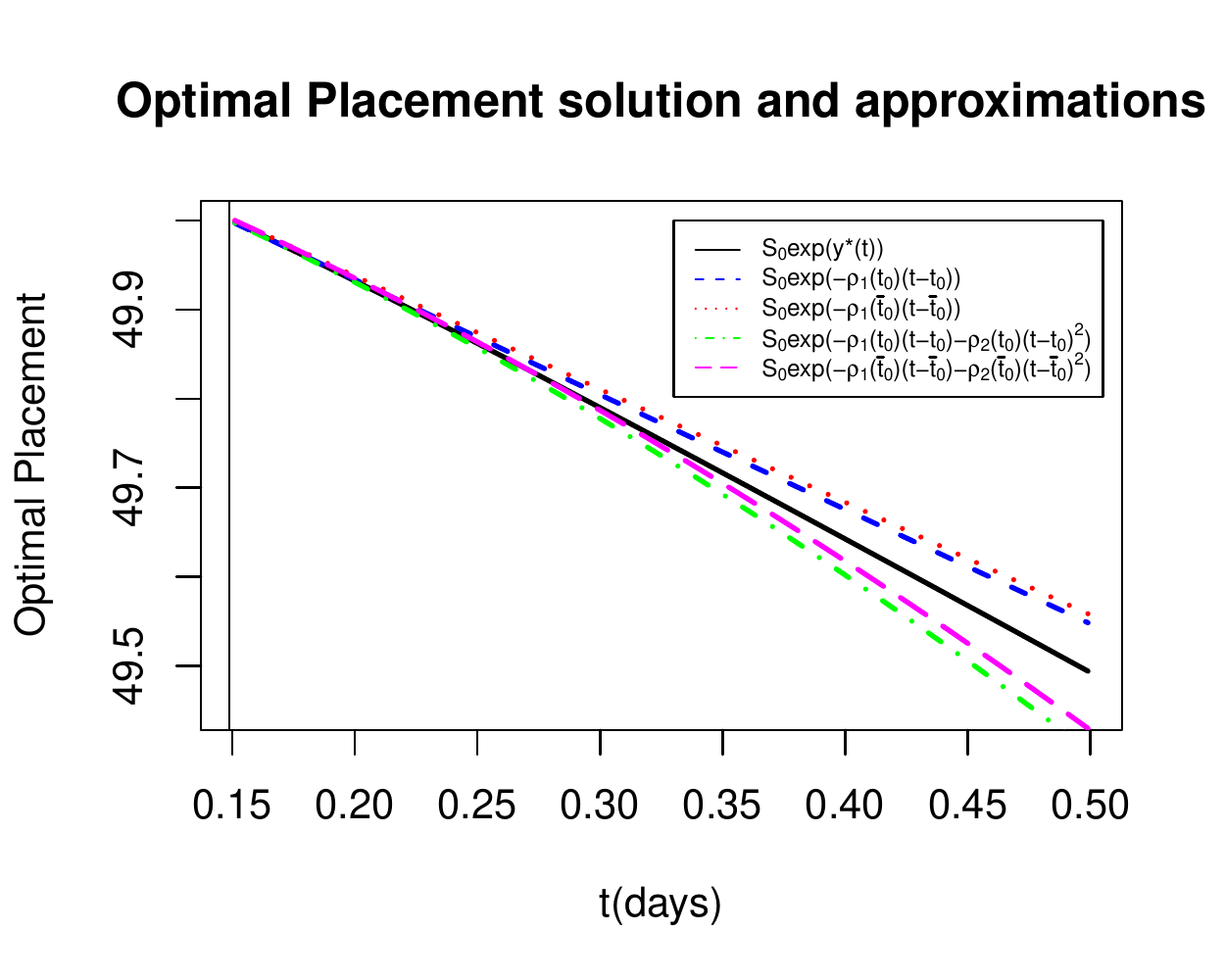}
\caption{Left Panel: $t^*_0$(black) and $\bar{t}_{0}=(r+f)\rho(0^{+})/(2|\mu|S_0)$(red dotted line) against $S_0$ when $(r+f)\rho(0^{+})=0.006, \sigma=0.2, \mu=-0.05$. Right Panel: $S_0 \exp (-y^*(t))$(black), $S_0 \exp (-\rho_1(t_0)(t-t_0))$(blue), $S_0 \exp (-\rho_1(\bar{t_0})(t-\bar{t_0}))$(red), $S_0 \exp (-\rho_1(t_0)(t-t_0)-\rho_2(t_0)(t-t_0))$(green), $S_0 \exp (-\rho_1(\bar{t_0})(t-\bar{t_0})-\rho_2(\bar{t_0})(t-\bar{t_0})^2)$(magenta) against $t$(days) when $(r+f)\rho(0^{+})=0.006, \sigma=0.2, \mu=-0.1, S_0=50$.}
\label{Graph:GBM:t0}
\end{figure}

We now proceed to discuss the behavior of the optimal placement solution when the investor has a time horizon $t$  close to $t_0^*$.  {The following result is the analogous of Theorem \ref{BM:xt:ttot0}.} 
\begin{restatable}{theorem}{ystar_ttot0}
\label{ttot0}
Suppose that the conditions of Theorem \ref{prop:t0limit} are satisfied and also that $\partial_{y}^{2}\rho(0,t)<0$ and $\partial_{t}\partial_{y}\rho(0,t)>0$, for all $t$. Then, as $t \searrow t_0^{*} $, 
	 	\begin{equation*}
	  y^*(t)=\rho_{1}(t-t_0^{*})+\rho_2(t-t_0^{*})^{2}+o((t-t_0^{*})^{2}),
	\end{equation*}
where 
 \[
\rho_1 :=-\frac{\frac{\partial^2 C}{\partial t \partial y} \left({0, t_{0}^*}\right)}{\frac{\partial^2 C}{\partial y^2}\left({0, t_0^*}\right)},\quad \rho_2 :=-\frac{ \frac{1}{2}\frac{\partial^3 C}{\partial y^3}({0,t_0^*})\rho_1^2 + \frac{\partial^3 C}{\partial t \partial y^2}({0,t_0^*})\rho_1 + \frac{1}{2}\frac{\partial^3 C}{\partial y\partial t^2}({0,t_0^*})}{\frac{\partial^2 C}{\partial y^2}({0, t_0^*})} .
\]
\end{restatable}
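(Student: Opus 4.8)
The plan is to mirror the strategy of Theorem~\ref{BM:xt:ttot0}: view the optimal placement $y^*(t)$, for $t>t_0^*$, as the branch of solutions of the first-order (stationarity) condition $\partial_y C(y,t)=0$ emanating from the degenerate point $(y,t)=(0,t_0^*)$, and read off its Taylor coefficients by the implicit function theorem (IFT). Set $G(y,t):=\partial_y C(y,t)$. Since the hypotheses posit $\partial^2_y\rho$ and $\partial_t\partial_y\rho$ and, for the third-order expansion, the smoothness of $\rho$ inherited from the setting of Theorem~\ref{prop:t0limit} (so that $\rho$ is $C^3$ near $(0,t_0^*)$), and since the remaining terms of $C(y,t)$ in Lemma~\ref{GBMCostFunc} are smooth combinations of exponentials and the Gaussian $N(\cdot)$, the map $G$ is $C^2$ (equivalently, $C$ is $C^3$) in a neighborhood of $(0,t_0^*)$. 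In particular, all third-order derivatives of $C$ appearing in $\rho_1,\rho_2$ exist and can be evaluated in closed form at $(0,t_0^*)$ in terms of $N(\cdot)$ and the boundary data $\rho(0^{+}),\partial_y\rho(0,t_0^*),\partial^2_y\rho(0,t_0^*),\partial_t\partial_y\rho(0,t_0^*),\dots$.

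First I would record the two facts at the base point. By Theorem~\ref{prop:t0limit}, $t_0^*>0$ and $G(0,t_0^*)=\partial_y C(0^{+},t_0^*)=0$. Next I would establish $\partial_y G(0,t_0^*)=\partial^2_y C(0,t_0^*)>0$; this is the technical heart and is exactly where the hypothesis $\partial^2_y\rho(0,t)<0$ enters. Differentiating the closed form of Lemma~\ref{GBMCostFunc} twice in $y$ and letting $y\to0^{+}$, the term $-(r+f)\rho(y,t)$ multiplied by the nonnegative execution probability contributes $-(r+f)\,\partial^2_y\rho(0,t_0^*)$ times a strictly positive factor, which is positive; one then checks that this term combines favorably with the remaining smooth contributions so that $\partial^2_y C(0,t_0^*)>0$. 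I would also note, from $h(t):=\partial_y C(0^{+},t)$ vanishing only at $t_0^*$ (by Theorem~\ref{prop:t0limit}) and the strict condition $\partial_t\partial_y\rho(0,t)>0$, that $h$ crosses zero downward there, so $\partial^2_{ty}C(0,t_0^*)=h'(t_0^*)<0$.

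Given $G(0,t_0^*)=0$ and $\partial_y G(0,t_0^*)\neq 0$, the IFT yields a $C^2$ map $t\mapsto \hat y(t)$ near $t_0^*$ with $\hat y(t_0^*)=0$ and $G(\hat y(t),t)\equiv 0$. I would then identify $\hat y(t)=y^*(t)$ for $t\searrow t_0^*$. Differentiating $G(\hat y(t),t)=0$ once gives $\hat y'(t_0^*)=-\partial^2_{ty}C(0,t_0^*)/\partial^2_y C(0,t_0^*)=\rho_1$, which is positive by the sign facts above, so $\hat y(t)>0$ is an admissible placement for $t>t_0^*$. A bifurcation/continuity argument then shows the genuine global minimizer $y^*(t)$ of Theorem~\ref{gbmtknotexistence} satisfies $y^*(t)\to 0$ as $t\searrow t_0^*$, and the local uniqueness of the zero of $G(\cdot,t)$ near $0$ (consequence of $\partial^2_y C(0,t_0^*)>0$) forces $y^*(t)=\hat y(t)$ on a right-neighborhood of $t_0^*$.

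Finally I would extract the coefficients. Differentiating $G(\hat y(t),t)=0$ a second time gives $C_{yyy}(\hat y')^2+2C_{yyt}\hat y'+C_{yy}\hat y''+C_{ytt}=0$, so that $\hat y''(t_0^*)=-\big(C_{yyy}\rho_1^2+2C_{yyt}\rho_1+C_{ytt}\big)/C_{yy}$ at $(0,t_0^*)$; hence $\tfrac12\hat y''(t_0^*)=\rho_2$ with the identifications $C_{yyy}=\partial^3_y C$, $C_{yyt}=\partial^3_{ty^2}C$, $C_{ytt}=\partial^3_{yt^2}C$. Since $\hat y(t_0^*)=0$, Taylor's theorem for the $C^2$ branch yields $y^*(t)=\rho_1(t-t_0^*)+\rho_2(t-t_0^*)^2+o((t-t_0^*)^2)$ as $t\searrow t_0^*$. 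The main obstacle I anticipate is the pair of issues in the second and third paragraphs—proving $\partial^2_y C(0,t_0^*)>0$ from the concavity hypothesis on $\rho$, and pinning the IFT branch to the true global minimizer via the bifurcation argument—rather than the routine (if lengthy) closed-form evaluation of the third derivatives of $C$.
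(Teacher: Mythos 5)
Your overall strategy coincides with the paper's: once $\partial_y^2 C(0,t_0^*)>0$ is in hand, the paper itself says the rest "follows along the same line as that of Theorem \ref{BM:xt:ttot0}," i.e., the implicit function theorem plus a mean-value/Taylor expansion of the stationarity identity $\partial_y C(y^*(t),t)=0$, and your extraction of $\rho_1$ and $\rho_2$ by differentiating that identity once and twice is exactly that computation (your sign analysis of $\partial_t\partial_y C(0,t_0^*)$ via the monotonicity established in Theorem \ref{prop:t0limit} is also consistent with the paper). The genuine gap is at the step you yourself flag as the technical heart: you assert that the contribution $-(r+f)\,\partial^2_y\rho(0,t_0^*)>0$ "combines favorably with the remaining smooth contributions." It does not do so for free. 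Writing out $\partial_y^2 C(0,t)$ as in (\ref{SecondPartial:y0:L1})--(\ref{SecondPartial:y0:L3}), the terms coming from $\partial_y\rho$, $\partial_y^2\rho$, and the $\rho(0^{+})c$-term in (\ref{SecondPartial:y0:L2}) are indeed positive (using $\alpha<0$), but the block (\ref{SecondPartial:y0:L1}) contains the negative term $-\frac{4S_0\beta^{2}}{\sigma^{2}} e^{\mu t} N\left(\beta\sqrt{t}\right)$, which is not dominated pointwise by the other terms (for $|\mu|$ small and $\sigma$ not small it is of order $S_0/2$, comparable to $S_0N\left(-\alpha\sqrt{t}\right)$), so a direct term-by-term check at an arbitrary $t$ cannot close the argument.

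The missing idea is that positivity is only needed, and in general only true, at the critical time itself, and that one must exploit the defining property of $t_0^*$: since $\partial_y C(0,t_0^*)=0$, equations (\ref{1stD:y0:L1})--(\ref{1stD:y0:L2}) allow you to express the quantity $\rho(0^{+})c\left[\phi\left(\alpha\sqrt{t_0^*}\right)+\alpha\sqrt{t_0^*}\,N\left(\alpha\sqrt{t_0^*}\right)\right]$ in terms of $S_0$, $N\left(\alpha\sqrt{t_0^*}\right)$, and $e^{\mu t_0^*}N\left(\beta\sqrt{t_0^*}\right)$; substituting this into (\ref{SecondPartial:y0:L2}) and regrouping, every resulting term is positive when $\mu<0$, which is how the paper obtains $\partial_y^2 C(0,t_0^*)>0$. (The same device is used in the Bachelier case, Theorem \ref{BM:xt:ttot0}, where in addition the bound $t_0<c\rho(0^{+})/2|\mu|$ from Corollary \ref{Crl1} is invoked.) Without this substitution your plan stalls precisely where you anticipated; with it, the remainder of your argument---the IFT branch, its identification with the global minimizer of Theorem \ref{gbmtknotexistence}, and the Taylor coefficients---goes through as in the paper.
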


{Let us remark that all the partial derivatives involved in $\rho_{1}$ and $\rho_{2}$ can be evaluated in closed form in terms of $N(\sqrt{t_{0}^{*}}(\mu\pm \sigma^{2}/2)/\sigma)$ (see (\ref{MixedPartial:y0:L1}) and (\ref{SecondPartial:y0:L1}) below for the derivatives involved in $\rho_{1}$). 
To use the approximation,} {an investor would need an approximation of $t_0^*$}, which {can be obtained from Theorem~\ref{prop:t0limit} in a small fee/rebate regime. Therefore,} the combination of Theorems ~\ref{prop:t0limit} and \ref{ttot0} gives us {a simple, yet accurate, approximation of the} optimal placement solution. 
In right panel of Figure~\ref{Graph:GBM:t0}, we show the optimal placement $S_{0}e^{-y^{*}(t)}$ and its first- and second-order approximations with $t_0$ and $t_0^*$ as given in Theorem~\ref{prop:t0limit}. It is evident that the second-order approximation shows comparably better performance than the first-order approximation when $t$ is close to $t_0^*$, while the result is opposite when $t$ is large.

{In addition to the upper bounds we had mentioned, our next result also provides lower bounds for $t_0^{*}$.}
\begin{proposition}
\label{tknot_gbm_bounds}
{Suppose that the conditions of Theorem \ref{prop:t0limit} are satisfied and let $\bar{t}:=\rho({0^{+}})(r+f)/(2|\mu|S_0)$ and $\tilde{t}:=\rho({0^{+}})(r+f)/(|\mu|S_0)$. Also, let $\underline{t}(z)$ be defined as
\begin{align*}
\underline{t}(z)=\rho^{2}({0^{+}})(r+f)^{2}\left(\frac{-(\mu-\frac{\sigma^{2}}{2})- \sqrt{(\mu-\frac{\sigma^{2}}{2})^{2}-\frac{32\sigma^{2}\phi(0)S_{0}z}{\rho({0^{+}})(r+f)}}}{8\sigma S_{0}z}\right)^{2}.
\end{align*}
Then, $\underline{t}(\mu\phi(0))<t_{0}^*<\tilde{t}$, if $\mu<-\sigma^{2}/2$, while $\underline{t}(\mu\phi(0)-\beta\sqrt{-\mu /2}e^{-0.5})<t_{0}^*<\bar{t}$, if $\mu>-\sigma^{2}/2$.}
\end{proposition}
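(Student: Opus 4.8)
The plan is to reduce every inequality in the statement to a sign condition on the single function $g(t):=\partial_{y}C(0^{+},t)$. By Theorem~\ref{prop:t0limit}, the standing hypotheses make $t_{0}^{*}$ the unique zero of $g$; together with the dichotomy that $y=0^{+}$ is optimal exactly when $t\le t_{0}^{*}$, this forces $g(t)>0$ on $(0,t_{0}^{*})$ and $g(t)<0$ on $(t_{0}^{*},\infty)$. Hence an upper bound $t_{0}^{*}<t_{U}$ is equivalent to $g(t_{U})<0$, and a lower bound $t_{L}<t_{0}^{*}$ is equivalent to $g(t_{L})>0$. The whole argument is then the evaluation of $g$ at $\bar t$, $\tilde t$ and $\underline t(z)$.

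First I would compute a workable closed form for $g$. Differentiating the representation of Lemma~\ref{GBMCostFunc} in $y$ and letting $y\downarrow0$, and abbreviating $\gamma:=1-2\mu/\sigma^{2}$, $d_{1}^{0}:=\sqrt{t}(\sigma^{2}/2-\mu)/\sigma$ and $d_{3}^{0}:=\sqrt{t}(\mu+\sigma^{2}/2)/\sigma$, the decisive simplification is the exact identity $(d_{1}^{0})^{2}-(d_{3}^{0})^{2}=-2\mu t$, i.e.\ $e^{\mu t}\phi(d_{3}^{0})=\phi(d_{1}^{0})$, which cancels the two $O(1/\sqrt t)$ contributions of order $S_{0}$ and leaves
\[
\begin{aligned}
g(t)&=S_{0}\big[-1+\gamma N(-d_{1}^{0})+e^{\mu t}(1+2\mu/\sigma^{2})N(d_{3}^{0})\big]\\
&\quad+(r+f)\Big[\tfrac{2\rho(0^{+})\phi(d_{1}^{0})}{\sigma\sqrt{t}}-\partial_{y}\rho(0^{+},t)-\rho(0^{+})\gamma N(-d_{1}^{0})\Big].
\end{aligned}
\]
A short expansion gives $g(t)\sim\frac{2\phi(0)}{\sigma}\big(\rho(0^{+})(r+f)/\sqrt t+2\mu S_{0}\sqrt t\big)$ as $t\downarrow0$; the bracket vanishes exactly at $\bar t$ and equals $S_{0}\mu\sqrt{\tilde t}<0$ at $\tilde t$, which is why the two scales $\bar t$ and $\tilde t$ appear.

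For the upper bounds I would evaluate $g$ at $\bar t$ and $\tilde t$. Since $\partial_{y}\rho(0^{+},t)\ge0$ by hypothesis, that term only lowers $g$ and may be discarded when proving $g<0$. At $\tilde t$ the leading bracket is already negative with a definite margin, so controlling the remaining $S_{0}$-order Gaussian terms by elementary estimates on $N$ yields $g(\tilde t)<0$ for every $\mu<0$, giving $t_{0}^{*}<\tilde t$. At $\bar t$ the leading bracket vanishes, so $g(\bar t)$ has the sign of the residual $S_{0}$-term $-1+\gamma N(-d_{1}^{0})+e^{\mu t}(1+2\mu/\sigma^{2})N(d_{3}^{0})$; this residual is provably negative precisely when $\mu>-\sigma^{2}/2$ (then $1+2\mu/\sigma^{2}>0$ and $d_{3}^{0}>0$, so $N(d_{3}^{0})>1/2$ and the term is sign-controllable), which is exactly the regime in which the sharper bound $t_{0}^{*}<\bar t$ is claimed.

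For the lower bound I would instead estimate $g$ from below. Bounding $\partial_{y}\rho(0^{+},t)\le\beta$ and replacing each Gaussian factor by a one-sided elementary estimate on the relevant range ($\phi(d_{1}^{0})\le\phi(0)$ handled via $e^{-u}\ge1-u$, $N(\pm d)$ linearised as $\tfrac{1}{2}\pm\phi(0)d$, and $e^{\mu t}\ge1+\mu t$), the inequality $g(t)\ge0$ collapses, after multiplication by $\sigma^{2}\sqrt t>0$, to a downward quadratic in $s:=\sqrt t$,
\[
4\sigma S_{0}z\,s^{2}+\rho(0^{+})(r+f)\,(\mu-\sigma^{2}/2)\,s+2\sigma\phi(0)\rho(0^{+})(r+f)\ \ge\ 0,
\]
whose larger (positive) root is exactly $\sqrt{\underline t(z)}$; since the parabola opens downward ($z<0$), the inequality holds for $t\le\underline t(z)$, whence $\underline t(z)\le t_{0}^{*}$ (strictly, because the elementary bounds are strict). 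Here $z$ is the effective coefficient of $s^{2}$: its leading value is $\mu\phi(0)$, while in the regime $\mu>-\sigma^{2}/2$ a nonnegative Gaussian term must be discarded after bounding the accompanying $\partial_{y}\rho$-factor by $\beta$ and the factor $|\mu|\sqrt t\,e^{\mu t}$ by its maximum $\sqrt{-\mu/2}\,e^{-1/2}$, which shifts $z$ to $\mu\phi(0)-\beta\sqrt{-\mu/2}\,e^{-1/2}$. The main obstacle is precisely this lower bound: the one-sided Gaussian estimates must be tight enough that the located root is a genuine bound yet crude enough that $g\ge0$ becomes an honest quadratic in $\sqrt t$ — this is the source of the discriminant $(\mu-\sigma^{2}/2)^{2}-32\sigma^{2}\phi(0)S_{0}z/(\rho(0^{+})(r+f))$ — and keeping the sign bookkeeping consistent across $\mu\gtrless-\sigma^{2}/2$ while threading $\beta$ through $z$ is the delicate part; the upper bounds, by contrast, are routine once the closed form for $g$ and the cancellation identity are in hand.
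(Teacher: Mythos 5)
Your skeleton coincides with the paper's: by Theorem \ref{prop:t0limit}, $t\mapsto\partial_{y}C(0^{+},t)$ is strictly decreasing with unique zero $t_{0}^{*}$, so every bound reduces to a sign check, and your downward quadratic in $s=\sqrt{t}$ is, up to an overall factor of $\sigma$, exactly the paper's lower-bounding quadratic whose positive root is $\sqrt{\underline{t}(z)}$. The gaps are in the two places where the real work happens. For the upper bounds, sign information about the $S_{0}$-bracket is not enough: at $t=\bar t$ the positive term $\frac{2c\rho(0^{+})}{\sigma\sqrt{t}}\bigl[\phi(\alpha\sqrt{t})+\alpha\sqrt{t}N(\alpha\sqrt{t})\bigr]$ is, after the crude bound $\phi(\alpha\sqrt{t})+\alpha\sqrt{t}N(\alpha\sqrt{t})\le\phi(\alpha\sqrt{t})$, exactly $\frac{4|\mu|S_{0}\sqrt{\bar t}}{\sigma}\phi(\alpha\sqrt{\bar t})$, so you need the quantitative inequality that the $S_{0}$-bracket lies below $-\frac{4|\mu|}{\sigma}\phi(\alpha\sqrt{t})\sqrt{t}$ when $\mu>-\sigma^{2}/2$ (the paper's Lemma \ref{SmplIdntN} with $a=2$), not merely that it is negative. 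Your sketch supplies only "sign-controllability," and your claim that the residual is negative \emph{precisely} when $\mu>-\sigma^{2}/2$ is false: Lemma \ref{SmplIdntN} shows it is negative for all $\mu<0$; what changes across the regimes is the constant $a\in\{1,2\}$ in the magnitude bound, and that constant is the entire reason $\bar t$ (rather than only $\tilde t$) serves as an upper bound. Your remark that $N(d_{3}^{0})>\tfrac12$ helps is also backwards, since a larger $N(d_{3}^{0})$ makes the bracket larger, not smaller.

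The second gap is the meaning of $\beta$, which breaks your lower bound. In the statement, $\beta$ is the model parameter $(\mu+\sigma^{2}/2)/\sigma$ from (\ref{TNFGBM}) — the quantity whose sign separates the two regimes — not an assumed bound on $\partial_{y}\rho(0^{+},t)$. The correction $-\beta\sqrt{-\mu/2}\,e^{-0.5}$ to $z$ arises inside the $S_{0}$-Gaussian bracket: writing it as $S_{0}g(\sqrt{t})$ with $g'(x)=\frac{4}{\sigma}\bigl(\mu\phi(\alpha x)+\beta\mu x e^{\mu x^{2}}N(\beta x)\bigr)$, the second term is nonnegative and is simply discarded when $\beta\le0$ (giving $z=\mu\phi(0)$), while for $\beta>0$ it is negative and is bounded below via $\mu x e^{\mu x^{2}}\ge-\sqrt{-\mu/2}\,e^{-1/2}$ and $N\le1$ (giving the shifted $z$). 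You attach this correction to the $\partial_{y}\rho$ term and invert which regime discards a nonnegative term, so under the intended reading of $\beta$ your argument does not establish the stated inequality. (You are right that lower-bounding $\partial_{y}C(0^{+},t)$ in principle requires an upper bound on $\partial_{y}\rho(0^{+},t)$, which the hypotheses do not provide — the paper sidesteps this by bounding only the first two terms of (\ref{1stD:y0:L1})--(\ref{1stD:y0:L2}) and dropping $-c\partial_{y}\rho$ — but promoting $\beta$ to that role changes the proposition being proved.)
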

The upper and lowers bound obtained in the previous result depend on the sign of $\beta:=\mu-\sigma^{2}/2$. Broadly,  the case $\beta>0$ represents a large $\sigma$ or small $\mu$ regime. In Figure~\ref{Graph:GBM:t0Bounds}, we show the {lower bound $\underline{t}$ as well as $\bar{t}$ when $\beta>0$ (left panel) and $\beta<0$ (right panel). As seeing therein, the lower bound $\underline{t}$ is rather rough in the case $\beta>0$, but performs extremely well when $\beta<0$. For the chosen parameter setting, $\bar{t}$ also provides a good approximation and turns out to upper bound $t_{0}^{*}$ in both cases, though we only have proof of this when $\beta>0$.} 
\begin{figure}
\centering
\includegraphics[height=7cm,width=0.48\textwidth]{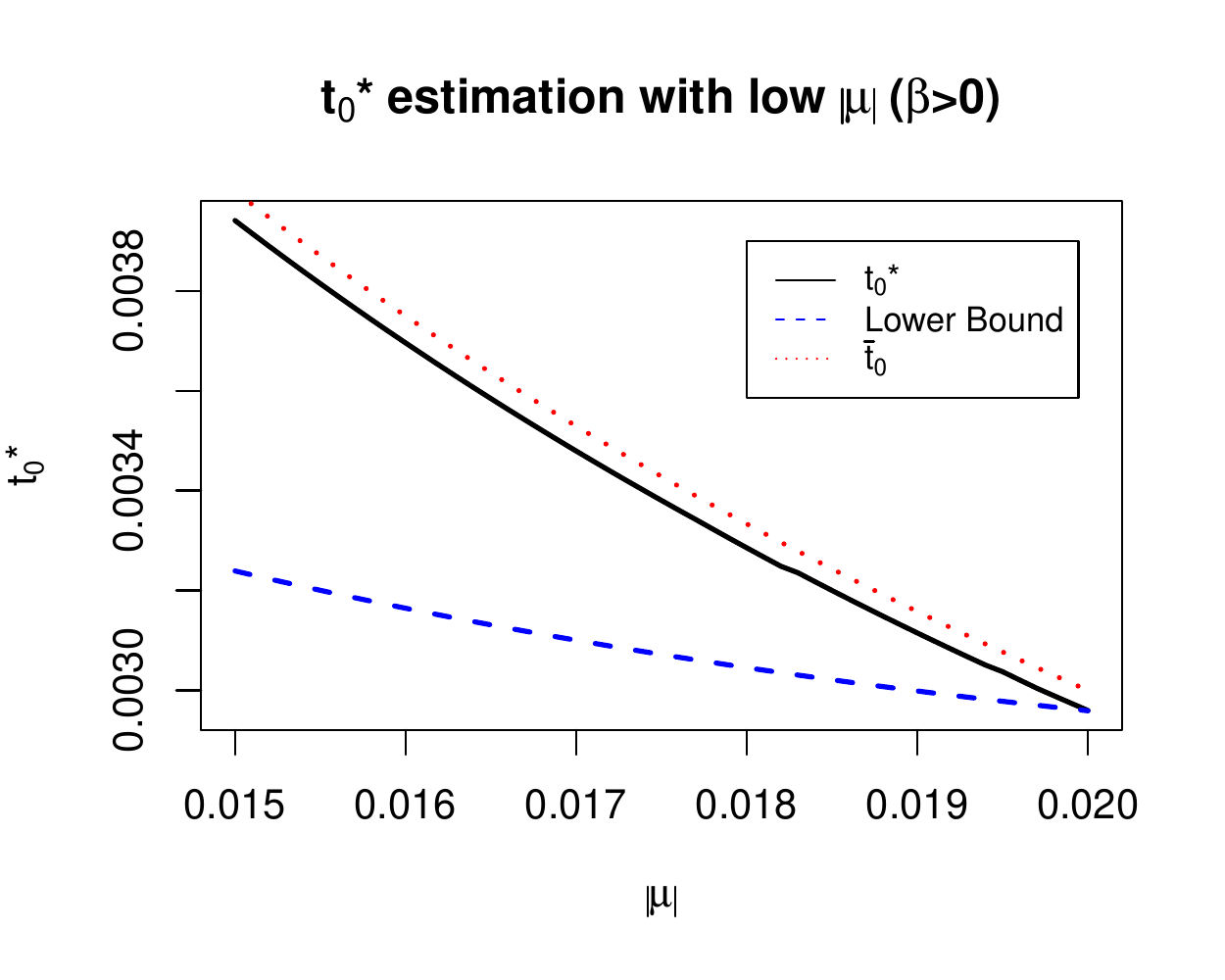}
\includegraphics[height=7cm,width=0.48\textwidth]{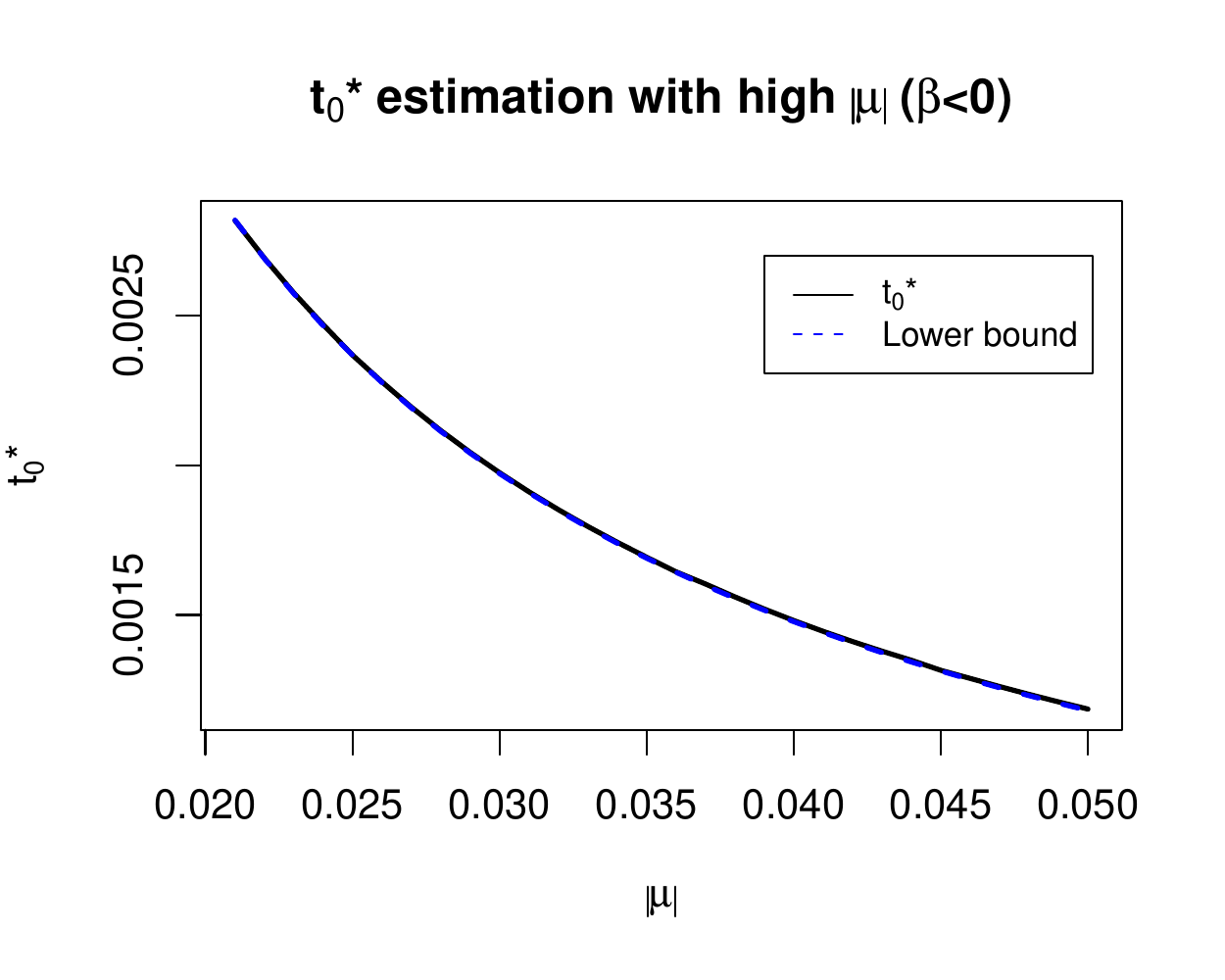}
\caption{Left: {$t^{*}_0$ (solid black), $\bar{t}$ (dotted red), and the lower bound $\underline{t}$ (dashed blue) when $\beta>0$. Right: $t^{*}_0$ (solid black), $\bar{t}$ (dotted red), and the lower bound $\underline{t}$ (dashed blue) when $\beta<0$}. In both graphs, $r+f=0.006$, $\sigma=0.2$, and $S_0=50$. }
\label{Graph:GBM:t0Bounds}
\end{figure}

Now, we proceed to analyze the behavior of the optimal solution when investor's time horizon, $t$, is large. {As in the Bachelier model, we assume that $\rho=\rho(y,t)$ is constant in $y$ and $t$, which for large $t$ is a reasonable assumption as argued in Section \ref{NwSect}.}
\begin{restatable}{theorem}{ystar_tinfty}
\label{tinfty:regime1}
Suppose that $\rho=\rho(y,t)$ is constant in $y$ and $t$. Let $\mu<0$ and {$y^{*}(t)$} be the optimal position as defined in Theorem \ref{gbmtknotexistence}.
Then, we have
\begin{equation*}
	 \lim_{t \to \infty}\frac{ \frac{y^*(t)}{t} - (-\mu + \frac{3}{2}\sigma^2 ) + \sigma \sqrt{-2\mu +2\sigma^2}}{\ln t /t} 
  =  \frac{\sigma}{2\sqrt{-2\mu+2\sigma^2}},
\end{equation*}
and, in particular, $y^*(t)/t \to -\mu + \frac{3}{2}\sigma^2  - \sigma\sqrt{-2\mu+2\sigma^2}$, as $t\to\infty$.
\end{restatable}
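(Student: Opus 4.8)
The plan is to work directly from the first-order condition $\partial_{y}C(y^{*}(t),t)=0$, using the closed form of Lemma~\ref{GBMCostFunc} with $\rho$ constant, and to extract both the leading term and the $\ln t/t$ correction of $v(t):=y^{*}(t)/t$ by a dominant-balance analysis of the normal tails. For large $t$ the existence and finiteness of an interior critical point $y^{*}(t)$ follows from Theorem~\ref{gbmtknotexistence} (with constant $\rho$, the conditions there hold for $t$ large). First I would write the cost as $C(y,t)=(S_{0}e^{-y}-(r+f)\rho)P(y,t)+R(y,t)+f-S_{0}$, where $P(y,t):=N(A_{1})+e^{y(1-2\mu/\sigma^{2})}N(A_{2})$ is the execution probability and $R(y,t):=S_{0}[e^{\mu t}N(A_{3})-e^{y(-2\mu/\sigma^{2}-1)+\mu t}N(A_{4})]$ is the residual, with $A_{1},\dots,A_{4}$ the four normal arguments of Lemma~\ref{GBMCostFunc}. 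Differentiating gives
\[
\partial_{y}C=-S_{0}e^{-y}P+(S_{0}e^{-y}-(r+f)\rho)\partial_{y}P+\partial_{y}R .
\]

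Next I would insert $y=vt$ and feed in the tail asymptotics $N(z)\to1$ and $\bar{N}(z)\sim\phi(z)/z$ as $z\to\infty$. Writing $b:=\sigma^{2}/2-\mu$, the key simplification is that $e^{y(1-2\mu/\sigma^{2})}\phi(A_{2})=\frac{1}{\sqrt{2\pi}}e^{g}$ and $\phi(A_{1})=\frac{1}{\sqrt{2\pi}}e^{g}$ carry the \emph{same} exponent $g:=-(y-bt)^{2}/(2\sigma^{2}t)$, from which a short computation yields $\partial_{y}P\sim-\frac{2v}{b+v}\,\frac{e^{g}}{\sigma\sqrt{2\pi t}}$. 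Comparing the exponential rates (coefficients of $t$) of all summands shows that only $-S_{0}e^{-y}$ (rate $-v$) and $-(r+f)\rho\,\partial_{y}P$ (rate $-(v-b)^{2}/(2\sigma^{2})$) can attain maximal order, while $\partial_{y}R$ and $S_{0}e^{-y}\partial_{y}P$ are of strictly smaller exponential order near the relevant $v$. Hence the first-order condition reduces, to leading order, to the balance
\[
S_{0}e^{-vt}=(r+f)\rho\,\frac{2v}{b+v}\,\frac{e^{g}}{\sigma\sqrt{2\pi t}}\,(1+o(1)).
\]

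Taking logarithms and setting $F(v):=\frac{(v-b)^{2}}{2\sigma^{2}}-v$ turns this into $tF(v)=-\tfrac12\ln t+O(1)$. This forces $F(v)\to0$; selecting the smaller root of $F$ gives the first-order limit $v_{\infty}:=-\mu+\tfrac32\sigma^{2}-\sigma q$ with $q:=\sqrt{-2\mu+2\sigma^{2}}$ (the two terms then share the exact rate $-v_{\infty}$, which is precisely the relation $v_{\infty}=(v_{\infty}-b)^{2}/(2\sigma^{2})$ singling out $v_{\infty}$). Finally, linearizing $F$ at $v_{\infty}$, where $F'(v_{\infty})=-q/\sigma$, in the equation $tF(v)=-\tfrac12\ln t+O(1)$ gives $v-v_{\infty}\sim\frac{\sigma}{2q}\,\frac{\ln t}{t}$, i.e. the claimed constant $\frac{\sigma}{2\sqrt{-2\mu+2\sigma^{2}}}$, and in particular $v\to v_{\infty}=-\mu+\tfrac32\sigma^{2}-\sigma\sqrt{-2\mu+2\sigma^{2}}$.

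The main obstacle is the rigorous, uniform control hidden behind the word ``reduces'': I must localize $v(t)$ a priori to a compact subinterval of $(0,b)$ (one checks $0<v_{\infty}<b$ since $q>\sqrt2\,\sigma$ and $(v_{\infty})$ sits between the roots) so that all the tail expansions and the dropping of $\partial_{y}R$ and $S_{0}e^{-y}\partial_{y}P$ are valid, and because the retained exponential $e^{g}$ depends on $v$ at order $t\,(v-v_{\infty})$, the entire $\ln t$ effect lives in the exponent and must be tracked self-consistently (the prefactor $\frac{2v}{b+v}$ and the $t^{-1/2}$ may be frozen at $v_{\infty}$, but $g$ may not). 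I would obtain the localization from a sign analysis of $\partial_{y}C$: the log-form of the balance shows $\partial_{y}C<0$ whenever $F(v)>-\frac{\ln t}{2t}+O(1/t)$ (in particular for all $v<v_{\infty}$) and $\partial_{y}C>0$ once $F(v)<-\frac{\ln t}{2t}+O(1/t)$, which traps the minimizer in an $O(\ln t/t)$ window just to the right of $v_{\infty}$ and simultaneously yields the correction term.
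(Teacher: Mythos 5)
Your proposal is correct, and its analytic core coincides with the paper's own proof: both reduce the first-order condition $\partial_y C(y^*(t),t)=0$ to a two-term balance between $S_0e^{-y}$ and the reflected normal-tail term carrying the exponent $g$, take logarithms to arrive at exactly the paper's relation (\ref{KLFSAp}) (your $tF(v)=-\tfrac12\ln t+O(1)$, whose roots $y_0^{\pm}$ are the paper's), select the smaller root because the optimizer satisfies $y^*(t)/t<-\mu+\sigma^2/2=b$, and extract the $\ln t/t$ correction (you linearize $F$ at $v_\infty$ using $F'(v_\infty)=-q/\sigma$; the paper divides (\ref{KLFSAp}) by $\ln t$; the two are equivalent). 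The genuine difference lies in the scaffolding. The paper localizes $y^*(t)$ by subsequence contradictions: it first proves $d_t\to\infty$ (treating $d_t\to\bar d$ finite and $d_t\to-\infty$ separately, and splitting on the sign of $\mu+\sigma^2/2$), and it invokes the separate Lemma \ref{gbm:ttoinfty:Lem2} to rule out $y^*(t)/t\to 0$ so that the logarithmic term in (\ref{KLFSAp}) stays bounded. You instead propose a sign-trapping argument ($\partial_y C<0$ where $F(v)>-\tfrac{\ln t}{2t}+O(1/t)$, $\partial_y C>0$ where the inequality reverses), which delivers localization and the correction in one stroke and avoids subsequences; this is a legitimate and arguably cleaner alternative. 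Be aware, however, of what it costs: the two-term reduction and the domination of $\partial_y R$ and $S_0e^{-y}\partial_y P$ must then hold \emph{uniformly} in $v$ over the whole half-line, with separate treatment of $v$ near $0$, of a band around $v=b$ (where $A_1$ is no longer large positive), and of $v>b$ (where $N(A_1)\to 0$ and the prefactor becomes $2v/(v^2-b^2)$) --- this is precisely the content hidden in the paper's $d_t\to\infty$ step, so that work is repackaged rather than eliminated. One point to make explicit: when $\mu+\sigma^2/2\ge 0$ the term of $\partial_y R$ containing $N(A_4)$ may have $N(A_4)\to 1$, and its rate $-2v(\mu+\sigma^2/2)/\sigma^2+\mu$ is dominated near $v_\infty$ only because $v_\infty<\sigma^2/2$ in that regime; this is the analogue of the paper's Case 2 and is the one spot where your rate bookkeeping requires an actual computation rather than inspection.
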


\par  When the investor can wait for a long period of time, the previous result provides a suitable approximation for the optimal placement of a limit order. However, there is a shortcoming in this analysis because we are not considering time value of money.

\par {For the final asymptotic regime, we consider the behavior of the optimal placement problem in a low volatility case, i.e., $\sigma\to{}0$. This theorem gives us the first and second order approximation of optimal placement solution when the volatility is small. If the the investor choose to participate in a market, which has the relatively low price volatility}, this approximation can be useful. For simplicity, we again assume that {$\rho=\rho(x,t)$ is constant in $x$ and $t$}.
\begin{restatable}{theorem}{deltaApprox}
\label{sig0:delta_approx}
Suppose that $\rho=\rho(y,t)$ is constant in $y$ and $t$ and let $\mu<0$.
	{Let us denote the expected cost function as $C(y,\sigma)$, i.e., a function of $y$ and $\sigma$. Then, there exists a $\sigma_{0}>0$ such that for each $0<\sigma<\sigma_{0}$, there exists $y^*(\sigma)$ in $(0,-\mu t)$ such that $C(y^{*}(\sigma),\sigma)\leq{}C(y,\sigma)$, for all $y>0$. Furthermore, we have that:
		\begin{equation*}
		y^*(\sigma)= -\mu t - \sqrt{2\sigma^2 t \ln (1/ \sigma)} + \frac{a}{2\ln (1/\sigma)}\sqrt{2\sigma^2 t \ln (1/ \sigma)}+o\left(\frac{\sigma}{\sqrt{\ln(1/\sigma)}}\right),
	\end{equation*}
	where $a:=\ln S_0 + \mu t + \frac{1}{2} \ln t - \ln ({\rho(r+f)})+\frac{1}{2} \ln 2\pi$.}
\end{restatable}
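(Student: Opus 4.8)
The plan is to minimize the exact cost of Lemma~\ref{GBMCostFunc} by reducing it, near the vanishing-volatility critical level, to a one-dimensional balance and then extracting the expansion through a logarithmic bootstrap. Writing $\ell:=S_{0}e^{-y}$ for the limit price and $w:=-\mu t-y$ for the displacement below the deterministic terminal level, the degenerate limit $S_{s}=S_{0}e^{\mu s}$ shows that the running minimum reaches $\ell$ at time $\approx t-w/|\mu|<t$, so the execution probability $P(Y_{t}\le\ell)$ passes from $1$ (when $y<-\mu t$) to $0$ (when $y>-\mu t$). I therefore expect $y^{*}(\sigma)$ to lie just below $-\mu t$, i.e. $w>0$ small. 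First I would rewrite Eq.~(\ref{eq:CostContinuous}) with $x=S_{0}-\ell$ as $C=(\ell-S_{0}-\rho(r+f))P(Y_{t}\le\ell)+E[(S_{t}-S_{0})\mathbf{1}_{\{Y_{t}>\ell\}}]+f$, and, using the strong Markov property at the first passage time $\tau_{\ell}$ of $S$ to $\ell$, record that $E[S_{t}\mathbf{1}_{\{Y_{t}\le\ell\}}]=E[\ell e^{\mu(t-\tau_{\ell})}\mathbf{1}_{\{\tau_{\ell}\le t\}}]=S_{0}e^{\mu t}P(Y_{t}\le\ell)(1+o(1))$. With $q:=P(Y_{t}>\ell)$ this collapses the cost to $C=\mathrm{const}+S_{0}e^{\mu t}w+\rho(r+f)\,q(w)+(\text{lower order})$, a balance between the linear price penalty and the non-execution tail.

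Then I would establish existence and localization directly, rather than through Theorem~\ref{gbmtknotexistence}. On $[0,-\mu t]$ the leading profile has one-sided derivative $S_{0}e^{\mu t}-\rho(r+f)\phi(0)/(\sigma\sqrt t)<0$ at $w=0^{+}$ for $\sigma$ small, becomes increasing once $q(w)$ is super-exponentially small, and never drops below the pure-market value $S_{0}e^{\mu t}-S_{0}+f$ attained for $y\ge -\mu t$ (which the interior value undercuts by $\approx\rho(r+f)$); this furnishes $\sigma_{0}$ and an interior minimizer $y^{*}(\sigma)\in(0,-\mu t)$ that is global over all $y>0$. The first-order condition $\partial_{w}C=0$ reads $S_{0}e^{\mu t}+\rho(r+f)\,q'(w)=0$, and with $q(w)=N^{c}(A_{1})(1+o(1))$ for $A_{1}:=\frac{w}{\sigma\sqrt t}+\frac{\sigma\sqrt t}{2}$ together with the exact identity $\frac{d}{dw}N^{c}(A_{1})=-\phi(A_{1})/(\sigma\sqrt t)$, it collapses to
\[
\phi(A_{1})=\frac{\sigma\sqrt t\,S_{0}e^{\mu t}}{\rho(r+f)}.
\]

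Taking logarithms converts this to $\tfrac12 A_{1}^{2}=\ln(1/\sigma)-a$, in which the constant $a=\ln S_{0}+\mu t+\tfrac12\ln t-\ln(\rho(r+f))+\tfrac12\ln 2\pi$ appears exactly as the sum of the $O(1)$ pieces: $\ln S_{0}+\mu t$ from $S_{0}e^{\mu t}$, $\tfrac12\ln t$ from $\sigma\sqrt t$, $-\ln(\rho(r+f))$ from the penalty coefficient, and $\tfrac12\ln 2\pi$ from the normalizing constant of $\phi$. Hence $A_{1}=\sqrt{2\ln(1/\sigma)}\bigl(1-\tfrac{a}{2\ln(1/\sigma)}\bigr)+o(1/\sqrt{\ln(1/\sigma)})$, and inverting $w=\sigma\sqrt t\,A_{1}-\tfrac{\sigma^{2}t}{2}$, with the last term $O(\sigma^{2})=o(\sigma/\sqrt{\ln(1/\sigma)})$, yields $w=\sqrt{2\sigma^{2}t\ln(1/\sigma)}-\frac{a}{2\ln(1/\sigma)}\sqrt{2\sigma^{2}t\ln(1/\sigma)}+o(\sigma/\sqrt{\ln(1/\sigma)})$, which is precisely $-\mu t-y^{*}(\sigma)$ as claimed.

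The hard part will be the rigorous error control, not the dominant balance. I must show that each discarded contribution is genuinely $o(\sigma/\sqrt{\ln(1/\sigma)})$ at the optimum: the second summand $e^{-2\mu y/\sigma^{2}+y}N(A_{2})$ of $P(Y_{t}\le\ell)$, whose $1/\sigma^{2}$ exponentials cancel to leave an $O(\sigma^{2})$ term with derivative $O(\sigma\sqrt{\ln(1/\sigma)})$; the $O(w^{2})$ remainder in $\ell=S_{0}e^{\mu t}(1+w+\cdots)$; the $o(1)$ relative error in the first-passage approximation of the market term; and the full Mills-ratio expansion $N^{c}(u)=\frac{\phi(u)}{u}(1+O(u^{-2}))$ at $A_{1}\asymp\sqrt{2\ln(1/\sigma)}$. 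The safest route is to differentiate the closed form of Lemma~\ref{GBMCostFunc} directly, sandwich $\partial_{y}C$ between the leading expression and controlled remainders, and run a bootstrap: insert the provisional scale $w\asymp\sqrt{2\sigma^{2}t\ln(1/\sigma)}$ to check self-consistency of each remainder estimate, then upgrade to the two-term expansion and verify, via a sign analysis of $\partial_{y}C$, that the balance point is a strict local---hence global---minimizer.
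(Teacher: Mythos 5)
Your proposal arrives at the correct expansion, and its analytic core is the same as the paper's: your balance $\phi(A_1)=\sigma\sqrt{t}\,S_0e^{\mu t}/(\rho(r+f))$ with $A_1=\frac{w}{\sigma\sqrt t}+\frac{\sigma\sqrt t}{2}$ is literally the paper's first-order condition $\phi\bigl(\frac{-\delta(\sigma)+\sigma^2t/2}{\sigma\sqrt t}\bigr)\frac{\rho(r+f)}{\sigma\sqrt t}\to S_0e^{\mu t}$ after the substitution $w=-\delta$, and the constant $a$ appears identically after taking logarithms. Where you genuinely differ is in the supporting structure. The paper never leaves the closed form of Lemma \ref{GBMCostFunc}: it proves localization through two derivative-sign lemmas (Lemma \ref{sigma0:uniqueness}: $\partial_y C>0$ for all $y>-\mu t$ when $\sigma$ is small; Lemma \ref{sig0:locmin}: a local minimum exists in $(0,-\mu t)$ and $y^*(\sigma)\nearrow-\mu t$), and it extracts the two orders of the expansion sequentially (first $\delta\sim-\sqrt{2\sigma^2t\ln(1/\sigma)}$, then a bootstrap for the second term), controlling errors with the Mills-ratio functions $\mathcal{E}_1$ of (\ref{ErrorFuncDef}). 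You instead reduce the cost probabilistically via the strong Markov property at $\tau_\ell$ to $C\approx\mathrm{const}+S_0e^{\mu t}w+\rho(r+f)q(w)$, localize by comparing values (the interior minimum undercuts the market cost by $\approx\rho(r+f)$, the region $y\geq-\mu t$ by less), and invert $\tfrac12A_1^2=\ln(1/\sigma)-a$ exactly, getting both orders at once. Your route buys transparency and a one-shot inversion; the paper's route buys exactly the thing you defer as ``the hard part,'' namely explicit, differentiable error bounds — and your own fallback plan (differentiate the closed form, sandwich $\partial_y C$, bootstrap) essentially reproduces the paper's method, so the two proofs converge at the rigorous level.

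One inaccuracy to fix in your localization step: the assertion that the cost for $y\geq-\mu t$ ``never drops below the pure-market value'' is false near $y=-\mu t$. Since $P(Y_t\leq S_0e^{\mu t})\to 1/2$ as $\sigma\to 0$ (reflection formula with the $\sigma^2/2$ correction), the cost just above $-\mu t$ dips below the market value by up to $\approx\rho(r+f)/2$. The comparison survives because your own strong-Markov identity gives the exact bound $C\geq S_0e^{\mu t}-S_0+f-\rho(r+f)P(Y_t\leq\ell)$ for all $y$, $P(Y_t\leq\ell)$ is decreasing in $y$, and $\sup_{y\geq-\mu t}P(Y_t\leq\ell)\to 1/2<1$, while the interior minimum undercuts by $\rho(r+f)(1-o(1))$; but the claim must be stated this way, not as a bound by the market value itself.
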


\section{Computation and Behavior of $\rho(x,t)$}\label{NwSect}

\par In Sections 3 and 4, we made several assumptions regarding $\rho(x,t)$, which {we recall was defined as} the probability that a bid limit order placed at level ${S}_{0}-x$ at time $0$ is executed {before time $t$} during the first time period {when} the best bid price is at the level ${S}_{0}-x$, given that the latter event happens. The purpose of this section is to investigate the behaviour of $\rho(x,t)$, and verify the plausibility of our assumptions, {both theoretically and empirically.}

As stated in Remark \ref{CmptRho}, a reasonable model for $\rho(x,t)$ is given in Eq.~(\ref{ExprRho0}). This depends on the presumed order flow of cancellations at each level (which determines $P({N^{b,x}_{s}}=j)$), the presumed order flow of orders at the best bid and ask (which determines $\alpha_{t}(i,j)$), the distribution $f^{a}$ (which can be estimated from real LOB data; {see Figure \ref{Fai_distibution} below}), and the distribution $f_{\tau}(s|0<\tau<t)$, where $\tau$ is the first time that the best ask price hits the level $S_{0}-x$. In the spirit of the present work, in what follows, we assume that the best ask price  follows either the Brownian motion (BM) or the geometric Brownian motion (GBM). In that case, $f_{\tau}(s|0<\tau<t)$ can be explicitly computed (see the details in Appendix \ref{DtlsRho}). 

The following result shows that the first condition in {(\ref{Sec5:Conditions0b}) (respectively, (\ref{ConditionsGBM2})) is satisfied in our BM (respectively, GBM) setting.}
\begin{proposition}
	\label{Rhoxt:Diff:xINfty}
	Let $\rho(x,t)$ be defined as in Eq.~(\ref{ExprRho0}) (respectively, $\tilde{\rho}(y,t):=\rho(S_{0}-S_{0}e^{-y},t)$), and let the best ask price follows a {BM (respectively, a GBM)}.  Suppose that {$Q_{x}^{b}(0)=0$} for large enough $x$. Then, 
\begin{equation}\label{ADE0}
	\limsup_{x\to\infty}\frac{\partial \rho(x,t)}{\partial x}\leq 0, \quad \limsup_{y\to\infty}\frac{\partial \tilde{\rho}(y,t)}{\partial y}\leq{}0.
\end{equation}
\end{proposition}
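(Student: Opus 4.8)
The plan is to analyze the structure of the formula for $\rho(x,t)$ in Eq.~(\ref{ExprRho0}) and show that, once $x$ is large enough that $Q^b_x(0)=0$, the only $x$-dependence entering $\rho$ comes through the law of the first passage time $\tau$ to level $S_0-x$, and that the derivative of $\rho$ in $x$ is governed entirely by how this passage-time density shifts with the target level. First I would specialize the double sum in (\ref{ExprRho0}): when $Q^b_x(0)=0$, the inner index $j$ can only take the value $0$, and the quantity $N^{b,x}_s$ (the number of the initial outstanding bid orders cancelled by time $s$) is identically zero because there are no initial orders to cancel, so $P(N^{b,x}_s=0)=1$. Consequently the formula collapses to
\begin{equation*}
	\rho(x,t)=\sum_{i=1}^{\infty} f^{a}(i)\int_{0}^{t} f_{\tau}(s\mid 0<\tau<t)\,\alpha_{t-s}(i,1)\,ds,
\end{equation*}
so that, writing $g(s):=\sum_{i=1}^{\infty}f^{a}(i)\,\alpha_{t-s}(i,1)$, a bounded nonnegative weight independent of $x$, we obtain $\rho(x,t)=\int_0^t f_\tau(s\mid 0<\tau<t)\,g(s)\,ds$. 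Hence differentiating in $x$ reduces to differentiating the conditional first-passage density $f_\tau(s\mid 0<\tau<t)$ with respect to the barrier level, which is the crux of the argument.

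Next I would make this dependence explicit. For a Brownian motion the density of the first passage time to a barrier at distance $x$ is the inverse-Gaussian-type kernel, and conditioning on $\{0<\tau<t\}$ divides by $P(0<\tau<t)$, the probability that the running minimum reaches $S_0-x$ before $t$; both the unconditional density and this normalizing probability are available in closed form from the reflection principle (these are exactly the Gaussian expressions already appearing in Lemma~\ref{BMCostFunc}). I would write $\rho(x,t)=\int_0^t f_\tau(s,x)\,g(s)\,ds \,/\, P(\tau<t;x)$, where I now display the explicit $x$-dependence, and compute $\partial_x\rho$ via the quotient rule. Because $g$ is bounded by $1$ and nonnegative, I would bound the numerator terms and extract the leading asymptotics as $x\to\infty$: both $f_\tau(s,x)$ and $P(\tau<t;x)$ decay like $\exp(-x^2/(2\sigma^2 t))$ up to polynomial and drift-dependent factors, and in their $x$-derivatives the dominant contribution is the factor $-x/(\sigma^2 s)$ (respectively a comparable negative factor from the normalizer), which is eventually negative. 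The key observation is that for large $x$ every density $f_\tau(s,x)$ is ultimately \emph{decreasing} in $x$ pointwise in $s$ on the relevant range, and the normalizing survival probability is likewise eventually monotone in a way that does not reverse the sign, so the $\limsup$ of the whole derivative is at most $0$. For the GBM case I would carry out the identical reduction after the substitution $y\mapsto x=S_0-S_0e^{-y}$, using that the best ask under GBM hits $S_0 e^{-y}$ exactly when the underlying Brownian motion with drift hits the level $-y/\sigma$ shifted by the drift, so the conditional passage density again has a Gaussian closed form (matching the expressions in Lemma~\ref{GBMCostFunc}); differentiating in $y$ and sending $y\to\infty$ gives the second inequality in (\ref{ADE0}).

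The main obstacle I anticipate is controlling the derivative of the \emph{conditional} density rather than the unconditional one: the conditioning introduces the normalizer $P(\tau<t;x)$ in the denominator, whose $x$-derivative appears with the opposite sign in the quotient rule, so one must verify that the (positive) contribution from differentiating $1/P(\tau<t;x)$ does not dominate the (negative) contribution from differentiating the numerator as $x\to\infty$. My strategy to handle this is to note that, asymptotically, both the numerator weight $\int_0^t f_\tau(s,x)g(s)\,ds$ and the normalizer $P(\tau<t;x)$ are dominated by their behavior near the Gaussian decay rate $\exp(-x^2/(2\sigma^2 t))$, so the ratio $\rho(x,t)$ itself tends to a limit determined by $g$ evaluated at the passage time that minimizes the exponent; the derivative of the ratio then carries a prefactor that is $O(x)\cdot\exp(-cx^2)\cdot(\text{bounded})$ whose sign is controlled by the common decaying exponential and is eventually nonpositive. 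I would also need the technical assumption $0<\tau<t$ to keep the conditional density well-defined and to ensure $P(\tau<t;x)>0$ for each fixed $x$, which holds for every finite $x$. I expect the bulk of the work to be the bookkeeping of the Gaussian tail estimates and dominated-convergence justifications for differentiating under the integral sign, both of which are routine once the collapse of (\ref{ExprRho0}) to a single $x$-dependent kernel has been established.
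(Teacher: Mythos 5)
Your proposal follows essentially the same route as the paper's proof: when $Q^b_x(0)=0$ the formula (\ref{ExprRho0}) collapses to $\sum_{i}f^{a}(i)\int_0^t f_\tau(s|0<\tau<t)\,\alpha_{t-s}(i,1)\,ds$, one differentiates under the integral sign, and the claim reduces to showing that $\partial_x f_\tau(s|0<\tau<t)$ is eventually nonpositive pointwise in $s$. The paper executes your quotient-rule step in closed form, obtaining an integrand proportional to
\[
\frac{1}{\sigma\sqrt{s^3}}\,\phi\!\left(\frac{x+\mu s}{\sigma\sqrt{s}}\right)\phi\!\left(\frac{x+\mu t}{\sigma\sqrt{t}}\right)\frac{2x}{\sigma^2\sqrt{t}}\left(1-\frac{t}{s}\right)\Big/ P(\tau_x<t)^2 ,
\]
whose negative factor $\left(1-\frac{t}{s}\right)$ for $s<t$ is precisely the exponent comparison ($1/s>1/t$, i.e.\ the passage density decays faster in $x$ than the normalizer) that underlies your asymptotic sign argument.
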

The following result shows {that, under relatively mild assumptions,} the $\liminf$ in the second conditions of Eqs.~(\ref{Sec5:Conditions0b}) and (\ref{ConditionsGBM2}) {remains positive and gives explicit lower bounds.}
\begin{proposition}
	\label{Rhoxtx2:xInfty:General}
Let the best ask price follows a {BM or a GBM}. Also, let $\sigma_a^{i}$ and $\sigma_b^{\ell}$ be the time until the best ask and bid {queues} are  depleted when the initial sizes of the best bid and ask queues are $i$ and  $\ell$, respectively.   Suppose that {$Q_{x}^{b}(0)=0$} for large enough $x$. Then, when $\sigma_a^{i}$ and $\sigma_b^{\ell}$ are independent of each other and the density $g_b^{1}(s)$ of $\sigma_b^{1}$ is bounded and $\mathcal{C}^1$ in $[0,t]$, we have that
\begin{equation}\label{ADE0b}
 \liminf_{x\to\infty}\rho(x,t)x^2 \geq 2 g_b^{1}(0)\sigma^2 t^2,\quad
 \liminf_{y\to\infty}y^{2}\tilde{\rho}(y,t) \geq 2 g_b^{1}(0)\sigma^2 t^2 .
 \end{equation}
\end{proposition}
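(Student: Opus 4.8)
The plan is to exploit the hypothesis that $Q^b_x(0)=0$ for all large $x$, which collapses the double sum in (\ref{ExprRho0}): the inner index is forced to $j=0$, the factor $P(N^{b,x}_s=0)=1$, and the bid queue size becomes $\ell=Q^b_x(0)-j+1=1$ (just the investor's own order). Writing $\bar\alpha_u:=\sum_{i\geq1}f^a(i)\alpha_u(i,1)$ and recalling $f_\tau(s\mid 0<\tau<t)=f_\tau(s)/P(0<\tau<t)$, this yields the clean representation $\rho(x,t)=\E[\bar\alpha_{t-\tau}\mid 0<\tau<t]$, where $\tau$ is the first passage time of the price to the level $S_0-x$. (In the GBM case the same identity holds for $\tilde\rho(y,t)$, with $\tau$ now the first passage of the log-price.) So the entire statement reduces to two ingredients: the behaviour of $\bar\alpha_u$ for small $u$, and the concentration of the conditioned first-passage time $\tau$ as $x\to\infty$.

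First I would analyze $\bar\alpha_u$ near $u=0$. Using the assumed independence of $\sigma_a^i$ and $\sigma_b^1$ and that $\sigma_b^1$ has density $g_b^1$, one has $\alpha_u(i,1)=P(\sigma_b^1<u,\,\sigma_b^1<\sigma_a^i)=\int_0^u g_b^1(s)\,P(\sigma_a^i>s)\,ds$, hence $\bar\alpha_u=\int_0^u g_b^1(s)\,\bar G_a(s)\,ds$ with $\bar G_a(s):=\sum_i f^a(i)\,P(\sigma_a^i>s)$. Since $P(\sigma_a^i>s)\to1$ as $s\to0$ for each $i\geq1$ and $\sum_i f^a(i)=1$, dominated convergence gives $\bar G_a(0^+)=1$; combined with continuity of $g_b^1$ this yields $\bar\alpha_u\sim g_b^1(0)\,u$ as $u\to0$. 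For the desired $\liminf$ I only need the one-sided bound: for every $\eps>0$ there is $\delta>0$ with $\bar\alpha_u\geq(g_b^1(0)-\eps)\,u$ for $0<u<\delta$, so that $\rho(x,t)\geq(g_b^1(0)-\eps)\,\E[(t-\tau)\mathbf{1}_{\{t-\tau<\delta\}}\mid 0<\tau<t]$.

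The heart of the argument is then the first-passage asymptotic $\lim_{x\to\infty}x^2\,\E[(t-\tau)\mathbf{1}_{\{t-\tau<\delta\}}\mid 0<\tau<t]=2\sigma^2 t^2$. Using the explicit inverse-Gaussian density $f_\tau(s)=\frac{x}{\sigma\sqrt{2\pi}}\,s^{-3/2}\exp\!\big(-\tfrac{x^2}{2\sigma^2 s}-\tfrac{\mu x}{\sigma^2}-\tfrac{\mu^2 s}{2\sigma^2}\big)$, I would observe that for large $x$ the factor $\exp(-x^2/(2\sigma^2 s))$ forces all the mass onto $s$ near $t$. Substituting $s=t-u$ and expanding $\tfrac{x^2}{2\sigma^2(t-u)}=\tfrac{x^2}{2\sigma^2t}+\tfrac{x^2}{2\sigma^2 t^2}\,u+O(x^2u^2)$ shows that near $u=0$ one has $f_\tau(t-u)\approx f_\tau(t)\,e^{-\lambda u}$ with effective rate $\lambda:=x^2/(2\sigma^2 t^2)\to\infty$. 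Rescaling $v=\lambda u$ and applying a Watson's-lemma/dominated-convergence argument then gives $P(0<\tau<t)\sim f_\tau(t)/\lambda$ and $\int_0^\delta u\,f_\tau(t-u)\,du\sim f_\tau(t)/\lambda^2$, so their ratio is $\sim 1/\lambda=2\sigma^2 t^2/x^2$. Multiplying by $x^2$ and then letting $\eps\to0$ delivers $\liminf_{x\to\infty}x^2\rho(x,t)\geq 2g_b^1(0)\sigma^2 t^2$.

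The GBM bound follows from the identical computation: there $\tau$ is the first passage of the log-price, a Brownian motion with drift $\mu-\sigma^2/2$ and the same volatility $\sigma$, to the level $-y$, so $f_\tau$ has the same form with $y$ in place of $x$ and $\mu-\sigma^2/2$ in place of $\mu$. Since the drift enters the rate $\lambda=y^2/(2\sigma^2t^2)$ only at lower order, one obtains $y^2\tilde\rho(y,t)\to 2g_b^1(0)\sigma^2 t^2$ in the same way. I expect the main obstacle to be making the Laplace step fully rigorous: one must justify replacing $f_\tau(t-u)$ by its exponential approximation uniformly enough to control both the $s^{-3/2}$ prefactor and the quadratic correction $O(x^2u^2)$ in the exponent over the window $(0,\delta)$, and to bound the tail where $v=\lambda u$ is large, so that the rescaled integrals genuinely converge to $\int_0^\infty e^{-v}\,dv$ and $\int_0^\infty v\,e^{-v}\,dv$. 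Everything else is bookkeeping.
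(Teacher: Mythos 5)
Your proposal is correct, and its skeleton matches the paper's: both reduce to the representation (\ref{DfnAlphaQ0}) once $Q_x^b(0)=0$, both extract the slope $g_b^1(0)$ of $u\mapsto \alpha_u(i,1)$ at $u=0^+$ (you through the identity $\alpha_u(i,1)=\int_0^u g_b^1(s)P(\sigma_a^i>s)\,ds$, the paper through the mean value theorem and $\partial_u\alpha_u(i,1)|_{u=0}=g_b^1(0)$), and both then need the first-passage limit $x^2\,\E\left[(t-\tau)\mathbf{1}\{t-\tau<\delta\}\mid 0<\tau<t\right]\to 2\sigma^2t^2$. The genuine difference is how that limit is obtained. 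The paper computes $\int_0^t f_\tau(s|0<\tau<t)(t-s)\,ds$ exactly in terms of Gaussian cdfs and then expands via the bounds (\ref{imp.inequ}); this yields explicit finite-$x$ expressions and the quantitative estimate $\mathcal{O}\bigl(x^2e^{-x^2\delta/(2\sigma^2t(t-\delta))}\bigr)$ for the discarded region, but the closed form divides by $(-\mu)$, so it implicitly uses $\mu\neq 0$ (indeed $\mu<0$, as in the paper's applications). You instead run a Laplace/Watson endpoint analysis of the inverse-Gaussian density with the rescaling $v=\lambda u$, $\lambda=x^2/(2\sigma^2 t^2)$. This is more robust: no closed-form integration is needed, the drift-dependent factors cancel in the ratio $\int_0^\delta u f_\tau(t-u)\,du\big/P(0<\tau<t)$ so the sign (or vanishing) of $\mu$ is irrelevant, and the argument transfers verbatim to the GBM case (drift $\mu-\sigma^2/2$) and to any model whose conditioned first-passage density has this endpoint structure. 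The price is the uniformity bookkeeping you yourself flag, and it is all standard: on $u\in(0,\delta)$ with $\delta<t/2$ the prefactor $(t/(t-u))^{3/2}$ and the exponent corrections are dominated (the $O(x^2u^2)$ term enters with a favorable sign), and the region $u>\delta$ contributes $P(0<\tau\le t-\delta)$, which is exponentially negligible relative to $f_\tau(t)/\lambda$ by the same comparison of exponents the paper uses. So your plan closes, and the final passage from the $\eps$-lower bound to the stated $\liminf$ is exactly as in the paper.
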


\begin{rem}\label{Rhoxtx2:xInfty:GeneralRem}
	From the result of Proposition~\ref{Rhoxtx2:xInfty:General},
	 the second condition in (\ref{Sec5:Conditions0b}) is met when the density $g_b^{1}$ is bounded and $g_b^{1}(0)> (-\mu)/(r+f)$.  In the Appendix \ref{DtlsRho}, we show that, under a Poissonian order flow, $g_b^1(0)=\mu_b+\theta_b$, the net depletion rate of the best bid queue (see the details in Appendix \ref{DtlsRho}). 
	 The second condition in (\ref{ConditionsGBM2}) is met whenever $g_b^{1}(0)>0$.
\end{rem}

\par We now turn to the plausibility of {the conditions} in (\ref{Sec5:Conditions0}) and (\ref{ConditionsGBM}), which is done numerically. For simplicity, we only present the analyses for the GBM, though the same conclusions hold for the BM. For the order flow, we assume one of the simplest (and yet relevant) settings, in which the arrival,  cancellation, and execution  of limit orders follow independent Poisson processes with respective intensity rates $\lambda_{\ell}$, $\theta_{\ell,k}$, and $\mu_{\ell}$, where $\ell$ is either $a$ or $b$, depending on whether the order is in {the} ask or bid side, and $k$ is the number of ticks away from the best bid or ask price (we refer to  Appendix \ref{DtlsRho} for the details).  The arrival rates $\lambda_{\ell}$, $\theta_{\ell,1}$, and $\mu_{\ell}$ are given in Table \ref{Table:Rates} below, while the initial LOB profile $k\to{}Q^b_{k\eps}(0)$ is taken as shown in the left panel of Figure \ref{rhoxt_DIFFQY}, which is consistent with the average depth profile of \cite{abergel2013mathematical} and \cite{cont2010stochastic}. The chosen values for $\theta_{\ell,k}$ ($k=2,3,\dots$) are borrowed from \cite{cont2010stochastic}. Based on the just stated assumptions, we compute $\rho(x,t)$ with parameter values estimated from real LOB data. The details of the computation of $\rho(x,t)$ are given in the Appendix \ref{DtlsRho}. 

The graphs of $i\to{\rho}(\eps i,t)$ ($i$ represents number of ticks and $\eps=0.01$) for $t=30$, $60$, and $90$ sec. are shown in the right panel of Figure~\ref{rhoxt_DIFFQY}. As shown therein, $\rho(\eps,t) < \rho(2\eps,t)$ and, even,  $\rho(2\eps,t) < \rho(3\eps,t)$, which justifies our assumption $\partial_y \tilde{\rho}(0^{+},t) >0$.
The conditions in (\ref{ADE0}), which are already proved in Proposition~\ref{Rhoxt:Diff:xINfty}, are also evident from Figure~\ref{rhoxt_DIFFQY}. Regarding the conditions in (\ref{ADE0b}), the left panel of Figure~\ref{rhoxt_timesx} suggests that $y\to{}e^{y}y^{2}\tilde{\rho}(y,t)$ drifts toward $\infty$ and, thus, the second condition in (\ref{ADE0b}) is reasonable. Let us remark that, as $t$ gets larger, $\rho(x,t)$ get flatter, which justifies to take $\rho(x,t)$ approximately constant in $x$ for large values of $t$ as we did in our large horizon asymptotics of Sections 3 and 4.
\begin{figure}
	\includegraphics[width=0.5\textwidth]{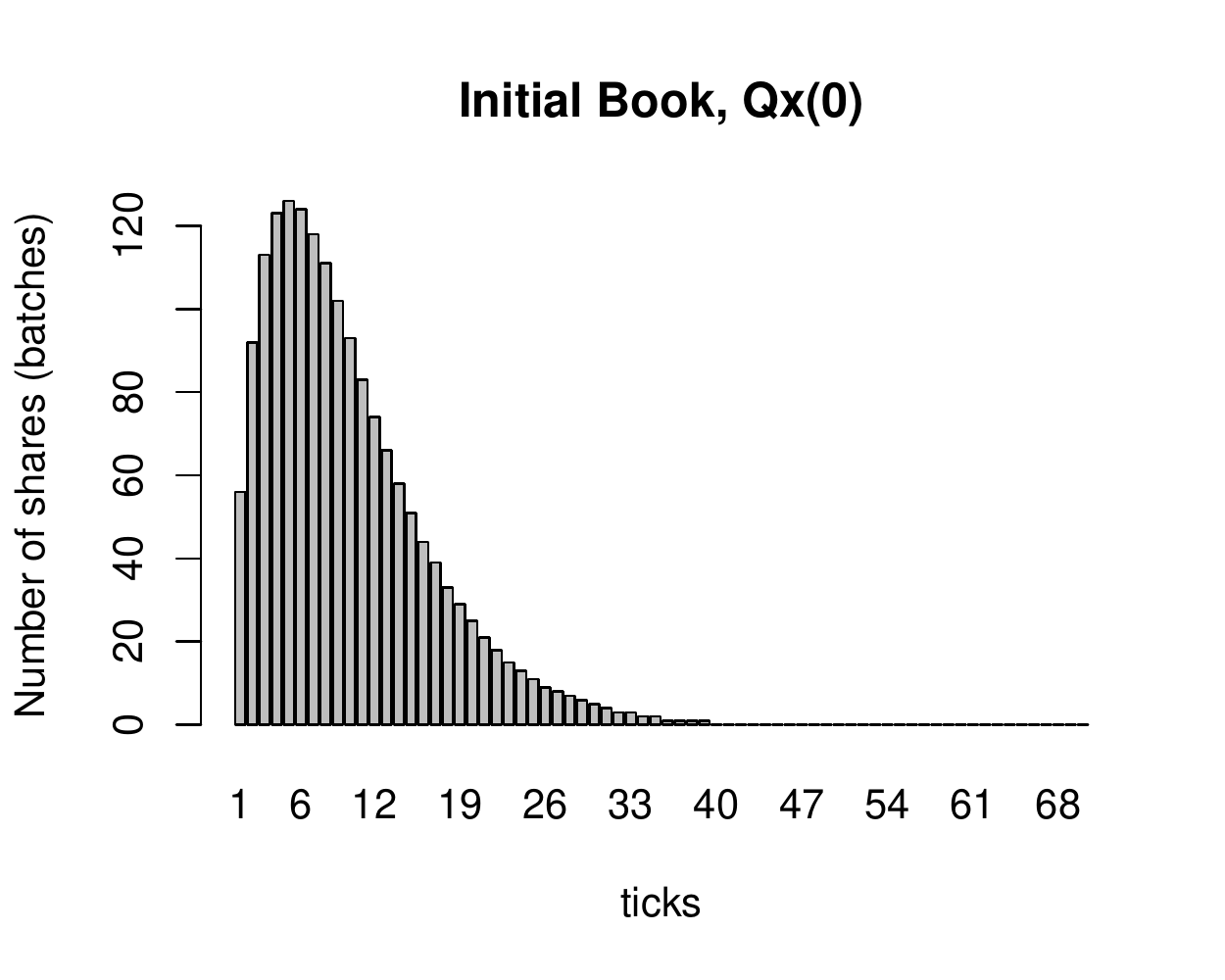}
	\includegraphics[width=0.5\textwidth]{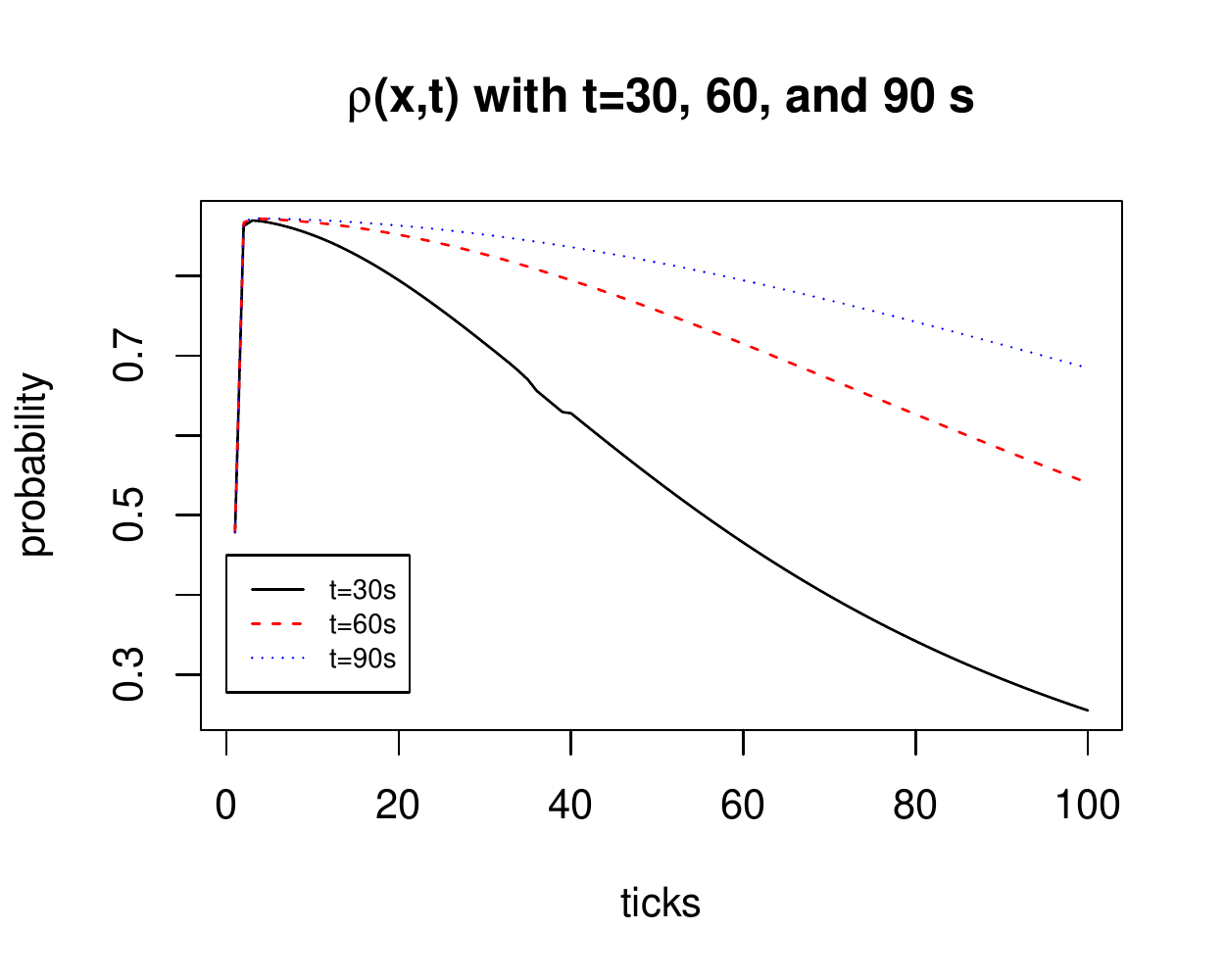}
	\caption{Left Panel: Initial LOB profile $i\to{}Q_{i\eps}^b(0)$; Right Panel: Graphs of $i\to{}{\rho}(\eps i,t)$ with $\eps=0.01$ and $t=30$ sec (black line), $t=60$ sec (red dashed line), and $t=90$ sec (blue dotted line). }
\label{rhoxt_DIFFQY}
\end{figure}

\begin{figure}
	\includegraphics[width=0.5\textwidth]{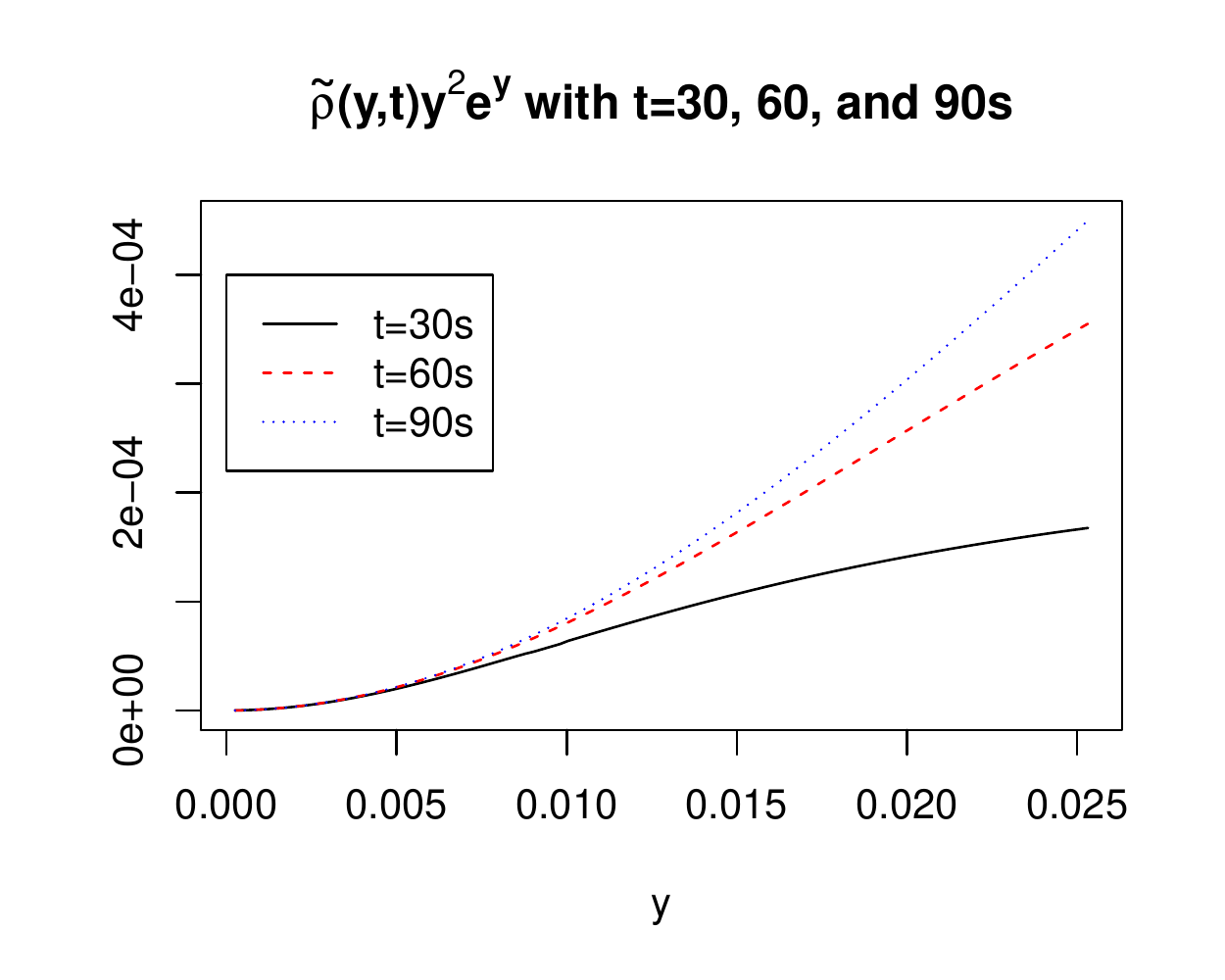}
	\includegraphics[width=0.5\textwidth]{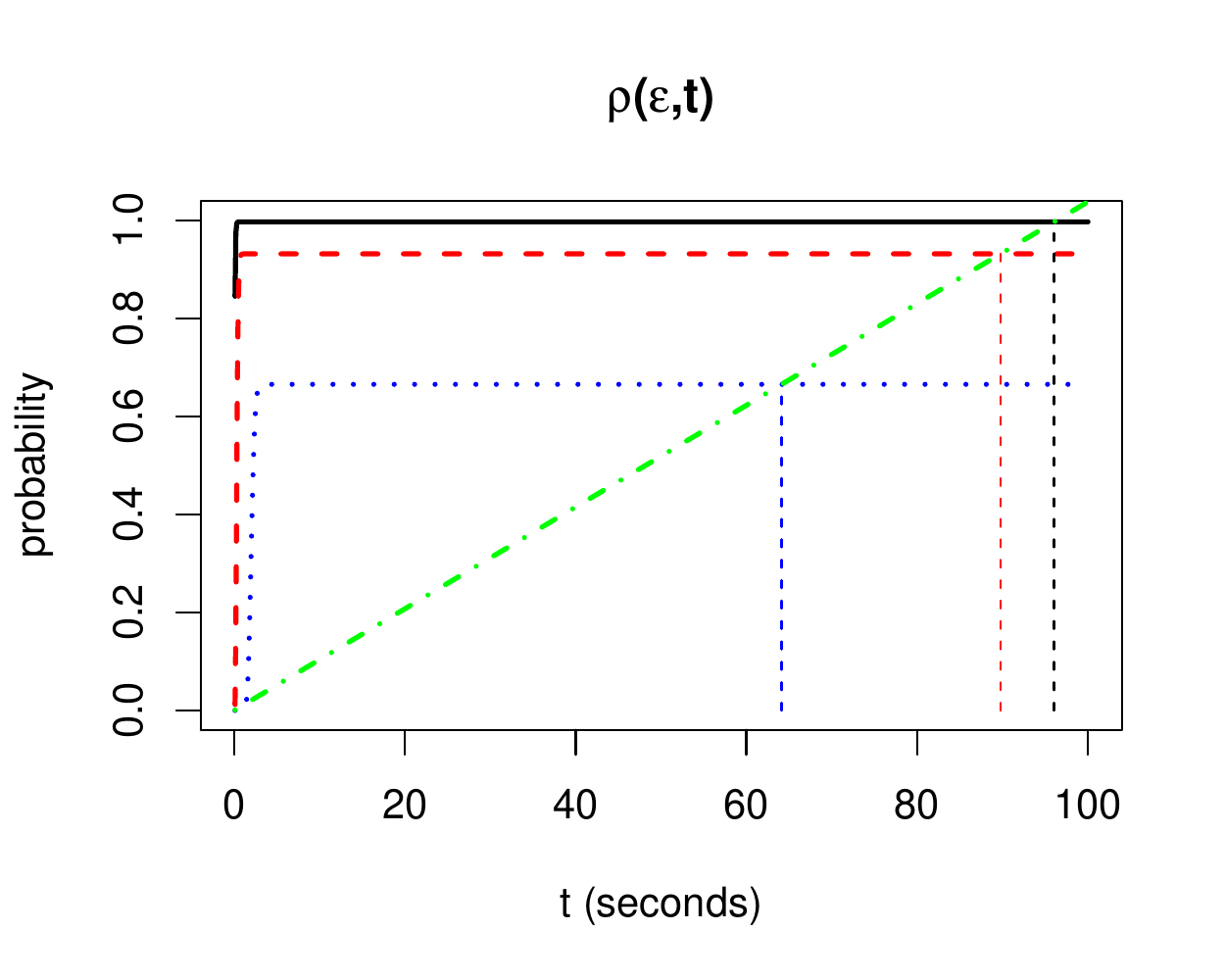}
	\caption{Left Panel: Graph of $y\to{}e^{y}\tilde{\rho}(y,t)y^2$ with $t=30$ sec (black line), $t=60$ sec (red dashed line), $t=90$ sec (blue dotted line). Right Panel: $\rho(0^+,t)=\tilde{\rho}(0^+,t)$ with $Q_{\eps}^{a}(0)=6$ and $Q_{\eps}^{b}(0)=1$ (solid black), $Q_{\eps}^{b}(0)=6$ (dashed red), and $Q_{\eps}^{b}(0)=38$ (dotted blue). The (dot-dashed) green line has a slope of $2S_0 |\mu|/(r+f)$ so that $t^*_0\approx{}64,\; 90,\text{ and }96$ seconds when $Q_{\eps}^{b}(0)=38$,  $6$, and  $1$, respectively.}
	\label{rhoxt_timesx}
\end{figure}

\par The next major assumption we made in Sections 3 and 4 is that $t \mapsto \rho(0^+,t)$ converges to its limiting value $\rho(0^+):=\rho(0^+,\infty)$, as $t\to \infty$, fast enough so that we can use $\rho(0^+)$ instead of $\rho(0^+,t)$ for the estimation of $t_0$ and $t^*_0$. To show that this assumption is indeed plausible when $\mu<0$, in the right panel of Figure~\ref{rhoxt_timesx}, we show the graphs of $t\to{}\rho(\eps,t)$ (a proxy for $\rho(0^{+},t)$) for different initial values of the best bid and ask queues. Here, $Q_{1}^{a}(0)=6=\sum_{i}f^{a}(i)i$ (i.e., the average size of the best ask queue after a price drop) and $Q_{1}^{b}(0)=38$ is the average size of the best bid after a price drop obtained from real LOB data (see Appendix  \ref{DtlsRho} for details). As can be seen in the right panel of Figure~\ref{rhoxt_timesx}, the convergence of $\rho(0^+,t)$ to its limit $\rho(0^+)$ happens near instantaneously, in a matter of just a few seconds, thus validating our assumption $\rho(0^{+},t)\equiv \rho(0^{+})$, for reasonable investment horizons $t$. As it turns out, under the Poisson order flow setting, the limiting value $\lim_{t\to{}\infty}\rho(0^+,t)$ can be computed explicitly  (see Eq.~(\ref{LmtRho0}) below). 
In the right panel of Figure~\ref{rhoxt_timesx}, we also show the approximation $\bar{t}_{0}$ of $t^*_0$ such that $2|\mu|S_0 t/(r+f)>\rho(0^{+},t)$, for any $t>\bar{t}_0$. Since $\rho(0^{+},t)$ is constant after few seconds, $t^*_0 \approx {\rho({ 0^{+}})(r+f)}/{2|\mu|S_0}$, which take the values of $64$, $90$, and $96$ seconds when $Q_{\eps}^{b}(0)=38$,  $6$, and  $1$, respectively.

Now, we discuss the validity of the assumption $\partial_{t}\partial_{x}\rho(0^{+},t)>0$, which is used in Theorems \ref{prop:t0limit} and \ref{ttot0} to guarantee that $t_{0}^{*}$ is the only critical value of $\partial_{y} C(0^{+},t)$ and for the approximation of $y^{*}(t)$ as $t\searrow{}t_{0}^{*}$ to hold. We can take $D(t):=\rho(2\eps,t)-\rho(\eps,t)$ as a proxy for $\eps\partial_{x}\rho(0^{+},t)$. In Figure~\ref{Rxt_Diffeps}, we plot $D(t)$ as a function of $t$ for different values of $Q^{a}_{\eps}(0)$, $Q^{b}_{\eps}(0)$, and $Q^{b}_{2\eps}(0)$. As shown there, $D(t)$ is increasing in $t$ and positive after just a few seconds.

\begin{figure}
	\includegraphics[width=0.5\textwidth]{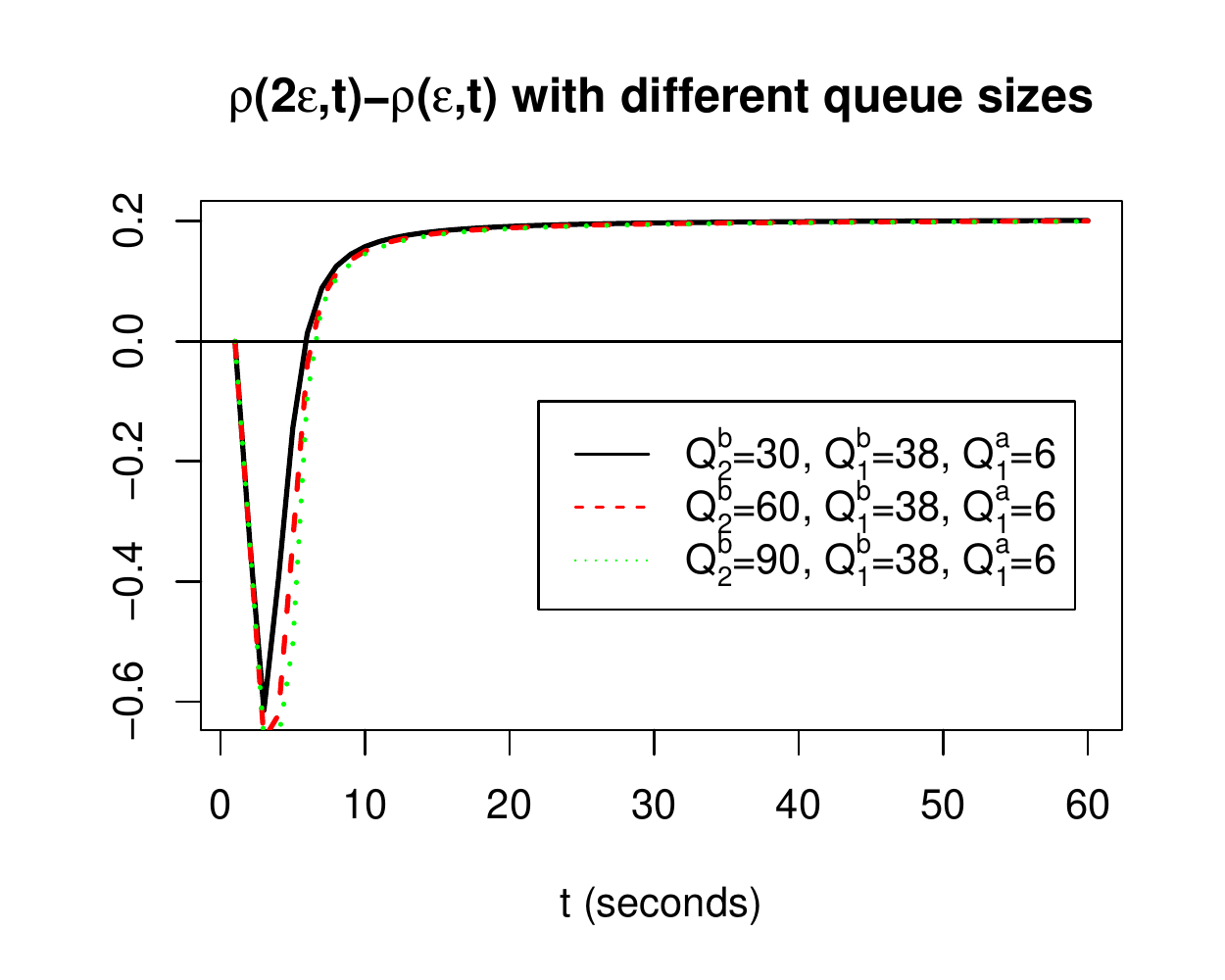}
	\includegraphics[width=0.5\textwidth]{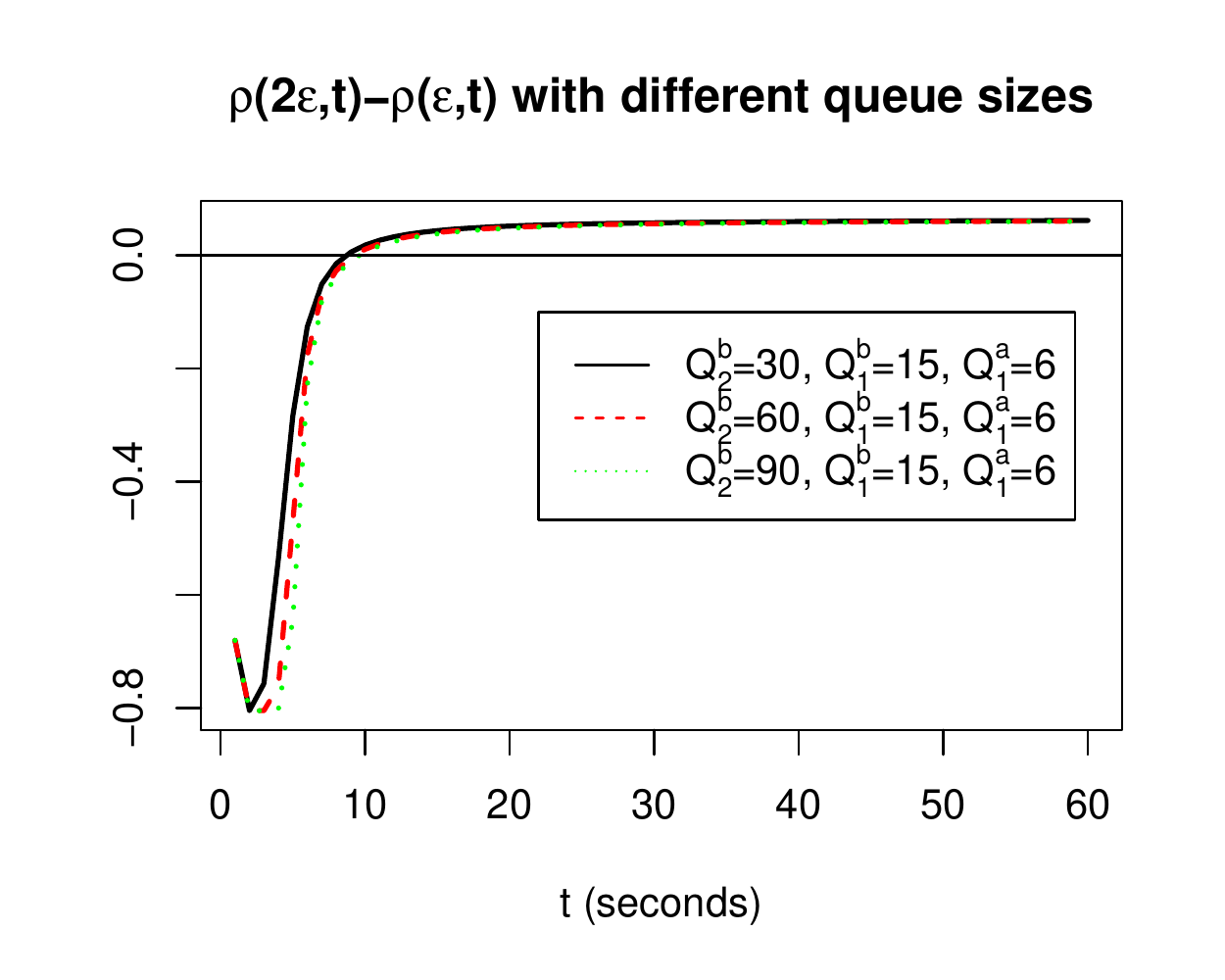}
	\caption{Graphs of $t\to\rho(2\eps,t)-\rho(\eps,t)$ under the Black-Scholes Model for different values of $q^{a}_{1}=Q^{a}_{\eps}(0)$, $q^{b}_{1}=Q^{b}_{\eps}(0)$, and $q^{b}_{2}=Q^{b}_{2\eps}(0)$.}\label{Fig: rhoxt}
	\label{Rxt_Diffeps}
\end{figure}

\par Finally, let us discuss the approximation of $\rho(x,t)$. While the computation of $\rho(x,t)$ is somewhat complicated, Figure~\ref{Rxt_DiffQ} demonstrates that $\rho(x,t)$ can be well approximated with $\rho(x,\infty)$, which we conjecture is given by $\rho(x,\infty)=\sum_{i}f^{a}(i)\alpha_{\infty}(i,1)$. In Figure~\ref{Rxt_DiffQ}, we show the numerical computation of $\rho(x,t)$ in a Bachelier model (left) and a Black-Scholes model (right) with $x=0.1$ and $Q_x^b(0)=0,1,10,38,50$ or $100$ batches. As shown in Figure~\ref{Rxt_DiffQ}, $\rho(x,t)$ is approximately constant with respect to $t$ after a {short time period}.

\begin{figure}
	\includegraphics[width=0.5\textwidth]{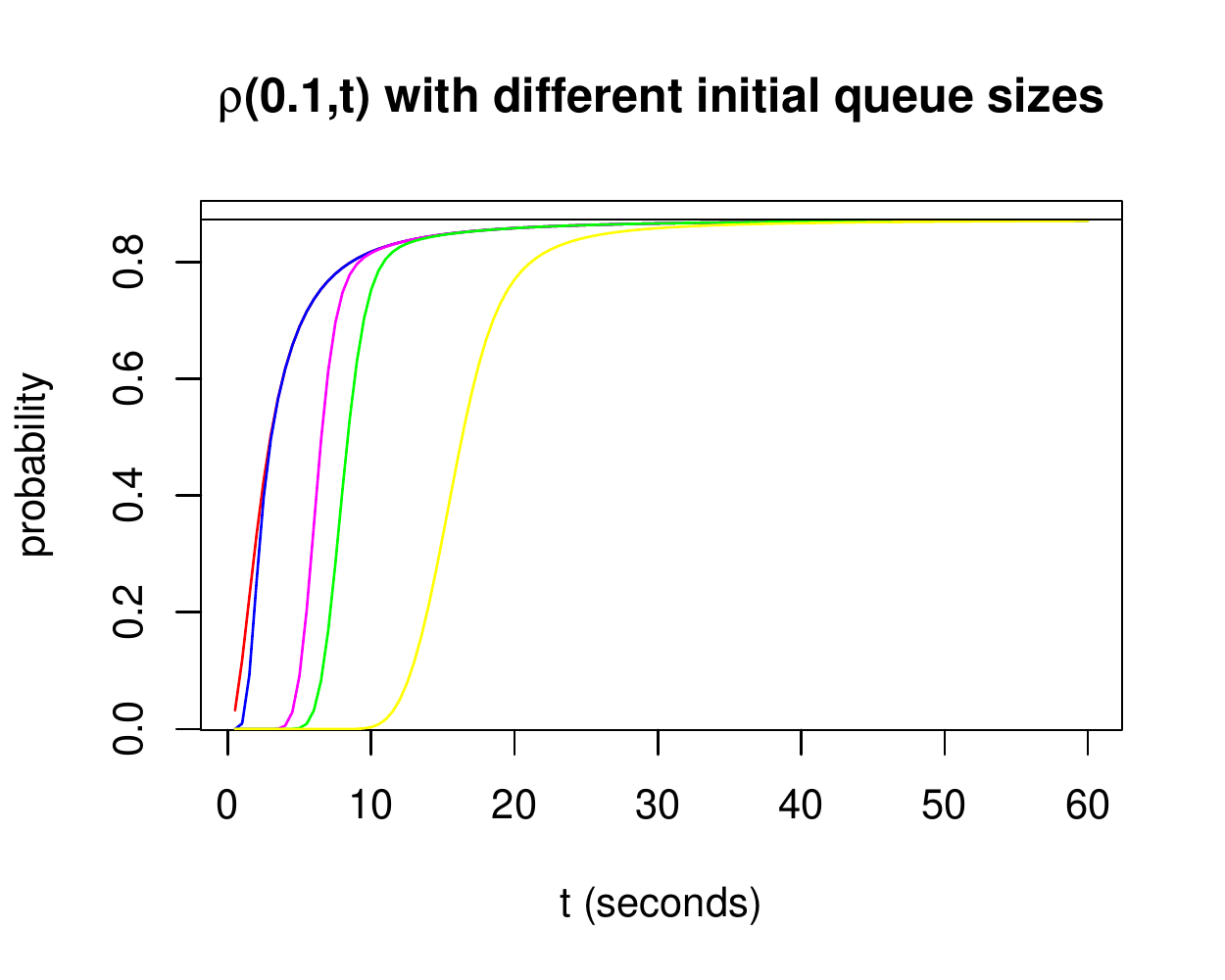}
	\includegraphics[width=0.5\textwidth]{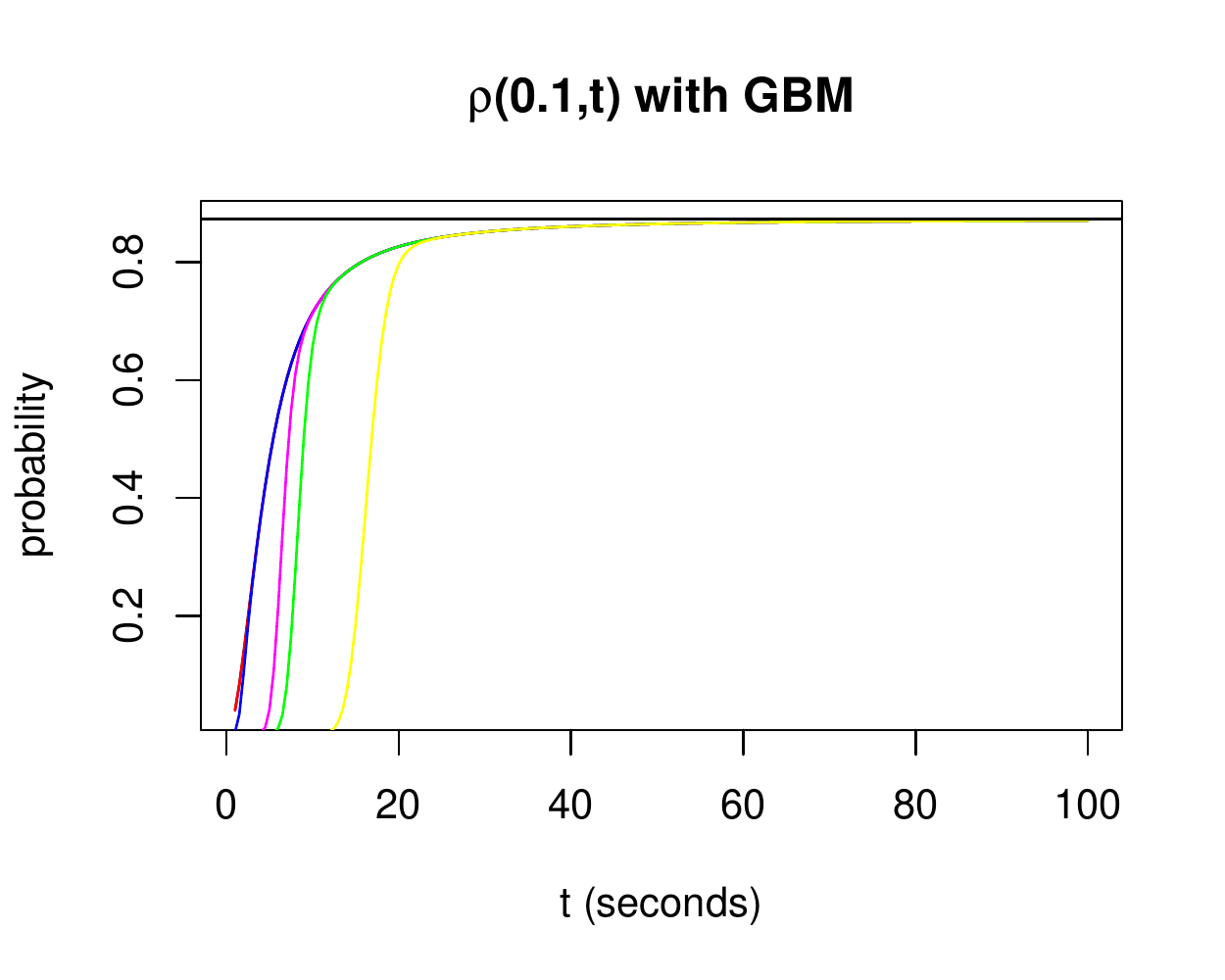}
	\caption{$t\to\rho(0.1,t)$ under the Bachelier model (left) and the Black-Scholes Model (right) for $Q_x^b(0)=0$ (black), $Q_x^b(0)=1$ (red), $Q_x^b(0)=10$ (blue), $Q_x^b(0)=38$ (magenta), $Q_x^b(0)=50$ (green), and $Q_x^b(0)=100$ (yellow). Queue size is in batches (each of size 100 shares). Black line is almost overlapping with red line. Black flat line is $\sum_{i}f_{a}(i)\alpha_{\infty}(i,1)= 0.87$.}\label{Fig: rhoxt}
	\label{Rxt_DiffQ}
\end{figure}

\section{{Conclusions and Future Work}}
{The question of whether to place a market order or a limit order {and,} in the latter case, the position in the LOB at which to place the limit order has gained much recent attention. In this paper, we pose this question as an optimal order placement problem with asset prices following certain diffusive dynamics. The effects of the LOB queue are captured through a certain time-dependent execution probability. Our model and the ensuing analysis lead to a number of important insights: (i) the existence of a threshold-horizon length beyond which the presence of a non-trivial optimal placement is guaranteed; (ii) the characterization of the threshold-horizon length under a certain asymptotic regime involving the rebate and the trading fee; (iii) the characterization and approximation of the optimal placement of the limit order as the horizon length approaches the threshold-horizon length, and (iv) the behaviour of the optimal placement under different asymptotic regimes involving the horizon length and the volatility. Importantly, these insights on optimal placement rely on assumptions that seem to hold widely in real markets, as seen through data and numerical justification.

Numerous other important contexts in the LOB context, and closely related to what we consider in this paper, seem largely under-studied. For instance, the nature of time-dependent optimal order placement, the effect of multiple correlated assets, the consideration of large limit orders, and the presence of diffusive models whose parameters are not known (but can be estimated), are all interesting LOB contexts needing further development. The models considered in this paper and the nature of our analyses could inform such development.}

\appendix
\section{Proofs: Brownian motion}
{For future reference, we introduce the following notation:
\begin{equation}\label{TNFBM}
	\alpha_t(x):= \frac{x+\mu t}{\sigma\sqrt{t}}, \quad \beta_t(x):=\frac{-x+\mu t}{\sigma\sqrt{t}},\quad 
	a_{t}:=\frac{\mu \sqrt{t}}{\sigma},\quad c:=r+f.
\end{equation}
When there is no confusion, we will often omit the dependence on $t$ and/or $x$ in $\alpha$, $\beta$, {$\rho$}, and $a$. Let us also remark that
\begin{align}
\label{bm:1stder:L1}
		\frac{\partial C}{\partial x}&=
	{2\phi \left( {\alpha_t} \right) \frac{c \rho+\mu t}{\sigma\sqrt{t}} +\frac{2N\left( {\beta_t} \right)}{\sigma^{2}} e^{\frac{-2x \mu }{\sigma^2}} \left[ -\mu (x-\mu t)+\mu{c}{\rho}+\frac{\sigma^{2}}{2} \right] 
	+N({\alpha_t} ) -1}\\
	&\quad -{c{\partial_{x}\rho(x,t)}\left(N(-\alpha_{t})+e^{-\frac{2x\mu}{\sigma^{2}}}N(\beta_{t})\right)}
	\label{bm:1stder:L1b}
	\\
\nonumber
		\frac{\partial^2 C}{\partial x^2}&=
	{- \frac{\phi \left( {\alpha_t} \right)}{\sigma^{3}t\sqrt{t}}
	\left[ 2{c}{\rho}x + 4 \mu t ({c}{\rho}+ \mu t) \right]+\frac{4\mu N\left( {\beta_t} \right)}{\sigma^{4}}e^{\frac{-2x \mu }{\sigma^2}} 
	\left[ \mu (x-\mu t) - \mu{c}{\rho} -\sigma^{2}  \right]}\\
\nonumber
	&\quad-{{c\,{\partial_{x}^{2}\rho(x,t)}\left(N(-\alpha_{t})+e^{-\frac{2x\mu}{\sigma^{2}}}N(\beta_{t})\right)+
	\frac{4c\,{\partial_{x}\rho(x,t)}}{\sigma^{2}\sqrt{t}}\left(\sigma\phi(\alpha_{t})+\mu\sqrt{t}e^{-\frac{2x\mu}{\sigma^{2}}}N(\beta_{t})\right).}}
\end{align}
{For ease of notation, we will plug in $x=0$ instead of $x=0^+$ when evaluating the limit of the functions $C(x,t)$ and $\rho(x,t)$ and their derivatives as $x\searrow{}0$.} 
 Finally, the following well-known {inequalities are} often used in the proofs:}
\begin{equation}
\label{imp.inequ}
	{\phi(z)\left(\frac{1}{z}-\frac{1}{z^{3}}\right)\leq{}\phi (z) \frac{z}{z^2 +1} \leq N(-z) \leq \phi (z) \frac{1}{z},\quad z \geq 0.}
\end{equation}

\begin{proof}[\textbf{Proof of Lemma~\ref{BMCostFunc}}]{Without loss of generality we assume that $S_{0}=0$. The result then follows from the formula for the joint distribution of $Y_t$ and $S_t$  (cf. \cite*[Section 3.2]{jeanblanc2009mathematical}):
\begin{equation}\label{JntCndSY}
	P(S_t>z, Y_t > -x)= N\Big(\frac{-z+\mu t}{{\sigma}\sqrt{t}}\Big)- e^{\frac{-2x \mu }{{\sigma}^2}}  N\Big(\frac{-z-2x+\mu t}{{\sigma} \sqrt{t}} \Big),\quad x>0, z\geq{}-x.
\end{equation}
Indeed, from the previous formula, we directly have that
\begin{align*}
	P(Y_t \leq -x)&= N\Big(\frac{-x-\mu t}{{\sigma}\sqrt{t}}\Big)+e^{\frac{-2x \mu }{{\sigma}^2} }N\Big(\frac{-x+\mu t}{{\sigma}\sqrt{t}}\Big)\\
	P(S_t \in dz, Y_t>-x) & = \frac{1}{{\sigma}\sqrt{t}} \phi\Big(\frac{-z+\mu t}{{\sigma}\sqrt{t}}\Big)
-\frac{1}{{\sigma}\sqrt{t}}e^{\frac{-2x \mu }{{\sigma}^2}} \phi\Big(\frac{-z-2x+\mu t}{{\sigma}\sqrt{t}}\Big),
\end{align*}
which can be used to find $E[S_t|Y_t > -x]P(Y_t>-x)=\int_{-x}^{\infty} z P(S_t \in dz, Y_t>-x) dz$.}
\end{proof}

\begin{proof}[\textbf{Proof of Theorem~\ref{mainresult}}]
It is clear that the term in (\ref{bm:1stder:L1b}) is nonnegative and, thus, for $x\to C(x,t)$ to be increasing, we only need to show that the expression in (\ref{bm:1stder:L1}), that we denote $D(x,t)$,  is always positive. The derivative of this expression is given by 
\begin{align}
\label{bm:2stder:L1c}
	\frac{\partial D}{\partial x}&=- \frac{\phi \left( {\alpha_t} \right)}{\sigma^{3}t\sqrt{t}}
	\left[ 2{c}{\rho}x + 4 \mu t ({c}{\rho}+ \mu t) \right]+\frac{4\mu N\left( {\beta_t} \right)}{\sigma^{4}}e^{\frac{-2x \mu }{\sigma^2}} 
	\left[ \mu (x-\mu t) - \mu{c}{\rho} -\sigma^{2}  \right]\\
\label{bm:2stder:L1d}	
	&\quad+
	\frac{2c\,{\partial_{x}\rho(x,t)}}{\sigma^{2}\sqrt{t}}\left(\sigma\phi(\alpha)+\mu\sqrt{t}e^{-\frac{2x\mu}{\sigma^{2}}}N(\beta)\right).
\end{align}
When {$x\to\rho(x,t)$} is nonincreasing, the term in (\ref{bm:2stder:L1d}) is nonpositive because of the third inequality of (\ref{imp.inequ}). 
Now, when $x \in [0, \mu t]$, all terms in (\ref{bm:2stder:L1c}) are negative, while, for $x>\mu t$, we can apply the {last inequality of (\ref{imp.inequ}) 
to the term associated with $\mu(x-\mu t)$} in (\ref{bm:2stder:L1c}) to get
\begin{equation*}
{\frac{\partial D}{\partial x}} \leq  
	\frac{\phi \left( \alpha_t \right)}{\sigma\sqrt{t}}
	\left[ -\frac{2c\rho x}{\sigma^2 t} - \frac{4 \mu c\rho}{\sigma^2} \right]
	 +N\left( \beta_t \right) e^{\frac{-2x \mu }{\sigma^2}} 
	\left[- \frac{4 \mu^2 c\rho }{\sigma^4}-\frac{4\mu}{\sigma^2}  \right]<0.
\end{equation*}
We then deduce that $x\to D(x,t)$ is decreasing. But, {also, $D(0^{+},t)>0$ and 
$D(x,t) \to 0$, as $x \to \infty$. These facts imply that, for each $t>0$, 
$D(x,t)$ is strictly positive for $x>0$ and, thus, finally $C(x,t)$ is increasing in $x$ for each $t$.} 

\noindent
{To prove the second assertion}, 
{assume that $x>-\mu t$ and 
apply the {appropriate inequality of (\ref{imp.inequ}) to each} $N(\cdot)$ term in  (\ref{bm:1stder:L1}) to get}
\begin{align}\label{DL1}
		\frac{\partial C}{\partial x}&\geq{} \phi \left(  \alpha_t \right) 
		\left( \frac{2({c\rho}+\mu t)}{\sigma \sqrt{t}} - \frac{2 \mu {c\rho} }{\beta_t \sigma^2} - \frac{1}{\alpha_t} - \frac{ (-2 \mu x+ 2\mu^2 t +\sigma^2)}{\sigma^2}\left(\frac{1}{\beta}-\frac{1}{\beta^{3}}\right) \right)\\
		\label{DL2}
		&\quad -{{\partial_{x}\rho(x,t)}c\left(N(-\alpha)+e^{-\frac{2x\mu}{\sigma^{2}}}N(\beta)\right).}
\end{align}
Because of the second condition in  (\ref{Sec5:Conditions0}), the expression in (\ref{DL2}) is positive for large enough $x$. After some simplifications, it is not hard to see that the expression inside the parentheses in (\ref{DL1}) is of the form $2c\rho(x,t)/\sigma\sqrt{t}+ 4\mu \sigma t \sqrt{t}/x^2+ O(1/x^{3})$ and, thus, it becomes positive for large enough $x$ in light of the last condition in (\ref{Sec5:Conditions0}). Therefore, $\partial_{x}C(x,t)$ is positive for large enough $x$ and, hence, it is not possible that $C(\infty,t)\leq{}C(x,t)$, for all $x$.
To conclude, 
note that
\begin{align}
\label{eq:bm:1st:x0}
	\frac{\partial C}{\partial x} (0,t) &= \left( \phi \left(a_t \right)+ N\left(a_t \right) a_t \right) \frac{2}{\sigma}\left( \frac{c{\rho}(0,t)}{\sqrt{t}}+ \mu\sqrt{t}\right) + 2 N \left(a_t \right) -1-{{\partial_{x}\rho(0,t)}c}\\
\label{eq:bm:1st:x0b}	
&<{}2	\phi \left(a_t \right)\frac{c\rho({0,t})+2\mu t}{\sigma\sqrt{t}}+2\mu N(a_{t})\frac{c\rho(0,t)+\mu t}{\sigma^{2}} -{{\partial_{x}\rho(0,t)}c},
\end{align}
where the inequality in (\ref{eq:bm:1st:x0b}) follows from the inequality $2N(a_{t})-1<2\phi(a_{t})a_{t}$. 
Now, we claim that $\partial_{x} C(0,t)<0$ if $c\rho(0,t)+2\mu t<0$. Indeed, if $c\rho(0,t)+\mu t\leq{}0$, this follows directly from the expression in (\ref{eq:bm:1st:x0}), while if $c\rho(0,t)+\mu t>0$, this is a consequence of the inequality in (\ref{eq:bm:1st:x0b}). Finally, since clearly $c\rho(0,t)+2\mu t<0$ holds for $t$ large enough, we conclude that there exists $t_{0}\geq{}0$ such that for all $t>t_{0}$, $\partial_{x} C(0,t)<0$. The latter condition implies that it is impossible to have $C(0,t)<C(x,t)$ for all $x>0$ and, thus, the existence of $x^{*}(t)$ as stated in the statement of theorem holds. 
\end{proof}

\begin{proof}[\textbf{Proof of Theorem~\ref{thm2b}}]
Recall from the proof of Theorem~\ref{mainresult} that $\partial_{x} C(0,t)<0$, for $t$ large enough. So, for the first assertion of the theorem, we only need to show that $\partial_{x} C(0,t)>0$, for $t$ small enough. But this is clear from (\ref{eq:bm:1st:x0}) since $\rho(0,t)/\sqrt{t}\to+\infty$, as $t\to{}0$, under our assumptions that $\rho(0,t)\equiv{}\rho(0)\in (0,1]$, for all $t$.
We then conclude that $t_0$ can be chosen so that $\partial_{x} C(0,t_0)=0$. For the second assertion, let us start by noting that $t_0 \to 0$, as $(r+f) \to 0$ because of the upper bound of Corollary \ref{Crl1}. 
Next, expanding $\phi \left(a_t\right)$ and $N\left(a_t\right)$ as powers of $\sqrt{t}$ and some simplifications, we have
\begin{align}\nonumber
\frac{\partial C}{\partial x}(0,t)=
c \left( \frac{2\rho(0^{+}) \phi(0)}{\sigma}\frac{1}{\sqrt{t}}+\frac{\rho(0^+)\mu}{\sigma^2} - \partial_{x}\rho(0^+,t)+\frac{\phi(0)\rho(0^+)\mu^2}{\sigma^2}\sqrt{t}\right)
+\frac{4\mu\phi(0)}{\sigma}\sqrt{t} +\mathcal{E}(t)
\end{align}
where
\[
\mathcal{E}(t)=\frac{\mu^2}{\sigma^2} t-\frac{\rho(0^+)\mu^4\phi(0)c}{12\sigma^5}   t^{1.5} +\mathcal{O}(t^2)
\]
Now, plugging $t=t_0$, recalling that 
$\partial_{x} C (0,t_0)=0$, we have 
\begin{equation}\nonumber
0=1+\frac{2\mu}{\rho(0^+)}\frac{t_0}{c} +\left(\frac{\rho(0^+)\mu/\sigma - \partial_{x}\rho(0^+,t_{0})\sigma}{2\rho(0^+)\phi(0)} \right)\sqrt{t_0}  +\frac{\mu^2}{2\sigma}t_{0}+\frac{\mathcal{E}(t_0)\sigma\sqrt{t_0}}{2\rho(0^+)\phi(0)c}
\end{equation}
Since, by Corollary  \ref{Crl1},  $t_{0}\to{}0$,  as $c\to{} 0$, and $t_{0}/c\leq{}2/|\mu|$, the we have that the last three terms on the right hand side of the previous equation converge to $0$ and, hence, we must have that 
\[
	1+\frac{2\mu}{\rho(0^+)}\frac{t_0}{c} \to{}0,
\]
which implies the second assertion. For the last assertion, note that 
\begin{align}
\label{bm:2ndMixedDer:x0}
	\frac{\partial^{2} C}{\partial t\partial x}(0,t) = -\phi \left(a_{t} \right) \frac{{c{\rho(0,t)}}}{\sigma t\sqrt{t}} +
	\frac{2\mu}{\sigma\sqrt{t}}\left(\phi\left(a_{t}\right)+a_{t}N\left(a_{t}\right) \right) -c\partial_{t}\partial_{x}\rho(0,t),
\end{align}
which is negative because of our assumption $\partial_{t}\partial_{x}\rho(0,t)\geq{}0$ and the last inequality in (\ref{imp.inequ}).
\end{proof}

\begin{proof}[\textbf{Proof of Theorem~\ref{BM:xt:ttot0}}]
\par We will use the {mean value theorem} to show the behavior of the optimal placement solution, $x^*(t)$, when $t$ is close to $t_0$. {To this end, 
the} following conditions are necessary: $\partial^2_{x} C$ {needs to be positive at $(0,t_{0})$}, $x^*(t) \to 0$ as $t \to t_0$, and $C$ should be $C^{2}$ in a neighborhood of $(0,t_{0})$. First, let us recall from the proof of Theorem~\ref{thm2b} that $t_0$ satisfies $\partial_{x} C(0^{+},t_0)=0$. Then, using (\ref{eq:bm:1st:x0}), we can find an expression for $\phi \left(a_{t_{0}} \right)+ N\left(a_{t_{0}} \right) a_{t_{0}}$, which can be substituted into 
\begin{align*}
	{\frac{\partial^2 C}{\partial x^2}(0,t)=-\left( \phi \left(a_t \right)+ N\left(a_t\right)a_t \right)\left(\frac{4\mu(c\rho(0,t)+\mu t)}{\sigma^3\sqrt{t}}-\frac{4c\partial_{x}\rho(0,t)}{\sigma\sqrt{t}}\right)-N\left(a_t\right)\frac{4\mu}{\sigma^2}-c\partial_{x}^{2}\rho(0,t)}
\end{align*}
to get
\begin{equation}
\label{bm:2nd:x0}
{\frac{\partial^2 C}{\partial x^2}(0,t_0)= \frac{2|\mu| }{\sigma^2}\left(1+\partial_{x}\rho(0,t_{0})c\right)+2c\partial_{x}\rho(0,t_{0})\frac{1-2N(a_{t})+\partial_{x}\rho(0,t_{0})}{c\rho(0,t_{0})+\mu t_{0}}-c\partial^{2}_{x}\rho(0,t_{0}).}
\end{equation}
The last expression is positive since $\partial_{x}\rho(0,t_{0})>0$, $\partial_{x}^{2}\rho(0,t_{0})<0$, and  $t_{0}<c\rho(0^{+})/2|\mu|$ as a consequence of Corollary \ref{Crl1}.
{Next, let us recall that $x^*(t)$ satisfies $\partial_{x} C(x^*(t),t)=0$ and, thus, by} the Implicit Function Theorem, there exist an open set $U$ containing $x=0$, an open set $V$ containing $t=t_0$, and a unique continuously differentiable function $x^*(t)$ such that
\[
\left \{ (x^*(t), t)| t \in V \right \}=\left\{ (x,t) \in U\times V |\quad \frac{\partial C}{\partial x}(x,t)=0\right\}.
\]
{In particular, 
$x^*(t) \to 0$ as $t \to t_0$. {Furthermore, 
since} $t_0>0$ by Theorem \ref{thm2b}, it is clear that $\partial_{x} C(x,t)$ is differentiable in a neighborhood of $(x,t)=(0,t_0)$,
 and, thus,} we can apply the mean value theorem to show that {there exists $\beta \in (0,1)$} such that
\begin{equation*}
	0=\frac{\partial C}{\partial x}(x^*(t),t)= \frac{\partial^2 C}{\partial x^2}(\beta x^*(t),t_0+ \beta (t-t_0))x^*(t) + \frac{\partial^2 C}{\partial t \partial x}(\beta x^*(t),t_0+ \beta (t-t_0)) (t-t_0).
\end{equation*}
Since $\partial^2 C/ \partial x^2 $, $\partial^2 C/ \partial x \partial t$ are both continuous when $t>0$, and there is an open set containing $(0,t_0)$ such that $\partial^2 C/ \partial x^2$ is strictly positive and, furthermore,
\begin{equation*}
\label{bm:taylor:firstorder}
		\frac{x^*(t)}{t-t_0}=-\frac{\frac{\partial^2 C}{\partial t \partial x}(\beta x^*(t),t_0+ \beta (t-t_0))}{\frac{\partial^2 C}{\partial x^2}(\beta x^*(t),t_0+ \beta (t-t_0))} \; \xrightarrow[t\to t_0] \;-\frac{\frac{\partial^2 C}{\partial t \partial x}(0,t_0) }{\frac{\partial^2 C}{\partial x^2}(0,t_0)}:=\kappa_{1}.
\end{equation*}
Note that $\kappa_{1}>0$ because, as seeing from (\ref{bm:2ndMixedDer:x0}), $\partial_{t}\partial_{x} C(0,t)>0$.
For the second order approximation, we apply a second-order Taylor's expansion of $0=\partial_{x} C(x^{*}(t),t)$ around $(0,t_{0})$ to get:
\begin{align*}
	0&= \frac{\partial^2 C}{\partial x^2}(0,t_0)x^*(t) + \frac{\partial^2 C}{\partial t \partial x}(0,t_0) (t-t_0)+ \frac{1}{2}\frac{\partial^3 C}{\partial x^3}(\beta x^*(t),t_0+ \beta (t-t_0)){x^*(t)}^3\\
	& + \frac{\partial^3 C}{\partial t \partial x^2}(\beta x^*(t),t_0+ \beta (t-t_0))x^{*}(t)(t-t_0)+
	 \frac{1}{2}\frac{\partial^3 C}{\partial t^3}(\beta x^*(t),t_0+ \beta (t-t_0)) (t-t_0)^2.
\end{align*}
and follow similar steps as above. 
\end{proof}

\begin{proof}[\textbf{Proof of Theorem~\ref{xtupbd}}]
{To find the upper bound, we need the following inequality
\begin{equation}\label{imp.ineq}
	e^{\frac{-2\mu x}{\sigma^2}} N\left( \beta_t \right) - N\left(-\alpha_t\right) + \phi\left( \alpha_t \right) \left( 1/\alpha_t + 1/\beta_t \right) > 0,
\end{equation}
which is valid for $\mu<0$ and $x>-\mu t$.
Next, applying (\ref{imp.inequ}) and (\ref{imp.ineq}) to the RHS of (\ref{bm:1stder:L1})}, we can get the following expression:
\begin{align*}\nonumber
	\frac{\frac{\partial C}{\partial x}}{\phi \left(  \alpha_t \right) }& > 
	 \frac{2({c\rho}+\mu t)}{\sigma \sqrt{t}} + \left( \frac{1}{-\beta_t} + \frac{1}{\beta_t^3}\right) \frac{2\mu^2 t - 2 \mu x }{\sigma^2}+  \frac{2 \mu {c\rho}}{\sigma^2(-\beta_t)} - \frac{1}{\alpha_t} - \frac{1}{\beta_t}\\
	&=  \frac{2 x {c\rho} }{\sigma \sqrt{t} (x-\mu t)^2 (x+\mu t)} \left(x^2 - \mu^2 t^2 \left(1-\frac{2 \sigma^2}{\mu {c\rho}}\right)\right):=f(x,t).
	\end{align*}
{It is then clear that $\bar{x}^*(t):=-\mu t \theta_0$ is such that $f(\bar {x}^*(t),t)=0$ and $f(x,t)>0$, for all $x>\bar{x}^*(t)$. Therefore, $x^{*}(t)\leq{}\bar{x}^*(t)$.} 
{For the lower bound, let us again apply the appropriate inequality in (\ref{imp.inequ}) to the different terms of (\ref{bm:1stder:L1}) to get the upper bound:}
\begin{align}\label{ineq:xtlb:1}
\frac{\frac{\partial C}{\partial x}}{\phi \left(  \alpha_t \right)}
	  <    \frac{2 x {c\rho} (x^2 - \mu^2 t^2 \theta_0^2)}{\sigma \sqrt{t} (x-\mu t)^2 (x+\mu t)}  + \sigma t \sqrt{t} \left( \frac{-2\mu}{(x-\mu t)^2} - \frac{2\mu {c\rho}}{(x-\mu t)^3} + \frac{\sigma^2}{(x+\mu t)^3}\right).
\end{align}
 Note that when $t>{c\rho}/(-\mu), x>-\mu t $, the following inequality is true:
\begin{equation}\label{ineq:needtoprove}
	\frac{2 x {c\rho}( - \sigma^2 t - 2 \mu t \sigma \sqrt{t} \theta_0) }{\sigma \sqrt{t} (x-\mu t)^2 (x+\mu t)}  + \sigma t \sqrt{t} \left( \frac{-2\mu }{(x-\mu t)^2} - \frac{2\mu {c\rho}}{(x-\mu t)^3} + \frac{\sigma^2}{(x+\mu t)^3}\right) > 0.
\end{equation}
Then, by adding up the RHS of (\ref{ineq:xtlb:1}) and the LHS of (\ref{ineq:needtoprove}), 
\begin{equation*}
	\frac{\partial C}{\partial x}   <   \frac{2 x {c\rho}  \phi \left(  \alpha_t \right)}{\sigma \sqrt{t} (x-\mu t)^2 (x+\mu t)} \left(x^2 - \mu^2 t^2 \theta_0^2 - \sigma^2 t - 2 \mu t \sigma \sqrt{t} \theta_0\right)=:g(x,t)
\end{equation*}
{Since $\underline{x}^*(t):=-\sigma \sqrt{t}- \mu t \theta_0$ is such that $g(t,\underline{x}^*(t))=0$ and $g(t,x)<0$, for all $x\in (-\mu t,\underline{x}^*(t))$, we conclude that $\underline{x}^*(t)\leq{}x^*(t)$, provided that $x^*(t)>-\mu t$. For the latter, 
we need the additional condition $t > \sigma^2/(\mu^2 ( \theta_0 -1)^2)$.}
\end{proof}

\begin{proof}[\textbf{Proof of Theorem \ref{asymptotics}}]
Let 
\begin{align}\label{ErrorFuncDef}
		H(z)=\frac{N^{c}{\rho}(z)}{\phi(z)},\quad
	\mathcal{E}_{1}(z)=H(z)-\frac{1}{z},\quad 
	\mathcal{E}_{2}(z)=H(z)-\frac{1}{z}+\frac{1}{z^{3}}.
\end{align}
Then, we can write  (\ref{bm:1stder:L1}) as
\begin{align*}
	\frac{\frac{\partial C}{\partial x}}{\phi \left( {\alpha_{t}} \right)}&=\frac{2({c}{\rho}+\mu t)}{\sigma \sqrt{t}}
	+\frac{2\mu(\mu t-x)}{\sigma^{2}}\left\{\frac{1}{|{\beta_{t}}|}-\frac{1}{|{\beta_{t}}|^3}\right\}
+\frac{\sigma^{2}+2 \mu {c}{\rho}}{{\sigma}^2|\beta|}-\frac{1}{{\alpha_{t}}}\\
&\qquad +\left(\frac{2\mu(\mu t-x)}{\sigma^{2}}\right)\mathcal{E}_{2}\left( |{\beta_{t}}| \right)
 +\left(\frac{\sigma^{2}+2 \mu {c}{\rho}}{{\sigma}^2}\right)\mathcal{E}_{1}\left( |{\beta_{t}}| \right)-\mathcal{E}_{1}\left( {\alpha_{t}} \right)
\end{align*}
After some simplification, 
the optimal $x=x^{*}(t) $ is such that 
\begin{align*}
	\frac{x^{2}}{t^{2}}&=\mu^{2}\theta_{0}^{2}+\frac{2\mu\sigma}{2\sigma^{2}{c}{\rho}}\frac{\sqrt{t}(x-t\mu)^{3}(x+t\mu)}{t^{2}x}\mathcal{E}_{2}\left( |{\beta_{t}}| \right)\\
	&\quad-\frac{\sqrt{t}(x-t\mu)^{2}(x+t\mu)}{2{c}{\rho} \sigma xt^{2}}\left(\sigma^{2}+2 \mu {c}{\rho}\right)\mathcal{E}_{1}\left( |{\beta_{t}}| \right)+\frac{\sigma\sqrt{t}(x-t\mu)^{2}(x+t\mu)}{2{c}{\rho}xt^{2}}\mathcal{E}_{1}\left( {\alpha_{t}} \right).
\end{align*}
{Since the error terms $\mathcal{E}_{i}$ converge to $0$ at the order of $\mathcal{O}(t^{-1})$, we conclude that $x^{*}(t)\sim |\mu|\theta_{0}t$, as $t\to{}\infty$.}
{In particular, as $t\to{}\infty$, we also have that 
 $(x^{*}(t)-\mu t)\sim |\mu| t (\theta_{0}+1)$ and $(x^{*}(t)+\mu t)\sim |\mu| t (\theta_{0}-1)$, which in turn implies that $\mathcal{E}_{i}(|\beta_{t}|)=O(t^{-(2i+1)/2})$ and $\mathcal{E}_{1}(\alpha_{t})=O(t^{-3/2})$. We then obtain that:
\begin{align*}
		\frac{x^{*}(t)^{2}}{t^{2}}-\mu^{2}\theta_{0}^{2}
		&=\frac{2\mu\sigma}{2\sigma^{2}{c}{\rho}}\frac{|\mu|^{3}(\theta_{0}+1)^{3}|\mu|(\theta_{0}-1)}{|\mu|\theta_{0}}\frac{3\sigma^{5}}{|\mu|^{5}(\theta_{0}+1)^{5}}t^{-1}\\
		&\quad-\frac{\sigma|\mu|^{2}(\theta_{0}+1)^{2}|\mu|(\theta_{0}-1)}{2({c}{\rho}|\mu|\theta_{0}}\left(\frac{\sigma^{2}+2 \mu {c}{\rho}}{{\sigma}^2}\right)\frac{-\sigma^{3}}{|\mu|^{3}(\theta_{0}+1)^{3}}t^{-1}\\
		&\quad +\frac{\sigma|\mu|^{2}(\theta_{0}+1)^{2}|\mu|(\theta_{0}-1)}{2{c\rho}|\mu|\theta_{0}}\frac{-\sigma^{3}}{|\mu|^{3}(\theta_{0}-1)^{3}}t^{-1}+o(t^{-1}),
\end{align*}
and, after some simplifications, we conclude (\ref{MAH}).}
\end{proof}

\section{Proofs: Geometric Brownian motion}

For the easiness of notation, we write $C$ instead of $\tilde C$ for the expected cost, and write $y=0$ instead of $y=0^+$ when evaluating the functions $C$ and $\rho$ and their derivatives. We also frequently use the the notation 
\begin{equation}\label{TNFGBM}
	\alpha_{\pm}:=\mu\pm \sigma^{2}/2, 
	\quad c:=r+f, \quad 
	\alpha:=\frac{\mu- \sigma^{2}/2}{\sigma},\quad 
	\beta:=\frac{\mu+\sigma^{2}/2}{\sigma}. 
\end{equation}
For future reference let us also note that:
\begin{align}\label{gbm:1st:Dy:L1}
\frac{\partial C}{\partial y} &=-S_0 e^{-y}  \bigg( \frac{2\mu}{\sigma^2} \bigg) e^{-\frac{2y\mu}{\sigma^2}} \bigg[ e^y N \bigg(\frac{-y+\alpha_{-}t }{\sigma\sqrt{t}} \bigg) - e^{\mu t} N \bigg(\frac{-y+\alpha_{+}t }{\sigma\sqrt{t}} \bigg)  \bigg]\\ \label{gbm:1st:Dy:L2}
&\quad +    S_0 e^{-y} \bigg[ e^{\mu t}e^{-\frac{2\mu}{\sigma^2}y}N \bigg(\frac{-y+\alpha_{+}t }{\sigma\sqrt{t}} \bigg)- N \bigg(\frac{-y-\alpha_{-}t }{\sigma\sqrt{t}} \bigg) \bigg]\\
&\quad + c\rho(y,t) e^{-\frac{2y\mu}{\sigma^2}+y}  \bigg[ \frac{2}{\sigma \sqrt{t}} \phi \bigg(\frac{-y+\alpha_{-}t}{\sigma\sqrt{t}} \bigg) - ( \frac{-2\mu}{\sigma^2}+1) N \bigg(\frac{-y+\alpha_{-}t}{\sigma\sqrt{t}} \bigg) \bigg]\label{gbm:1st:Dy:L3}\\
&\quad -c\partial_{y}\rho(y,t) \bigg[ N \bigg(\frac{-y-\alpha_{-}t}{\sigma\sqrt{t}} \bigg) + e^{-\frac{2y\mu}{\sigma^2}+y} N \bigg(\frac{-y+\alpha_{-}t}{\sigma\sqrt{t}} \bigg) \bigg],\label{gbm:1st:Dy:L4}
\end{align}
and, thus, 
\begin{align}
\label{1stD:y0:L1}
\frac{\partial C}{\partial y} (0,t) &= \frac{2c\rho(0,t)}{\sigma\sqrt{t}} \bigg[ \phi \left(\alpha\sqrt{t}\right) +\alpha\sqrt{t}N\left(\alpha\sqrt{t} \right) \bigg]\\
\label{1stD:y0:L2}
&\quad - S_0 \bigg[ 1 + \frac{2\alpha}{\sigma} N\left(\alpha\sqrt{t}\right) - \frac{2\beta}{\sigma}e^{\mu t}  N \left(\beta\sqrt{t}\right) \bigg]-c\partial_{y}\rho(0,t).
\end{align}

\begin{proof}[\textbf{Proof of Lemma \ref{GBMCostFunc}}]
{Let $X_{t}:=\ln (S_t / S_0)= (\mu - \frac{\sigma^2}{2}) t + \sigma W_t$, $x=S_{0}-S_{0}e^{-y}$, and $\tilde{Y}_t:=\inf \limits_{u\leq t}X_u$. Then, it is easy to see that 
\begin{equation*}
\tilde C(y,t)=(S_0 e^{-y}-(r+f)\tilde{\rho}(y)) P(\tilde{Y}_t \leq  -y) + S_0 E[e^{X_t}| \tilde{Y}_t > -y] P (\tilde{Y}_t > -y)+ f-S_0.
\end{equation*} 
We can then 
use (\ref{JntCndSY}) to calculate the cost function using similar steps as those used in Lemma \ref{BMCostFunc}.}
\end{proof}

\begin{proof}[\textbf{Proof of Theorem~\ref{gbmtknotexistence}}]
{We first show that the two conditions in (\ref{ConditionsGBM}) rule out that $y^{*}(t)=0$, for which we will show that $\partial_{y}C(0^{+},t)<0$. Let us start by noting that, in light of (\ref{1stD:y0:L1}) and (\ref{SmplIneqBVU}),
\begin{align*}
\frac{\partial C}{\partial y} (0,t) &<\frac{2c\rho(0,t)}{\sigma\sqrt{t}} \phi \left(\alpha\sqrt{t}\right)-S_{0}\phi(\alpha\sqrt{t})\frac{2 a |\mu|}{\sigma}\sqrt{t}-c\partial_{y}\rho(0,t)\\
&\quad =\frac{2 \phi \left(\alpha\sqrt{t}\right)}{\sigma\sqrt{t}}\left(c\rho(0,t)-a S_{0}|\mu|t\right)-c\partial_{y}\rho(0,t).
\end{align*}
The last expression is then negative due to our two assumptions in (\ref{ConditionsGBM}). This implies that  the optimum is such that $y^*(t)\in (0,\infty]$. To rule out that $y^{*}(t)=\infty$, we now show that the first derivative of the expected cost is positive when $y$ is big enough under the two condition in (\ref{ConditionsGBM2}). Let $D(y,t)$ be the terms of $\partial_{y}C$ in lines  (\ref{gbm:1st:Dy:L1})-(\ref{gbm:1st:Dy:L3}). Next, applying the first or last inequality in (\ref{imp.inequ}) (depending on the sign) to every $N(\cdot)$ term of $D(x,t)$, we get that, for $y>\max\{-\alpha_{-}t,\alpha_{+}t\}$:
\begin{align*}
\frac{D(y,t)}{ \phi \bigg(\frac{-y-\alpha_{-}t}{\sigma\sqrt{t}} \bigg)} &> S_0 e^{-y} 
		 \bigg[ \bigg( \frac{-2\mu}{\sigma^2}\bigg) \bigg(  \frac{\sigma\sqrt{t}}{y-\alpha_{-}t} - \frac{(\sigma\sqrt{t})^3}{(y-\alpha_{-}t)^3 } - \frac{\sigma\sqrt{t}}{y-\alpha_{+}t} \bigg)\\
		 &\qquad \qquad\quad+ \frac{\sigma\sqrt{t}}{y-\alpha_{+} t} 
		 -\frac{(\sigma\sqrt{t})^3}{(y-\alpha_{+}t)^3}
		 -\frac{\sigma\sqrt{t}}{y+\alpha_{-}t } \bigg] +c\rho(y,t) \frac{2y}{\sigma\sqrt{t} (y -\alpha_{-}t)}.
\end{align*}
It is easy to see that, as $y\to\infty$, the first term is asymptotically equivalent to $4S_{0}e^{-y}t\sqrt{t}\sigma|\mu|/y^{2}$, while the second terms is asymptotically equivalent to $2c\rho(y,t)/\sigma\sqrt{t}$. Thus, for large enough $y$, the function $D(y,t)$ is positive in light of the last condition in (\ref{ConditionsGBM2}). Since the expression in (\ref{gbm:1st:Dy:L4}) is eventually nonnegative in light of the second condition in (\ref{ConditionsGBM2}), we conclude that the first derivative of $C$ with respect to $y$ is positive for large enough $y$ and, thus, $y^{*}(t)<\infty$.}
\end{proof}

\begin{proof}[\textbf{Proof of Theorem~\ref{prop:t0limit}}]
Indeed, from (\ref{1stD:y0:L1})-(\ref{1stD:y0:L2}) and the additional assumptions that $\rho(0^{+},t)\equiv \rho(0^{+})\in(0,1]$ and $\limsup_{t\to{}0}|\partial_{y}\rho(0,t)|<\infty$, we obviously have that
\[
		\lim \limits_{t \to 0} \frac{\partial C}{\partial y} (0,t) = +\infty,
\]
which implies that $t^*_{0}>0$ and is such that $\frac{\partial C}{\partial y} (0,t^*_{0})=0$. Furthermore, $t^{*}_{0}$ is only root of $\frac{\partial C}{\partial y} (0,t)$ because, under the additional assumption, $\frac{\partial C}{\partial y} (0,t)$ is strictly decreasing in $t$. Indeed, first note that
\begin{align}
	\label{MixedPartial:y0:L1}
	\frac{\partial^2 C}{\partial t \partial y}(0,t)&=
	S_0 \frac{2\mu}{\sigma\sqrt{t}} 
	\phi \left({\alpha}\sqrt{t} \right)
	+\frac{2S_0\beta}{\sigma}\mu e^{\mu t}
		N\left({\beta}\sqrt{t}\right) - \frac{c\rho(0,t)}{\sigma t \sqrt{t}}
	\phi \left({\alpha}\sqrt{t} \right)-c\frac{\partial^{2}\rho}{\partial t \partial y}(0,t).
\end{align}
The last term is nonpositive in light of our additional assumption $\partial_{t}\partial_{y}\rho(0,t)\geq{}0$. To check that the first two terms are negative we consider two cases. If $\beta>0$, it is clear that (\ref{MixedPartial:y0:L1}) is negative, while if $\beta<0$, we can apply the last inequality in (\ref{imp.inequ}) to the $N(\cdot)$ term to get
\begin{align*}\label{mixedpartial:negative}
\frac{\partial^2 C}{\partial t \partial y}(0,t)< -\frac{c\rho(0)}{\sigma t \sqrt{t}}
\phi \left({\alpha}\sqrt{t} \right)- \frac{c\rho(0,t)}{\sigma t \sqrt{t}}
	\phi \left({\alpha}\sqrt{t} \right)-c\frac{\partial^{2}\rho}{\partial t \partial y}(0,t) <0.
\end{align*}
Therefore, regardless the sign of $\beta$, $\partial_{y}C(0,t)$ is strictly decreasing and we conclude the last assertion of the theorem. 

Finally, we check the second assertion. For the convenience of notation, let us use $t_0$ instead of $t_0^{*}$. We first show that $t_0 \to 0$ as $(r+f)/S_0 \to 0$. Otherwise, suppose that there exist sequences $c_n=r_{n}+f_{n}$ and $S_{n0}$ such that $c_n/S_{n0} \to 0$ and $\lim_{n\to\infty} t_{n0}=:\underline{t}_0\in (0,\infty)$, where $t_{n0}$ is a solution of $\partial_{y} C(0,t_{n0})=0$ corresponding to $c_n$ and $S_{n0}$. 
In that case, 
\begin{equation*}\label{NSETC}
	\lim_{ n \to \infty} \frac{1}{S_{n0}}\frac{\partial C(0,t_n)}{\partial y} = -1-\frac{2\alpha}{\sigma} N({\alpha} \sqrt{\underline{t}_0})+ \frac{2\beta}{\sigma} e^{\mu \underline{t}_0}N({\beta} \sqrt{\underline{t}_0})- c\partial_{y}\rho(0,\underline{t}_{0}),
\end{equation*}
which is actually strictly negative  {because of (\ref{SmplIneqBVU}) and the second condition in (\ref{ConditionsGBM}).} 
Now, let us prove the asymptotic behavior of $t_0$.
Expanding $\phi \left(\alpha\sqrt{t}\right)$, $N\left(\alpha\sqrt{t}\right)$, and $N\left(\beta\sqrt{t}\right)$ as powers of $\sqrt{t}$ and making some simplifications, we have
\begin{align*}
\frac{\partial C}{\partial y}(0,t)&=\frac{2}{\sigma} \left( \frac{c\rho(0) \phi(0)}{\sqrt{t}}  + c\rho(0)\frac{\alpha}{2} + \sqrt{t} \left( c\rho(0) \frac{\alpha^2}{2}\phi(0) + S_0 \phi(0) 2\mu\right)+ \mathcal{E} (t)\right)-c\partial_{y}\rho(0),
\end{align*}
where
\begin{align*}\label{Errorterm}
	\mathcal{E} (t)=S_0\beta\mu N(0)t-\rho(0)c\phi(0)\frac{\alpha^4}{24}t\sqrt{t}  +S_0\frac{\alpha^4}{6}\phi(0)t\sqrt{t}+(S_{0}+c)\mathcal{O}(t^2),
\end{align*}
and the term $\mathcal{O}(t^2)$ does not depend on neither $S_{0}$ nor $c$. 
Now, plugging $t=t_0$, recalling that 
%
$\partial_{y} C (0,t_0)=0$, and setting $\gamma:=c\rho(0)/(2\mu S_0)$, we have: 
\begin{equation*}
	0=1+ \frac{\alpha}{2\phi(0)}\sqrt{t_0} + \frac{t_0}{\gamma}+ \frac{\alpha^2}{2}t_0 + \frac{\mathcal{E}(t_0)\sqrt{t_0}}{2\gamma S_0\mu\phi(0)}-\frac{\sigma \partial_{y}\rho(0)\sqrt{t_0}}{2\rho(0)\phi(0)}.
\end{equation*}
Therefore, since $t_{0}\to{}0$  as $c/S_0 \to 0$, we have
\begin{equation*}
\frac{t_0}{\gamma}\left(1+\frac{\mathcal{E}(t_0)}{2 S_0\sqrt{t_0}\mu\phi(0)}\right) = -1 -\frac{\alpha}{2\phi(0)}\sqrt{t_0}-\frac{\alpha^2}{2}t_0-\frac{\sigma \partial_{y}\rho(0)\sqrt{t_0}}{2\rho(0)\phi(0)}\to -1,
\end{equation*}
which completes the proof of the second assertion since clearly 
\[
	\frac{\mathcal{E}(t_0)}{2 S_0\sqrt{t_0}\mu\phi(0)}=\frac{\beta}{2\phi(0)} N(0)\sqrt{t_{0}}-\frac{\rho(0)c\alpha^4}{48 \mu S_{0}}t_{0}+\frac{\alpha^4}{12\mu}t_{0}+(1+\frac{c}{S_{0}})\mathcal{O}(t^2)\to{}0,
\]
as $c/S_0 \to 0$.
\end{proof}
\begin{proof}[\textbf{Proof of Theorem~\ref{ttot0}}]
For the easiness of notation, let us use $t_0$ instead of $t_0^*$. We basically need to show that $\partial^2_{y} C (0,t_0)>0$ and the rest of the proof follows along the same line as that of Theorem \ref{BM:xt:ttot0}. To check that $\partial^2_{y} C (0,t_0)>0$, let us start by noting that 
\begin{align}
\label{SecondPartial:y0:L1}
	\frac{\partial^2 C}{\partial y^2}(0,t)&= S_0 \left( \frac{2\mu}{\sigma^2}\right)^2  
	N\left(\alpha\sqrt{t}\right)
	-\frac{4S_0\beta^{2}}{\sigma^{2}} e^{\mu t}
	N\left(\beta\sqrt{t}\right)+S_0
	N\left(-\alpha\sqrt{t}\right)\\
	\label{SecondPartial:y0:L2}
	&\quad -\frac{4\alpha}{\sigma^{2}\sqrt{t}} \rho(0)c\left[ \phi \left(\alpha\sqrt{t} \right) +\alpha\sqrt{t} N\left(\alpha\sqrt{t} \right) \right]\\
		\label{SecondPartial:y0:L3}
	&\quad +\frac{8c\rho'(0)}{\sigma\sqrt{t}} \left[ \phi \left(\alpha\sqrt{t} \right) +\alpha\sqrt{t} N\left(\alpha\sqrt{t} \right) \right]
	-c\rho''(0).
\end{align}
The terms in (\ref{SecondPartial:y0:L3}) are positive since, by assumption,  $\rho''(0)<0$, $\rho'(0)>0$, and $\alpha<0$ so that $\phi \left(\alpha\sqrt{t} \right) +\alpha\sqrt{t} N\left(\alpha\sqrt{t} \right)>0$. Let us then analyze the terms in (\ref{SecondPartial:y0:L1})-(\ref{SecondPartial:y0:L2}) that we denote $D(t)$. To this end, we recall that $\partial_y C (0,t_0)=0$ and, from the expression in (\ref{1stD:y0:L1})-(\ref{1stD:y0:L2}), we get that
\begin{align}\nonumber
&\frac{\rho(0)c}{S_{0}}\left[ \phi \left(\alpha\sqrt{t_{0}} \right) +\alpha\sqrt{t_{0}} N\left(\alpha\sqrt{t_{0}} \right) \right]=1 + \frac{2\alpha}{\sigma}N\left(\alpha\sqrt{t_{0}}\right) - \frac{2\beta}{\sigma} e^{\mu t_0}  N \left(\beta\sqrt{t_{0}} \right).\nonumber
\end{align}
We can then substitute into (\ref{SecondPartial:y0:L2}) to get:
\begin{align*}
	D(t_{0}) &= S_0 
	N\left( -\alpha \sqrt{t_{0}}\right)
	-e^{\mu t_{0}} N\left( \beta \sqrt{t_{0}}\right)
	-\frac{4\mu}{\sigma^2}\left(\frac{1}{2}
	-N\left( \alpha \sqrt{t_{0}}\right)
	+ e^{\mu t} N\left(\beta \sqrt{t_{0}}\right) \right)\\
	&+N\left(\beta \sqrt{t_{0}}\right)\left(1-e^{\mu t_0}\right)
\end{align*}
Finally, when $\mu < 0$,  all the terms above are positive and, therefore, $\partial^2_{y} C(0,t_0) >0$. 
\end{proof}

\begin{proof}[\textbf{Proof of Proposition~\ref{tknot_gbm_bounds}}]
{Recall from Theorem \ref{prop:t0limit} that $\partial_{y} C(0,t_0^*)=0$ and $\partial_{y} C(0,t)$ is a strictly decreasing function of $t$. Therefore, to find a lower bound (respectively, upper bound) of $t_0:=t_{0}^{*}$, we will find the root of a function that is a lower (respectively, upper) bound of $\partial_{y} C(0,t)$. The upper bounds and subsequent roots were obtained in both cases in the proof of Theorem \ref{gbmtknotexistence}. Let us define $A(t)$ be the first two terms of (\ref{1stD:y0:L1})-(\ref{1stD:y0:L2}). Then we can rewrite $A(t)= 2\sigma^{-1}\rho(0)c t^{-1/2} f(\sqrt{t})
+S_0 g(\sqrt{t})$, where 
\begin{align*}
	f(x)=\phi(\alpha x) + \alpha x N(\alpha x),\quad 
	g(x)=-1-\frac{2}{\sigma}\alpha N(\alpha x)+ \frac{2}{\sigma}\beta e^{\mu x^{2}}N(\beta x).
\end{align*}
Note that $f'(x)=\alpha N(\alpha x)$, $f''(x)=\alpha^{2}\phi(\alpha x)$, and $g'(x)=\frac{4}{\sigma}\left( \mu \phi (\alpha x) +\beta \mu  x e^{\mu x^2}N (\beta x) \right)$. Then,} for some 
$\theta\in(0,\sqrt{t})$, we have
\[
	f(\sqrt{t})=\phi(0)+\frac{\alpha}{2}\sqrt{t}+\frac{1}{2}f''(\theta) t=\phi(0)+\frac{\alpha}{2}\sqrt{t}+\frac{\alpha^{2}}{2}\phi(\alpha\theta) t\geq{}\phi(0)+\frac{\alpha}{2}\sqrt{t}.
\]
Similarly, for $\beta<0$, $g'(x)\geq{}\frac{4}{\sigma}\mu \phi (\alpha x)\geq{}\frac{4}{\sigma}\mu \phi (0)$ and, thus, for some $\theta'\in(0,\sqrt{t})$, $g(\sqrt{t})=g'(\theta')\sqrt{t}\geq{}\frac{4}{\sigma}\mu \phi (0)\sqrt{t}$. 
Putting together the two previous inequalities, 
\begin{align*}
	A(t)&\geq{}\frac{1}{\sigma\sqrt{t}}\left(2\rho(0)c\phi(0)+\alpha \rho(0)c\sqrt{t}+4S_{0}\mu\phi(0) t\right).\label{L2In2}
\end{align*}
The root of the last expression is precisely $\underline{t}(\mu\phi(0))$.
For $\beta \geq 0$, we instead use that, for some $\theta'\in(0,\sqrt{t})$, 
\[
	g(\sqrt{t})=g'(\theta')\sqrt{t}\geq{}\left(	\frac{4\mu}{\sigma}\phi(0)-\frac{4\beta}{\sigma}\sqrt{-\mu /2}e^{-0.5}\right)\sqrt{t},
\]
which follows from the inequality $\mu  x \exp(\mu x^2)>-\sqrt{-\mu /2}e^{-0.5}$.
Putting together the inequality above with the lower bound of $f(\sqrt{t})$, 
we get that
\begin{align*}
A(t)&\geq{}\frac{1}{\sigma\sqrt{t}}\left(2\rho(0)c\phi(0)+\alpha \rho(0)c\sqrt{t}+4S_0\left(	\mu\phi(0)-\beta\sqrt{-\mu /2}e^{-0.5}\right) t\right).\label{L2In3}
\end{align*}
It is easy to check that $\underline{t}(	\mu\phi(0)-\beta\sqrt{-\mu /2}e^{-0.5})$ is the positive root of the last expression. 
\end{proof}

\begin{proof}[\textbf{Proof of Theorem \ref{tinfty:regime1}}]
Recall that we are assuming that $\rho=\rho(y,t)$ is constant in $y$ and $t$. Under this assumption, $\partial C/\partial y$ reduces to the expression in (\ref{gbm:1st:Dy:L1})-(\ref{gbm:1st:Dy:L3}). Let us also use the following notation through the proof:
\begin{equation}\label{TNUP}
	d_{t} := \frac{-y^{*}(t)-\mu t+ \frac{\sigma^2 t}{2} }{\sigma\sqrt{t}}, \quad 
	e_{t}:=\frac{-y^{*}(t)+\mu t + \frac{\sigma^2}{2}t}{\sigma\sqrt{t}},\quad 
	f_{t}:=\frac{-y^{*}(t)+\mu t - \frac{\sigma^2}{2}t}{\sigma\sqrt{t}}.
\end{equation}
We will first show that $d_{t}\to\infty$ as $t\to\infty$.
For easiness of notation, in what follows we simply write $y(t)=y^{*}(t)$ and often omit $t$ in $d_{t}$ and $y(t)$. To begin with, note that $(-y+\alpha_{-}t)/\sqrt{t}\to-\infty$, as $t\to\infty$, regardless of the value of $y$, which suggests to approximate the first and second terms of (\ref{gbm:1st:Dy:L1}) and (\ref{gbm:1st:Dy:L3}), respectively, with $\phi(z)/z$.  Now, let us first assume that $d_{t}\to{}\bar{d}\in(-\infty,\infty)$, as $t$ goes to $\infty$ along a subsequence $t_{n}\to\infty$. This implies that $e_{t}=d_t+2\mu t/(\sigma\sqrt{t})\to-\infty$ as  $t$ goes to $\infty$ along $t_{n}$, 
and, thus, we can approximate the second and first terms of (\ref{gbm:1st:Dy:L1}) and (\ref{gbm:1st:Dy:L2}), respectively, by $\phi(z)/z$. Concretely, recalling the error functions $\mathcal{E}_1$ and $\mathcal{E}_2$ from (\ref{ErrorFuncDef}), we write (\ref{gbm:1st:Dy:L1})-(\ref{gbm:1st:Dy:L3}) as
\begin{align}\label{Eq4L1}
	\frac{\partial C}{\partial y} &=  -S_0 e^{-y}  N (d_{t}) +\phi \left(d_{t} \right) \frac{2S_0 e^{-y}}{\sigma^{2}} \left\{ \frac{\mu}{d_t} + \frac{\alpha_{+}}{e_{t}} \right\}+\phi \left(d_{t} \right)  \frac{2\rho c}{\sigma\sqrt{t}} \left\{ 1- \frac{\alpha\sqrt{t}}{f_{t}}\right\} \\
	 \label{Eq4L4}
	 &\quad +\phi \left(d_{t} \right) \frac{2S_0 e^{-y}}{\sigma^{2}}\left\{ |\mu| \mathcal{E}_1(-f_{t}) +\alpha_{+} \mathcal{E}_1(-e_{t}) \right\} +\phi \left(d_{t} \right)  \frac{2\rho c\alpha_{-}}{\sigma^{2}}\mathcal{E}_1( -f_{t}),
\end{align}
where we have used the notation in (\ref{TNFGBM}) and (\ref{TNUP}). 
Note that the first two terms on the right-hand side of  (\ref{Eq4L1}) times $\sqrt{t}$ 
goes to $0$ since $d_{t}\to{}\bar{d}$. In a similar way, the terms of (\ref{Eq4L4}) times $\sqrt{t}$ all converge to $0$, while the third term on the right-hand side of (\ref{Eq4L1}) times $\sqrt{t}$  can be rewritten as
\[
\phi (d_{t}) 2\rho c \frac{y}{y-\mu t + \frac{\sigma^2 t}{2}} =  \phi (d_{t}) 2\rho c\frac{-d_{t}\sigma\sqrt{t}-\mu t + \frac{\sigma^2 t}{2}}{-d_{t}\sigma\sqrt{t}+2(-\mu t + \frac{\sigma^2 t}{2})}\to \phi (\bar{d}) \rho(0)c
\]
as $t\to\infty$ along $t_{n}$. We conclude that  $\sqrt{t_{n}} \partial_{y} C(y(t_{n}),t_{n}) \to \phi (\bar{d}) \rho d$, 
which is contradiction, since $\partial_{y} C(y(t_{n}),t_{n}) =0$.
Secondly, let us assume that  $d_{t} \to -\infty$, as $t\to\infty$, and  note that, in that case,  we can apply the approximation $N(-z)\approx \phi(z)/z$ to all terms and, after some simplifications, we obtain:
\begin{align*}
	0 &=  
  \frac{S_0 e^{-y} \sigma \sqrt{t} 2\mu t 2y}{(y - \mu t +\frac{\sigma^2 t}{2})(y-\mu t - \frac{\sigma^2 t}{2}) ( y+\mu t - \frac{\sigma^2 t}{2})} + \frac{ 2\rho c y}{\sigma\sqrt{t} (y-\mu t + \frac{\sigma^2 t}{2})}\\
	&\quad- \frac{2\mu S_0 e^{-y}}{\sigma^2} \mathcal{E}_1(-f_{t})+ \frac{\alpha_{+}S_0 e^{-y}}{\sigma^{2}}\mathcal{E}_1 (-e_{t}) - S_0 e^{-y}\mathcal{E}_1 (-d_t)+\frac{2\rho c\alpha_{-}}{\sigma^{2}}\mathcal{E}_1(-f_t).
\end{align*}
From here, we conclude that
\[
\frac{ \sqrt{t} (y-\mu t + \frac{\sigma^2}{2}t)  }{ y\phi(d_{t}) } \frac{\partial C}{\partial y}\to \frac{2\rho c}{\sigma}, 
\]
which again leads to a contradiction. 
 We can then conclude that $d_{t} \to \infty$, as $t\to\infty$.  We are now ready to show the first and second approximation for $y^*(t)$. We consider two cases:

\medskip
\noindent \textbf{Case 1: $\mu+\frac{\sigma^2}{2}<0$}. Let us apply the approximation $\phi(z)/z$ to the appropriate terms in (\ref{gbm:1st:Dy:L1})-(\ref{gbm:1st:Dy:L3}) and divide by 
$2\phi \left({d_{t}} \right) {\rho{c}}y/\left(\sigma\sqrt{t}(y-\mu t +\sigma^2 t/2)\right)$. Then, we have the following equation:
\begin{align*}
	0& = \frac{ -S_0 e^{-y}}{  \phi \left({d_{t}} \right){\rho{c}}\frac{2}{\sigma\sqrt{t}} \frac{y}{y-\mu t + \frac{\sigma^2 t}{2}}} +1\\
	&\quad + \frac{S_0 e^{-y}\sigma \sqrt{t}}{2{\rho{c}}} \left( \frac{\sigma \sqrt{t} (y+\mu t + \frac{\sigma^2 t}{2})}{y(y-\mu t - \frac{\sigma^2 t}{2})}+ \frac{y-\mu t + \frac{\sigma^2 t}{2}}{y {d_{t}}}\right)\\
	 &\quad -S_0 e^{-y}\frac{y }{2{\rho{c}}}  \big(\frac{2|\mu|}{\sigma^2}\big) \frac{1}{f_{t}}\mathcal{E}_1 (-f_t)-S_0 e^{-y}\frac{y}{{\rho{c}}} \frac{\alpha_{+}}{\sigma^{2}}\frac{1}{f_{t}} \mathcal{E}_1(-e_{t})\\
	 	 & \quad -S_0e^{-y}\frac{\sigma^{2}t}{2{\rho{c}}}\mathcal{E}_1 \left({d_{t}} \right)\frac{f_{t}}{y}  - \frac{y\alpha_{-} }{\sigma^{2}} \frac{1}{f_{t}}\mathcal{E}_1(-f_{t}).
\end{align*}
Therefore, as $t \to \infty$, we have $S_0 e^{-y}\sim  \phi(d_{t})  {\rho{c}}\frac{2}{\sigma^{2}{t}} \frac{y}{f_{t}} $. Let us take the logarithm to the both sides  to get
\begin{equation}\label{KLFSAp}
\frac{(y - ty_{0}^{+})(y-ty_{0}^{-})}{2\sigma^2 t} + \frac{1}{2} \ln t +\ln\left( \frac{y-\mu t + \sigma^2 t/2}{y}\right)+ \ln \bigg( \frac{S_0\sigma \sqrt{2\pi}}{2{\rho{c}}} \bigg) \to 0,
\end{equation}
where  $y_{0}^{\pm}:=-\mu + \frac{3}{2}\sigma^2  \pm \sigma\sqrt{-2\mu+2\sigma^2}$. 
Let us divide both sides by $t$ and, since $y/t \nrightarrow 0$ (as proved in Lemma \ref{gbm:ttoinfty:Lem2}), $\ln((y-\mu t + \sigma^2 t/2)/y)$ is bounded, which implies that $(y/t -y_{0}^{+})(y/t -y_{0}^{-})\to{}0$.
To decide whether $y/t\to y_{0}^{+}$ or $y/t\to y_{0}^{-}$, note that, since $d_{t} \to \infty$, $y/t < -\mu + \sigma^2 /2$, for large enough $t$ and, thus,  $y/t - y_{0}^{+}< -\sigma^2 - \sigma\sqrt{-2\mu +2\sigma^2}<0$. Therefore, we must have that $y/t-y_{0}^{-}\to{}0$, 
and, we conclude the first order approximation. For the second approximation, let us divide both sides of (\ref{KLFSAp}) by $\ln(t)$ and rewrite the resulting first term to conclude that 
\[
	\frac{\left(\frac{y}{t} - y_{0}^{+}\right)\left(\frac{y}{t} - y_{0}^{-}\right)}{2\sigma^2 \frac{1}{t}\ln t} \to -\frac{1}{2},
\]
and, since $y/t -y_{0}^{+}\to y_{0}^{-}-y_{0}^{+}=-2\sigma\sqrt{-2\mu+2\sigma^2}$, we finally conclude 
the second-order approximation for $y^{*}(t)$.

\medskip
\noindent \textbf{Case 2: $\mu+\frac{\sigma^2}{2}\geq{}0$}.
We consider the case of $e_{t}\to \bar{e}\in(-\infty,\infty]$, when $t$ goes to $\infty$ along a sequence $t_{n}\to\infty$ (the case $e_{t}\to -\infty$ is similar). Then, by approximating the first term in (\ref{gbm:1st:Dy:L1}) and the second terms in (\ref{gbm:1st:Dy:L3}) by $\phi(z)/z$ and making some simplifications, we can write (\ref{gbm:1st:Dy:L1})-(\ref{gbm:1st:Dy:L3}) as:
\begin{align*}
	0 &= \frac{ -S_0 e^{-y} + S_0e^{-y-\frac{2y\mu}{\sigma^2}+\mu t}\bigg( \frac{2\mu}{\sigma^2}+1\bigg) N(d_{t})}{  \phi({d_{t}}) {\rho{c}}\frac{2}{\sigma\sqrt{t}} \frac{y}{y-\mu t + \frac{\sigma^2 t}{2}}} +1+ S_0 e^{-y}\left(\frac{(-\mu)t}{{\rho{c}}y}+  \frac{y-\mu t + \frac{\sigma^2 t}{2}}{y {d_{t}}}\right)\\
	 &\quad +S_0 e^{-y}\frac{1}{2{\rho{c}}}\left\{ \frac{2\mu}{\sigma^2}\frac{y^{2}}{f_{t}} \mathcal{E}_1 (-f_{t})- \sigma^{2}t f_{t}\mathcal{E}_1 \left({d_{t}} \right)\right\} - \frac{y \alpha_{-}}{\sigma^{2}} \frac{1}{f_{t}}\mathcal{E}_1(-f_{t}).
\end{align*}
This in turn implies that  
\begin{align*}
	\frac{S_0 \sqrt{2 \pi}\exp \left( -y + \frac{(-y-\mu t + \sigma^2 t/2)^2}{2\sigma^2 t} \right) \left[ 1-\left( \frac{2\mu}{\sigma^2}+1\right)N(d_t) \exp \left( \frac{2\mu}{\sigma}d_t\sqrt{t} -\frac{2\mu^2}{\sigma^2}t \right)\right]}{\left(r+f\right)\frac{2}{\sigma\sqrt{t}} \frac{y}{y-\mu t + \frac{\sigma^2 t}{2}}} \to 1.
\end{align*}
Now, let us take logarithm to the both sides of the previous equation to get
\begin{align}
\nonumber
	&-y +\frac{(-y-\mu t+ \frac{\sigma^2 t}{2})^2 }{2\sigma^2 t}+\frac{\ln t}{2} +\ln \frac{(y-\mu t + \sigma^2 t/2)}{y} + \ln \left( \frac{S_0\sigma \sqrt{2\pi}}{2{\rho{c}}} \right)\\
	\label{Eq17}
	&+\ln \left[ 1-\left( \frac{2\mu}{\sigma^2}+1\right) N(d_t) \exp \left( \frac{2\mu}{\sigma}d_t\sqrt{t} -\frac{2\mu^2}{\sigma^2}t \right)\right] \to 0.
\end{align}
Note that we are under the assumption that $d_{t}\to \bar{d}\in(-\infty,\infty]$, 
which implies the last term in (\ref{Eq17}) converges to 0. 
It is then clear that, 
\begin{equation}
\nonumber
	-y +\frac{(-y-\mu t+ \frac{\sigma^2 t}{2})^2 }{2\sigma^2 t}+\frac{\ln t}{2} +\ln \frac{(y-\mu t + \sigma^2 t/2)}{y} + \ln \left( \frac{S_0\sigma \sqrt{2\pi}}{2{\rho{c}}} \right)\to 0,
\end{equation}
which can then produce the exact same expression as (\ref{KLFSAp}) and we proceed as therein to get the second order approximation.
\end{proof}

\begin{proof}[\textbf{Proof of  Theorem \ref{sig0:delta_approx}}]
{In Lemmas~\ref{sigma0:uniqueness} and~\ref{sig0:locmin} below, we show that $y(\sigma)^*\in [0,-\mu t]$ and $\delta(\sigma):=y^{*}(\sigma)+\mu t\nearrow 0$, as $\sigma \to 0$.} Then, 
replacing  $y=y^*(\sigma)$ with $-\mu t+\delta(\sigma)$ in (\ref{gbm:1st:Dy:L1})-(\ref{gbm:1st:Dy:L3}) and using that $\delta(\sigma) \to 0$, we get
\begin{align*}
	  \phi \left(\frac{-\delta(\sigma)+ \frac{\sigma^2 t}{2} }{\sigma\sqrt{t}} \right)\frac{\rho(r+f)}{\sigma\sqrt{t}} \to S_0 e^{\mu t}.
\end{align*}
Taking logarithms to both sides, dividing by $\ln (1/\sigma)$, and denoting $a=\ln S_0 + \mu t + \frac{1}{2} \ln t - \ln (r+f)+\frac{1}{2} \ln 2\pi$, we get
\begin{equation}\label{sig0:conv}
	1 - \frac{\delta^2(\sigma)}{2\sigma^2 t \ln (1/\sigma)}+\frac{\sigma^2 t \delta(\sigma)}{2\sigma^2 t \ln (1/\sigma)}-\frac{\sigma^4 t^2 /4}{2\sigma^2 t \ln (1/\sigma)}
	 -\frac{a}{\ln (1/\sigma)} \to 0
\end{equation}
As $\sigma \to 0$ and $\delta \to 0$, the third term and the fourth term of this expression converges to 0, 
which implies $\delta(\sigma)\sim -\sqrt{2\sigma^2 t \ln (1/ \sigma)}$, as $\sigma\to{}0$. 
To prove the second approximation, rewrite (\ref{sig0:conv}) as
\[
	\frac{1+ \frac{\delta(\sigma)}{\sqrt{2\sigma^2 t \ln (1/\sigma)}}}{c/\ln (1/\sigma) } + \frac{\delta(\sigma)}{4c}  -\frac{\sigma^2 t}{26c}  - \frac{1}{\big( 1- \frac{\delta(\sigma)}{ \sqrt{2\sigma^2 t \ln (1/\sigma)}} \big)}\to 0.
\]
Since $\delta(\sigma)\to 0$ and $\sigma \to 0$, and from the first-order approximation for $\delta(\sigma)$, the second and third terms above converge to 0, while the fourth term converge to $-1/2$. Therefore, 
\begin{equation*}
		 \frac{\delta(\sigma)}{\sqrt{2 \sigma^2 t \ln (1/\sigma)}}+1  \sim \frac{a}{2 \ln (1/\sigma)},
\end{equation*}
and this implies the second-order approximation. 
\end{proof}

\section{Supporting Lemmas}\label{SuppLmsC}
{\begin{lemma}\label{DerivationOfC}
	Under the setting and assumptions of Section \ref{Sec2}, the cost function can be written as in Eq.~(\ref{eq:CostContinuousApprox}).
\end{lemma}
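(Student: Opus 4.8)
The plan is to compute $\bar{C}_{\delta,\eps}(x,t)$ by the law of total expectation, decomposing the probability space into the four mutually exclusive and exhaustive scenarios laid out in the strategy 1--2 of Section \ref{Sec2} and weighting each associated cost by the probability of the corresponding event. First I would fix the notation $A:=\{\bar{Y}_t>\bar{S}_0-x+\eps\}$ for the Case 1 event and $A^c=\{\bar{Y}_t\le \bar{S}_0-x+\eps\}$ for Case 2, and inside $A^c$ record the three disjoint sub-events: $E_t$ (Case 2(i), the order is executed, cost $-x-r$), $E_t^c\cap\{\tau_{j+1}\le t\}$ (Case 2(ii), cost $-x+f+2\eps$), and $E_t^c\cap\{\tau_{j+1}>t\}$ (Case 2(iii), cost $-x+\eps+f$), while on $A$ the cost is $\bar{S}_t-\bar{S}_0+f$. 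Since these four events partition the sample space, the total expected cost is the sum of (cost)$\times$(probability) over them.

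The key bookkeeping step is to express the Case 2 probabilities through $\rho$. By its very definition $\rho(x,t)=P(E_t\mid A^c)$, so $P(E_t)=\rho\,P(A^c)$ and $P(E_t^c\cap A^c)=(1-\rho)P(A^c)$; in particular $P(\text{2(ii)})+P(\text{2(iii)})=(1-\rho)P(A^c)$. With this identity in hand I would collect terms by type. The $-x$ contributions from the three Case 2 sub-events add up to $-x\,P(A^c)$. For the fee and rebate, the rebate $-r$ occurs only on $E_t$, contributing $-r\rho\,P(A^c)$, while the fee $+f$ occurs on $A$ and on both Case 2(ii) and 2(iii), contributing $f\,P(A)+f(1-\rho)P(A^c)=f-f\rho\,P(A^c)$; combining $-r\rho-f\rho$ then produces the term $-\rho(r+f)P(A^c)$ and leaves the free-standing $+f$.

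For the tick-order terms I would write $2\eps\,P(\text{2(ii)})+\eps\,P(\text{2(iii)})$ and substitute $P(\text{2(ii)})=(1-\rho)P(A^c)-P(\text{2(iii)})$ to isolate the term $2\eps(1-\rho)P(A^c)$ together with an $O(\eps)$ correction concentrated on the Case 2(iii) event $\{A^c,\tau_{j+1}>t,E_t^c\}$ (the last line of (\ref{eq:CostContinuousApprox})); this correction simply records that in Case 2(iii) the terminal market purchase happens one tick lower than in Case 2(ii). The Case 1 contribution is left in the conditional form $E[\bar{S}_t-\bar{S}_0\mid A]\,P(A)$. Assembling these four groups yields (\ref{eq:CostContinuousApprox}).

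I expect the only genuine obstacle to be organizational rather than analytic: correctly matching each verbal case of Section \ref{Sec2} to its event, and, crucially, remembering that $\rho$ enters as a \emph{conditional} probability so that the factor $P(A^c)$ multiplies $\rho$ in every Case 2 term. No distributional information about $\bar{Y}_t$ or $\bar{S}_t$ is needed at this stage — the explicit laws enter only later, through the joint distribution (\ref{JntCndSY}) used in Lemma \ref{BMCostFunc}. I would also remark that the precise form of the $O(\eps)$ terms is immaterial for the rest of the paper, since all of them vanish in the limit $\delta,\eps\to 0$ that yields the continuous-time cost (\ref{eq:CostContinuous}).
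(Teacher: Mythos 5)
Your proposal is correct and follows essentially the same route as the paper's proof: the same four-way decomposition of the expected cost over the events of Cases 1, 2(i), 2(ii), 2(iii), the same use of $\rho(x,t)=P(E_t\mid \bar{Y}_t\le \bar{S}_0-x+\eps)$ to turn the conditional probability into factors of $P(\bar{Y}_t\le \bar{S}_0-x+\eps)$, and the same regrouping of the fee/rebate and tick terms with a residual $\eps$-correction supported on the Case 2(iii) event. One small point in your favor: your derivation yields that correction with a minus sign, $-\eps\, P(\bar{Y}_t\le \bar{S}_0-x+\eps,\tau_{j+1}>t,E_t^c)$, which agrees with the paper's own proof (the $+\eps$ sign in the displayed Eq.~(\ref{eq:CostContinuousApprox}) appears to be a typo).
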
}
\begin{proof}
{Using the notation introduced above Eq.~(\ref{eq:CostContinuousApprox}), we have:
\begin{align*}
\bar{C}_{\delta,\eps}(x,t)&= E\left[\text{Cost}\times\mathbf{1}\{\bar{Y}_t > S_0 - x + \epsilon\}\right] +  E\left[\text{Cost}\times\mathbf{1}\{\bar{Y}_t \leq S_0 - x + \epsilon\}\right] \\
& = E\left[\left.\bar{S}_t+f-\bar{S}_{0}\right|\bar{Y}_t >\bar{S}_0 -x+\eps\right]P\left(\bar{Y}_t>\bar{S}_0-x+\eps\right)\\
&\quad+  E\left[\text{Cost}\times\mathbf{1}\{\bar{Y}_t \leq S_0 - x + \epsilon,E_t\}\right]\\
&\quad + E\left[\text{Cost}\times\mathbf{1}\{\bar{Y}_t \leq S_0 - x + \epsilon,E_t^c, \tau_{j+1} > t\}\right] \\ 
&\quad + 
E\left[\text{Cost}\times\mathbf{1}\{\bar{Y}_t \leq S_0 - x + \epsilon,E_t^c, \tau_{j+1} \leq t\}\right]\\
& = E\left[\left.\bar{S}_t+f-\bar{S}_0\right|\bar{Y}_t >\bar{S}_0 -x+\eps\right]P\left(\bar{Y}_t>\bar{S}_0-x+\eps\right)\\
&\quad+ (-x-r)P(\bar{Y}_t \leq \bar{S}_0-x+\eps,{E_{t}}) \\
&\quad + (-x+f+\eps)P(\bar{Y}_t \leq S_0 - x + \epsilon,E_t^c, \tau_{j+1} > t) \\ 
&\quad + 
(-x+f+2\eps)P(\bar{Y}_t \leq S_0 - x + \epsilon,E_t^c, \tau_{j+1} \leq t) \\
& = E\left[\left.\bar{S}_t+f-\bar{S}_0\right|\bar{Y}_t >\bar{S}_0 -x+\eps\right]P\left(\bar{Y}_t>\bar{S}_0-x+\eps\right)\\
&\quad+ \left[(-x-r)\rho(x,t) + (-x + f + 2\eps)(1-\rho(x,t))\right]P(\bar{Y}_t \leq \bar{S}_0-x+\eps) \\
&\quad -\eps P(\bar{Y}_t \leq \bar{S}_0-x+\eps, \tau_{j+1} > t,E_t^c),
\end{align*}
where we recall that $\rho(x,t)=P(E_t|\bar{Y}_t \leq S_0 - x + \epsilon)$. After a rearrangement of the terms above, we get Eq.~(\ref{eq:CostContinuousApprox}).}
\end{proof}
\begin{lemma}\label{SmplIdntN}
Let $\alpha:=(\mu- \sigma^{2}/2)/\sigma$ and $\beta:=(\mu+\sigma^{2}/2)/\sigma$,
and suppose that $\mu<0$. Then, 
\begin{equation}\label{SmplIneqBVU}
	1 + \frac{2\alpha}{\sigma} N\left(\alpha\sqrt{t}\right) - \frac{2\beta}{\sigma}e^{\mu t}  N \left(\beta\sqrt{t}\right) >\phi(\alpha\sqrt{t})\frac{2a|\mu|}{\sigma}\sqrt{t},
\end{equation}
where $a=1$ if $\mu\leq{}-\sigma^{2}/2$ and $a=2$ if $\mu>-\sigma^{2}/2$.
\end{lemma}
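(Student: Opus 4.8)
The plan is to work with the difference $F(t)$ of the two sides of the claimed inequality (\ref{SmplIneqBVU}),
\[
F(t) := 1 + \frac{2\alpha}{\sigma}N(\alpha\sqrt{t}) - \frac{2\beta}{\sigma}e^{\mu t}N(\beta\sqrt{t}) - \frac{2a|\mu|}{\sigma}\sqrt{t}\,\phi(\alpha\sqrt{t}),
\]
and to show $F(t)>0$ for every $t>0$. The first observation is that $F(0^{+})=0$: as $t\searrow 0$ both $N(\cdot)$ terms tend to $1/2$, the last term vanishes, and $\tfrac{2\alpha}{\sigma}\cdot\tfrac12-\tfrac{2\beta}{\sigma}\cdot\tfrac12=(\alpha-\beta)/\sigma=-1$ because $\beta-\alpha=\sigma$. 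Since $F$ is continuous on $[0,\infty)$, it therefore suffices to prove $F'(t)>0$ on $(0,\infty)$, which makes $F$ strictly increasing and hence positive away from $0$.

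Before differentiating I would record three elementary identities: $\beta-\alpha=\sigma$, $\beta^{2}-\alpha^{2}=2\mu$, and the key relation $e^{\mu t}\phi(\beta\sqrt{t})=\phi(\alpha\sqrt{t})$, which follows from $\mu-\beta^{2}/2=-\alpha^{2}/2$. Differentiating the first three terms of $F$ (call their sum $L$) and using $\beta^{2}-\alpha^{2}=2\mu$ to collapse the two $\phi$ contributions yields the compact form
\[
L'(t)=\frac{2|\mu|}{\sigma}\left(\frac{1}{\sqrt{t}}\phi(\alpha\sqrt{t})+\beta\,e^{\mu t}N(\beta\sqrt{t})\right),
\]
while the last term of $F$ differentiates to $\tfrac{a|\mu|}{\sigma}\phi(\alpha\sqrt{t})\big(\tfrac{1}{\sqrt{t}}-\alpha^{2}\sqrt{t}\big)$. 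Subtracting, the $1/\sqrt{t}$ contributions partly cancel and
\[
F'(t)=\frac{|\mu|}{\sigma}\left(\frac{2-a}{\sqrt{t}}\phi(\alpha\sqrt{t})+a\,\alpha^{2}\sqrt{t}\,\phi(\alpha\sqrt{t})+2\beta\,e^{\mu t}N(\beta\sqrt{t})\right).
\]
When $\mu>-\sigma^{2}/2$, so that $\beta>0$ and $a=2$, the first term drops and the remaining two are strictly positive; this settles that case immediately.

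When $\mu\le-\sigma^{2}/2$, so that $\beta\le 0$ and $a=1$, I would substitute $\phi(\alpha\sqrt{t})=e^{\mu t}\phi(\beta\sqrt{t})$ and factor out $e^{\mu t}>0$, reducing $F'(t)>0$ to
\[
\phi(\beta\sqrt{t})\Big(\tfrac{1}{\sqrt{t}}+\alpha^{2}\sqrt{t}\Big)+2\beta\,N(\beta\sqrt{t})>0 .
\]
For $\beta=0$ this is obvious. For $\beta<0$ I would set $z:=-\beta\sqrt{t}>0$ and $k:=\alpha^{2}/\beta^{2}$, noting that $k>1$ because $|\alpha|=|\beta|+\sigma$; the inequality then becomes $\psi(z):=\phi(z)\big(\tfrac{1}{z}+kz\big)-2N(-z)>0$.

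The crux is this last inequality, where the naive Mills bound $N(-z)\le\phi(z)/z$ of (\ref{imp.inequ}) is too weak for small $z$. Instead I would analyze $\psi$ itself: a direct computation gives $\psi'(z)=\phi(z)\big[(1+k)-z^{-2}-kz^{2}\big]$, and the bracket factors as $-k\,z^{-2}(z^{2}-1/k)(z^{2}-1)$. Since $k>1$, this shows $\psi$ decreases on $(0,1/\sqrt{k})$, increases on $(1/\sqrt{k},1)$, and decreases on $(1,\infty)$, so its only interior minimum is at $z_{*}=1/\sqrt{k}$. At that point $\tfrac{1}{z_{*}}+kz_{*}=2\sqrt{k}$, and applying the Mills bound precisely at $z_{*}$ gives $N(-z_{*})<\sqrt{k}\,\phi(z_{*})$, whence $\psi(z_{*})=2\sqrt{k}\,\phi(z_{*})-2N(-z_{*})>0$; on $(1,\infty)$ the function $\psi$ decreases toward its limit $0$ and so stays positive. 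Thus $\psi>0$ on all of $(0,\infty)$, which establishes $F'>0$ in the remaining case and completes the proof.
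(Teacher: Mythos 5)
Your proof is correct --- the derivative computations, the factorization $\psi'(z)=-kz^{-2}\phi(z)(z^{2}-1/k)(z^{2}-1)$, and the limit claims all check out --- and it takes a genuinely different route from the paper's in the harder case. The paper rewrites the left-hand side of (\ref{SmplIneqBVU}) as $N(-\alpha\sqrt{t})-e^{\mu t}N(\beta\sqrt{t})+\tfrac{2|\mu|}{\sigma^{2}}\bigl(e^{\mu t}N(\beta\sqrt{t})-N(\alpha\sqrt{t})\bigr)$, proves the bracketed term is positive by a change of variables (the integral form of your identity $e^{\mu t}\phi(\beta\sqrt{t})=\phi(\alpha\sqrt{t})$), discards it, and then, for $a=1$, lower-bounds $N(-\alpha\sqrt{t})-N(\beta\sqrt{t})$ by the minimum of the density times the width of the interval $[\beta\sqrt{t},-\alpha\sqrt{t}]$, which is exactly $\phi(\alpha\sqrt{t})\,2|\mu|\sqrt{t}/\sigma$; for $a=2$ it applies the mean value theorem to the function $g$ that is the negative of your $L$ in the variable $x=\sqrt{t}$, with the same derivative bound you compute, so in that case your argument essentially coincides with the paper's. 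Where you genuinely diverge is $\mu\le-\sigma^{2}/2$: instead of the decompose-and-discard step, you keep the full difference $F$, use $F(0^{+})=0$, and reduce $F'>0$ to the one-variable inequality $\psi(z)=\phi(z)(1/z+kz)-2N(-z)>0$ with $k>1$, settled by locating the unique interior minimum at $z_{*}=1/\sqrt{k}$ and invoking the Mills bound only there --- correctly recognizing that applying (\ref{imp.inequ}) indiscriminately fails on $(0,1/\sqrt{k})$. The paper's route is shorter and makes the constant in the $a=1$ case transparent (density times interval width); yours treats both cases within a single monotonicity-in-$t$ scheme, explains the dichotomy in $a$ via the coefficient $2-a$, and proves slightly more, namely that the gap between the two sides of (\ref{SmplIneqBVU}) is increasing in $t$. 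One pedantic point: at $z_{*}$ you need the \emph{strict} Mills inequality, whereas (\ref{imp.inequ}) is stated with $\le$; strictness for $z>0$ is immediate from $N(-z)=\int_{z}^{\infty}\phi(u)\,du<\int_{z}^{\infty}(u/z)\phi(u)\,du=\phi(z)/z$, and in any case even $\psi(z_{*})\ge 0$ would do, since $F'$ would then vanish at a single value of $t$ and $F$ would remain strictly increasing.
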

\begin{proof}
Let $A$ denote the left-hand side of the inequality in (\ref{SmplIneqBVU}) and note that 
\begin{align*}
	A =
	N\left(-\alpha\sqrt{t}\right) - e^{\mu t}  N \left(\beta\sqrt{t}\right) +\frac{-2\mu}{\sigma^{2}}\left(e^{\mu t}  N \left(\beta\sqrt{t}\right) -N\left(\alpha\sqrt{t}\right)\right)>N\left(-\alpha\sqrt{t}\right) - e^{\mu t}  N \left(\beta\sqrt{t}\right),
\end{align*}
since
		\begin{align*}
		e^{\mu t} N(\beta\sqrt{t})-N(\alpha\sqrt{t})&=\int_{-\infty}^{\beta\sqrt{t}}e^{\mu t}\phi(z)dz-\int_{-\infty}^{\alpha\sqrt{t}}\phi(z)dz\\
		&=\int_{-\infty}^{0}\frac{1}{\sigma\sqrt{t}}\left(e^{\mu{}t}\phi\left(\frac{v+\beta t}{\sigma\sqrt{t}}\right)-\phi\left(\frac{v+\alpha t}{\sigma\sqrt{t}}\right)\right)dv\\
		&=  \int_{-\infty}^{0}\phi\left(w+\alpha \sqrt{t}\right)\left(e^{-\sigma\sqrt{t}w}-1\right)dw>0.
	\end{align*}
The inequality for the case $\mu\leq{}-\sigma^{2}/2$ then follows because
\begin{align*}
	N(-\alpha\sqrt{t})-e^{\mu t} N(\beta\sqrt{t})&>N(-\alpha\sqrt{t})- N(\beta\sqrt{t})\\
	&\geq{}\min\{\phi(\alpha\sqrt{t}),\phi(\beta\sqrt{t})\}(-\alpha-\beta)\sqrt{t}\\
	&=\phi(\alpha\sqrt{t})\frac{2|\mu|}{\sigma}\sqrt{t}.
\end{align*}
where we used that $-\alpha>\beta$ and $|\alpha|>|\beta|$. For the other case $\mu>{}-\sigma^{2}/2$, let 
\[
	g(x)=-1-\frac{2}{\sigma}\alpha N(\alpha x)+ \frac{2}{\sigma}\beta e^{\mu x^{2}}N(\beta x)
\]
and note that, since $\beta>0$ when  $\mu>{}-\sigma^{2}/2$, 
\[
	g'(x)=\frac{4}{\sigma}\left( \mu \phi (\alpha x) +\beta \mu  x e^{\mu x^2}N (\beta x) \right)\leq{}4\mu \phi (\alpha x).
\]
Thus, for some $\theta\in(0,\sqrt{t})$, $g(\sqrt{t})=g'(\theta)\sqrt{t}\leq{}\frac{4}{\sigma}\mu \phi (\alpha \theta)\sqrt{t}\leq{}\frac{4}{\sigma}\mu \phi (\alpha \sqrt{t})\sqrt{t}$,
which is the same inequality as in (\ref{SmplIneqBVU}) when $\mu>{}-\sigma^{2}/2$.
\end{proof}

\begin{lemma}
\label{gbm:ttoinfty:Lem2} 
 {Let $\mu < 0$ and suppose that $\rho=\rho(y,t)$ is constant in $y$ and $t$. Let $y^*(t)$ be as in Theorem \ref{gbmtknotexistence}.} Then, $y^*(t)/t \nrightarrow 0$, as $t \to \infty$.
\end{lemma}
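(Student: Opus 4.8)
The plan is to argue by contradiction using the first-order optimality condition. Since $\rho$ is constant in $y$, Theorem~\ref{gbmtknotexistence} applies with $\partial_y\rho\equiv 0$ (so its condition $\limsup_{y\to\infty}\partial_y\rho\le 0$ holds) and with $\liminf_{y\to\infty}e^{y}y^{2}\rho=+\infty$ exceeding any bound; hence its hypotheses hold for all large $t$ and the minimizer satisfies $y^*(t)\in(0,\infty)$. Being an interior minimizer of the smooth map $y\mapsto C(y,t)$, it obeys $\partial_y C(y^*(t),t)=0$, where line~(\ref{gbm:1st:Dy:L4}) drops out and $\partial_y C$ equals the three lines (\ref{gbm:1st:Dy:L1})--(\ref{gbm:1st:Dy:L3}). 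Suppose, toward a contradiction, that $y^*(t)/t\to 0$; writing $y=y^*(t)$, this means $y=o(t)$ with $y\ge 0$. I will show that this forces $\partial_y C(y,t)<0$ for all large $t$, contradicting the optimality condition.

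The key step is to divide $\partial_y C(y,t)$ by $S_0e^{-y}>0$ and compute the limit of each line. Recalling from (\ref{TNFGBM}) that $\alpha_-=\mu-\sigma^2/2<0$ and $\alpha_+=\mu+\sigma^2/2$, when $y=o(t)$ the Gaussian arguments behave as $(-y+\alpha_-t)/(\sigma\sqrt t)\to-\infty$, $(-y-\alpha_-t)/(\sigma\sqrt t)\to+\infty$, and $(-y+\alpha_+t)/(\sigma\sqrt t)$ diverges with the sign of $\alpha_+$. Consequently $N\big((-y-\alpha_-t)/(\sigma\sqrt t)\big)\to 1$, so line~(\ref{gbm:1st:Dy:L2}) divided by $S_0e^{-y}$ tends to $-1$: its remaining summand carries the factor $e^{\mu t}=e^{-|\mu|t}$, which beats the subexponential $e^{-2\mu y/\sigma^2}=e^{2|\mu|y/\sigma^2}$ because $y=o(t)$.

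The main obstacle is to verify that lines (\ref{gbm:1st:Dy:L1}) and (\ref{gbm:1st:Dy:L3}), once divided by $S_0e^{-y}$, also vanish, because these lines carry the \emph{growing} prefactor $e^{-2\mu y/\sigma^2+y}=e^{(2|\mu|/\sigma^2+1)y}$ while $y$ may itself tend to infinity sublinearly. The plan is to bound every $N(\cdot)$ and $\phi(\cdot)$ whose argument tends to $-\infty$ by the Gaussian tail via (\ref{imp.inequ}); after dividing by $S_0e^{-y}$ and completing the square, each surviving term is controlled by $\exp\big\{c_0\,y-y^2/(2\sigma^2 t)-\alpha_-^2 t/(2\sigma^2)\big\}$ for an explicit constant $c_0$ (the worst case being $c_0=|\mu|/\sigma^2+3/2$, from the $\phi$-term of line~(\ref{gbm:1st:Dy:L3})), while the summand of line~(\ref{gbm:1st:Dy:L1}) involving $e^{\mu t}N\big((-y+\alpha_+t)/(\sigma\sqrt t)\big)$ is trivially $\le e^{-|\mu|t+o(t)}$. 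Since $y=o(t)$ forces $c_0 y-y^2/(2\sigma^2t)=o(t)$, the linear term $-\alpha_-^2 t/(2\sigma^2)$ dominates and every such exponent tends to $-\infty$; hence lines (\ref{gbm:1st:Dy:L1}) and (\ref{gbm:1st:Dy:L3}) are $o(S_0e^{-y})$.

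Collecting the three limits yields $\partial_y C(y,t)/(S_0e^{-y})\to -1$, so $\partial_y C(y,t)<0$ for all large $t$, which is incompatible with $\partial_y C(y^*(t),t)=0$. This contradiction shows $y^*(t)/t\nrightarrow 0$. Since the computation runs verbatim along any subsequence $t_n\to\infty$ on which $y^*(t_n)/t_n\to 0$, the argument in fact rules out $\liminf_{t\to\infty}y^*(t)/t=0$, which is the form in which the lemma is later invoked.
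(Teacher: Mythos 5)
Your proof is correct, and its overall strategy coincides with the paper's: assume $y^*(t)/t\to 0$, plug into the first-order condition $\partial_y C(y^*(t),t)=0$ (valid since $\rho$ constant makes the term in (\ref{gbm:1st:Dy:L4}) vanish and the minimizer is interior), and show that the terms of (\ref{gbm:1st:Dy:L1})--(\ref{gbm:1st:Dy:L3}) cannot balance. Where you genuinely depart from the paper is in the normalization and bookkeeping, and your version is cleaner. The paper divides $\partial_y C$ by $\phi(d_t)$, rewrites everything through the error functions $\mathcal{E}_1$ coming from the approximation $N(-z)\approx \phi(z)/z$, and exhibits the contradiction as a single normalized term, $-S_0e^{-y}N(d_t)/\phi(d_t)$, blowing up to $-\infty$; crucially, this has to be run twice, in the separate cases $\mu+\sigma^2/2<0$ and $\mu+\sigma^2/2\geq 0$, because which terms admit the tail approximation depends on the behavior of $e_t$. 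You instead divide by $S_0e^{-y}$, bound every summand carrying $e^{\mu t}N(e_t)$ trivially by $e^{\mu t+2|\mu|y/\sigma^2}$ (so the sign and limit of $e_t$ never matter), and kill the $e^{(2|\mu|/\sigma^2+c_0)y}\phi(f_t)$ and $N(f_t)$ contributions by completing the square, so that the whole normalized derivative tends to $-1$ in a single, case-free computation that needs no error-function machinery. Your closing remark is also valuable: since the computation is subsequential, it yields the stronger conclusion $\liminf_{t\to\infty}y^*(t)/t>0$, which is the form actually used in the proof of Theorem~\ref{tinfty:regime1} (boundedness of $\ln\big((y-\mu t+\sigma^2 t/2)/y\big)$ requires $y/t$ bounded away from $0$, not merely $y/t\nrightarrow 0$); the paper's argument also delivers this but does not say so. One cosmetic imprecision: when $\mu=-\sigma^2/2$ the argument $e_t=-y/(\sigma\sqrt{t})$ need not diverge, but this is harmless because, as noted, your bounds never use the limit of $e_t$.
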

\begin{proof}
 Let us recall $\mathcal{E}_1(z)$ from (\ref{ErrorFuncDef}), the notation in (\ref{TNFGBM}), and  the notation $d_{t},e_{t},f_{t}$ introduced in (\ref{TNUP}). First, let us consider the case when $\mu +\frac{ \sigma^2}{2} <0$.  Note that when $\mu +\frac{ \sigma^2}{2} <0$, $(y-\mu t \pm \sigma^{2}t/2)/\sqrt{t}\to \infty$, as $t\to\infty$, regardless of the value of $y\geq{}0$. Next, we rewrite (\ref{gbm:1st:Dy:L1})-(\ref{gbm:1st:Dy:L3}) using  the approximation $\phi(z)/z$ and error functions as follows:
\begin{align}\label{Eq2L1}
	 \frac{\frac{\partial C}{\partial y} }{ \phi({d_{t}})} &=  -S_0 e^{-y}N({d_{t}})/\phi\left({d_{t}} \right)\\
	\label{Eq2L2}
	&\quad + S_0 e^{-y} \bigg\{ -\frac{2|\mu|}{\sigma^2} \frac{1}{f_{t}}-\frac{2\alpha_+}{\sigma^{2}}\frac{1}{e_{t}}\bigg\} - \frac{2\rho c y}{\sigma^{2}t} \frac{1}{f_{t}}\\
	 \label{Eq2L4}
	 &\quad + S_0 e^{-y}\bigg\{ \frac{2|\mu|}{\sigma^2} \mathcal{E}_1 (-f_t)+ \frac{2\alpha_{+}}{\sigma^{2}} \mathcal{E}_1 (-e_t)\bigg\}+ \frac{2\rho c\alpha_{-}}{\sigma^{2}}\mathcal{E}_1(-f_t).
\end{align}
Note that the error terms in (\ref{Eq2L4}) converges to $0$, regardless of the value of $y>0$. 
 For easiness of notation, in what follows we simply write $y=y(t)=y^{*}(t)$. Let us assume that $y(t)/t\to 0$, as $t \to \infty$. Then, it is easy to see that the terms in (\ref{Eq2L2}) converge to 0, since 
\[
	\frac{\sigma\sqrt{t}}{y-\mu t \pm \frac{\sigma^2 t}{2}}=\frac{1}{\sqrt{t}}\frac{\sigma}{\frac{y}{t}-\mu  \pm \frac{\sigma^2 }{2}}\to{}0,\quad 
	\frac{1}{\sqrt{t}}\frac{y}{y-\mu t \pm \frac{\sigma^2 t}{2}}=
	\frac{1}{\sqrt{t}}\frac{\frac{y}{t}}{\frac{y}{t}-\mu  \pm \frac{\sigma^2 }{2}}\to{}0.
\]
For the RHS of (\ref{Eq2L1}), let us write it as
\begin{align*}
	-S_{0}\sqrt{2 \pi}\, \exp\left({t\left(-\frac{y}{t} + \frac{(-\frac{y}{t}-\mu  + \frac{\sigma^2}{2})^2}{2\sigma^2 }\right)}\right)N\bigg({t\left(-\frac{y}{t} + \frac{(-\frac{y}{t}-\mu  + \frac{\sigma^2}{2})^2}{2\sigma^2 }\right)}\bigg).
\end{align*}
It is then clear that, when $y/t\to{}0$,
the RHS of (\ref{Eq2L1}) would converge to $-\infty$. This is a contradiction since LHS of (\ref{Eq2L1}) is always 0. Therefore, $y(t)/t$ does not converge to $0$ as $t \to \infty$.
Now, let us consider the other case, when $\mu +\frac{ \sigma^2}{2} \geq 0$. In that case, we now rewrite (\ref{gbm:1st:Dy:L1})-(\ref{gbm:1st:Dy:L3}) as follows:
\begin{align}\label{Eq11L1}
	 \frac{\frac{\partial C}{\partial y} }{ \phi({d_{t}})} &=  -S_0 e^{-y} \frac{ N ({d_{t}})}{\phi ({d_{t}})}  
	 + S_0e^{-y-\frac{2y\mu}{\sigma^2}+\mu t}\bigg( \frac{2\mu}{\sigma^2}+1\bigg) \frac{ N(e_t)}{\phi ({d_{t}})} \\
	 \label{Eq11L2}
	 &\quad + S_0 e^{-y} \bigg\{ \bigg(\frac{-2\mu}{\sigma^2}\bigg) \frac{\sigma\sqrt{t}}{y-\mu t + \frac{\sigma^2 t}{2}} \bigg\}
	  +  \rho c \bigg\{ \frac{2}{\sigma\sqrt{t}} \frac{y}{y-\mu t + \frac{\sigma^2 t}{2}} \bigg\} \\
	 \label{Eq11L3}
	 &\quad + S_0 e^{-y}\bigg\{ \big(\frac{-2\mu}{\sigma^2}\big) \mathcal{E}_1 (-f_t)  \bigg\}
	  - \rho c \bigg(\frac{-2\mu}{\sigma^2}+1\bigg)\mathcal{E}_1(-f_t).
\end{align}
 Let us again assume that $\frac{y}{t}\to 0$, as $t \to \infty$. Then, (\ref{Eq11L2}) and (\ref{Eq11L3}) converge to 0, while the RHS of (\ref{Eq11L1})  
converges to $-\infty$ if $y/t\to{}0$ and we again have a contradiction.
\end{proof}

\begin{lemma}
\label{sigma0:uniqueness}
{Suppose that $\rho=\rho(y,t)$ is constant in $y$ and $t$ and let $\mu<0$. Then, there exists $\sigma_0$ such that, for all, $0 < \sigma <\sigma_0$, $ \frac{\partial C}{\partial y}>0$ for all $y > -\mu t$.}
\end{lemma}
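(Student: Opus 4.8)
The plan is to avoid the delicate cancellations that arise if one differentiates the closed form in (\ref{gbm:1st:Dy:L1})--(\ref{gbm:1st:Dy:L3}) directly, and instead to work with the probabilistic representation behind Lemma~\ref{GBMCostFunc}. Writing $X_u=\ln(S_u/S_0)$, $\tilde Y_t=\inf_{u\le t}X_u$, $p(y):=P(\tilde Y_t\le -y)$, and letting $f_{\tilde Y_t}$ denote the density of $\tilde Y_t$, I recall (with $\rho$ constant in $y,t$) that $C(y,t)=(S_0e^{-y}-(r+f)\rho)\,p(y)+E[S_t\mathbf 1_{\{\tilde Y_t>-y\}}]+f-S_0$. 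Differentiating in $y$ and using $p'(y)=-f_{\tilde Y_t}(-y)$ together with $\tfrac{d}{dy}E[S_t\mathbf 1_{\{\tilde Y_t>-y\}}]=E[S_t\mid\tilde Y_t=-y]\,f_{\tilde Y_t}(-y)$, the derivative collapses to
\[
\frac{\partial C}{\partial y}(y,t)=f_{\tilde Y_t}(-y)\,B(y)-S_0e^{-y}p(y),\qquad B(y):=E[S_t\mid \tilde Y_t=-y]-S_0e^{-y}+(r+f)\rho.
\]
The point of this form is that $B(y)\ge (r+f)\rho>0$: on $\{\tilde Y_t=-y\}$ one has $S_t\ge S_0e^{-y}$, so the conditional expectation dominates $S_0e^{-y}$.

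Since $B(y)\ge (r+f)\rho$ and, for $y\ge-\mu t$, $e^{-y}\le e^{\mu t}$, it then suffices to prove the uniform bound $f_{\tilde Y_t}(-y)/p(y)>S_0e^{\mu t}/((r+f)\rho)$ for all $y\ge-\mu t$ once $\sigma$ is small. The ratio $f_{\tilde Y_t}(-y)/p(y)$ is exactly the reversed hazard rate of $\tilde Y_t$ evaluated at $-y$. I would argue that its infimum over $\{y\ge-\mu t\}$ (that is, over $\{-y\le\mu t\}$) is attained as $y\downarrow-\mu t$, and that at that boundary it blows up: using the explicit law of the minimum of a drifted Brownian motion (cf.\ (\ref{JntCndSY})), a direct evaluation at $m=\mu t$ gives $p(-\mu t)\to\tfrac12$ while $f_{\tilde Y_t}(\mu t)\sim 1/(\sigma\sqrt{2\pi t})$, so the reversed hazard rate is of order $1/\sigma\to\infty$. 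This forces the desired inequality for all small $\sigma$, uniformly in $y\ge -\mu t$, and hence $\partial_y C>0$.

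The main obstacle is making the ``infimum at the boundary'' step rigorous and uniform. The clean route is to show that the reversed hazard rate $m\mapsto f_{\tilde Y_t}(m)/P(\tilde Y_t\le m)$ is nonincreasing, which holds if the law of $\tilde Y_t$ is log-concave: by Pr\'ekopa's theorem log-concavity of the density passes to the CDF, whence $(\ln P(\tilde Y_t\le\cdot))'$ is decreasing; combined with $-y\le\mu t$ this reduces the whole estimate to the single boundary value above. If log-concavity is awkward to verify from the explicit density, the fallback is a two-regime estimate: in the outer region $y+\mu t\ge\eta$ the Mills bounds (\ref{imp.inequ}) give $f_{\tilde Y_t}(-y)/p(y)\approx (y+\mu t)/(\sigma^2 t)$, which is $\gtrsim 1/\sigma$ and increasing in $y$, while the genuinely delicate part is the boundary layer $y+\mu t=O(\sigma\sqrt t)$. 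There the three terms making up $f_{\tilde Y_t}$ are each $O(1/\sigma)$, with the middle one nearly cancelling one of the others, so one must retain the surviving $O(1/\sigma)$ contribution $1/(\sigma\sqrt{2\pi t})$ rather than only its leading exponential order. Substituting $y=-\mu t+s\sigma\sqrt t$ and evaluating the Gaussian factors exactly makes this cancellation transparent and yields a positive lower bound uniform in $s>0$ that matches the outer estimate as $s\to\infty$, completing the uniform control and thus the proof.
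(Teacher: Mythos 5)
Your proposal is correct, but it takes a genuinely different route from the paper's. The paper works directly on the closed-form derivative (\ref{gbm:1st:Dy:L1})--(\ref{gbm:1st:Dy:L3}): for $y>-\mu t$ it applies the Gaussian tail bounds (\ref{imp.inequ}) term by term, splits the resulting lower bound into the two groups (\ref{sig0:lem1:eq1:L1}) and (\ref{sig0:lem1:eq1:L2}), and produces explicit thresholds $\sigma_1,\sigma_2$ (hence an explicit $\sigma_0=\min\{\sigma_1,\sigma_2\}$) below which each group is positive. You instead recast $\partial_y C$ structurally as $f_{\tilde Y_t}(-y)B(y)-S_0e^{-y}p(y)$ with $B(y)\geq (r+f)\rho$, reducing the lemma to a uniform lower bound on the reversed hazard rate of the running minimum on $(-\infty,\mu t]$, obtained from monotonicity of that rate plus a single boundary evaluation. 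Both steps you flag as delicate do go through. First, the law of $\tilde Y_t$ is indeed log-concave: the joint density of $(X_t,\tilde Y_t)$ at $(z,m)$ is proportional to $(z-2m)\exp\bigl(-(z-2m)^2/(2\sigma^2 t)\bigr)\exp(\alpha_- z/\sigma^2)$ on the convex set $\{m\leq 0,\ m\leq z\}$ (with $\alpha_-=\mu-\sigma^2/2$), which is log-concave, and Pr\'ekopa's theorem preserves log-concavity under marginalization and under integration to the cdf, so $f_{\tilde Y_t}/P(\tilde Y_t\leq\cdot)$ is nonincreasing. Second, at the boundary $m=\mu t$ the exact identity $e^{2\alpha_-\mu t/\sigma^2}\,\phi\bigl((\mu+\alpha_-)\sqrt{t}/\sigma\bigr)=\phi\bigl(\sigma\sqrt{t}/2\bigr)$, combined with $N(-z)\leq \phi(z)/z$ applied to the negative middle term of $f_{\tilde Y_t}$, yields $f_{\tilde Y_t}(\mu t)\geq \frac{2|\mu|}{2|\mu|+\sigma^2/2}\,\frac{\phi(\sigma\sqrt{t}/2)}{\sigma\sqrt{t}}$, while $p(-\mu t)\to 1/2$; so the reversed hazard rate at the boundary is of order $1/\sigma$, exactly as you claim, and the near-cancellation is harmless. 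What your route buys is conceptual clarity and robustness: the positivity of $\partial_y C$ is seen to come from the $1/\sigma$ blow-up of the density of the minimum at $-\mu t$, and the argument uses only log-concavity rather than the specific Gaussian formulas, so it would extend to other log-concave price models. What the paper's route buys is fully explicit constants $\sigma_1,\sigma_2$, i.e., a quantitative admissible $\sigma_0$, which your argument gives only implicitly unless the boundary estimate above is tracked quantitatively (as the displayed bound in fact permits).
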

\begin{proof}
Let us recall the first derivative in (\ref{gbm:1st:Dy:L1})-(\ref{gbm:1st:Dy:L3}) and the notation $d_{t},e_{t},f_{t}$ introduced in (\ref{TNUP}). Note that when $y> -\mu t$, we can use (\ref{imp.inequ}) to have the following inequality:\begin{align}
\label{sig0:lem1:eq1:L1}
\frac{\partial C}{\partial y} &> S_0 e^{-y}   \phi \left({d_{t}} \right) 
		 \left[\frac{1}{-e_t}-\frac{1}{-e_t^3}
+\left( \frac{-2\mu}{\sigma^2}\right) \left(\frac{1}{-f_t}-\frac{1}{-f_t^3}-\frac{1}{-e_t} \right)\right]\\
\label{sig0:lem1:eq1:L2}
		&+\rho (r+f) \phi \left({d_{t}} \right) \frac{2y}{\sigma\sqrt{t} (y - \mu t + \frac{\sigma^2}{2}t )}-S_0 e^{-y}  N \left({d_{t}} \right).
\end{align}
To prove this lemma, we will first show that when $y>-\mu t$, there exists $\sigma_1$ such that for $\sigma < \sigma_1$,  the expression in (\ref{sig0:lem1:eq1:L2}) is positive, and there exists $\sigma_2$ such that for $\sigma < \sigma_2$, the expression in (\ref{sig0:lem1:eq1:L1}) is also positive. Then, when $\sigma <\min \{\sigma_1, \sigma_2\}$, $\partial_y C>0 $ for all $y>-\mu t$.
Let's assume that $\sigma < \sqrt{-2\mu}$.  
After some simplification, the inequality (\ref{sig0:lem1:eq1:L2})$>0$ can be rewritten as the following:
\begin{equation*}
\label{sig0:lem1:eq2}
	\frac{1}{\sigma \sqrt{t}} > \frac{S_0 e^{-y}}{\rho(r+f)} \frac{N \left({d_{t}} \right)}{\phi \left({d_{t}}\right)}\frac{y-\mu t +\frac{\sigma^2 t}{2} }{y},
\end{equation*}
which is true when  $y > -\mu t$, $\frac{\sigma^2 t}{2} < -\mu t $, and
$\sigma < \rho(r+f) e^{-\mu t}\phi \left(\sqrt{-\mu t/2}\right)/3 S_0 N\left(\sqrt{-\mu t/2}\right)$.
Therefore, the inequality (\ref{sig0:lem1:eq1:L2})$>0$ is true for all $\sigma<\sigma_1$, where
\[
\sigma_1 = \min \left(\sqrt{-2\mu}, \frac{\rho(r+f) e^{-\mu t}}{3S_0} \frac{\phi \big(\sqrt{|\mu| t/2}\big)}{N\big(\sqrt{|\mu| t/2}\big)} \right).
\]
Now, there exists $\sigma_2$ such that for $\sigma < \sigma_2 \Rightarrow$ (\ref{sig0:lem1:eq1:L1})$>0$. Note that (\ref{sig0:lem1:eq1:L1})$>0$ can be rewritten as the following:
\begin{align*}
 &S_0 e^{-y} \left|   \left[\frac{1}{-e_t}-\frac{1}{-e_t^3}
+\left( \frac{-2\mu}{\sigma^2}\right) \left(\frac{1}{-f_t}-\frac{1}{-f_t^3}-\frac{1}{-e_t} \right)\right] \right| 
		   < (r+f)  \frac{y}{\sigma\sqrt{t} (y - \mu t + \frac{\sigma^2}{2}t )} ,
\end{align*}
which is true for every $\sigma<\sigma_2$ where 
\[
	\sigma_2 = \min \left( \sqrt{-2\mu }, \frac{\rho(r+f)e^{-\mu t}}{3 S_0}\frac{\mu^2 t^2}{(-\mu -\mu t + 9/4)\sqrt{-2\mu t} } \right).
\]
Let $\sigma_0 = \min (\sigma_1, \sigma_2)$. Then for $\sigma < \sigma_0$, $\frac{\partial C}{\partial y} > 0$ for all $y > -\mu t$.
\end{proof}

\begin{lemma}
\label{sig0:locmin}
Under the conditions of Theorem~\ref{sig0:delta_approx}, $y^*(\sigma)\nearrow -\mu t$ as $\sigma \to 0$.
\end{lemma}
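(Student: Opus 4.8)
The plan is to analyze the $\sigma\to 0$ limit of $\partial_y C$ and of $C$ itself, using that $\rho$ is constant (so the line (\ref{gbm:1st:Dy:L4}) drops out) and the notation $d_t,e_t,f_t$ of (\ref{TNUP}). The central computation I would carry out is: for any fixed $y_0<-\mu t$, $\partial_y C(y,\sigma)\to -S_0 e^{-y}$ uniformly on $[0,y_0]$. Indeed, on this interval $d_t\to+\infty$ and $e_t,f_t\to-\infty$ uniformly as $\sigma\to 0$, so among the terms (\ref{gbm:1st:Dy:L1})--(\ref{gbm:1st:Dy:L3}) the only survivor is $-S_0 e^{-y}N(d_t)\to -S_0 e^{-y}$; every other term carries the blow-up factor $e^{-2y\mu/\sigma^2}$ against a Gaussian tail, and the identity $-\frac{(-y+\mu t)^2}{2t}-2\mu y=-\frac{(y+\mu t)^2}{2t}$ shows each such product has exponent $-\frac{(y+\mu t)^2}{2\sigma^2 t}\le -\frac{(y_0+\mu t)^2}{2\sigma^2 t}\to-\infty$, which dominates the polynomial prefactors $\sigma^{-1},\sigma^{-2}$ and forces uniform vanishing.

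With this in hand I would first invoke Lemma~\ref{sigma0:uniqueness}: for $\sigma<\sigma_0$, $C(\cdot,\sigma)$ is strictly increasing on $(-\mu t,\infty)$, so the infimum over $(0,\infty)$ equals that over $(0,-\mu t]$ and, since $\partial_y C(0^+,\sigma)\to -S_0<0$, is attained at some interior $y^*(\sigma)\in(0,-\mu t]$, whence $y^*(\sigma)\le -\mu t$. Fixing $y_0<-\mu t$, the estimate above gives $\partial_y C(\cdot,\sigma)<0$ on $[0,y_0]$ for small $\sigma$, so $C(\cdot,\sigma)$ is strictly decreasing there; combined with monotonicity on $(-\mu t,\infty)$ this confines the minimizer to $[y_0,-\mu t]$, i.e.\ $y^*(\sigma)\ge y_0$. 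Letting $y_0\uparrow -\mu t$ yields $y^*(\sigma)\to -\mu t$.

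It remains to show the convergence is genuinely from below, $y^*(\sigma)<-\mu t$, which I would do by comparing limiting cost values rather than derivatives. The same tail estimates give $C(y_0,\sigma)\to S_0 e^{-y_0}-\rho(r+f)+f-S_0$ for fixed $y_0<-\mu t$. At the endpoint, $d_t=\frac{\sigma\sqrt t}{2}\to 0$, so the execution probability $N(d_t)+e^{-2y\mu/\sigma^2+y}N(f_t)\to N(0)=\frac12$ and the analogous computation yields $C(-\mu t,\sigma)\to S_0 e^{\mu t}-\frac{\rho(r+f)}{2}+f-S_0$ --- precisely the midpoint of the left-limit $S_0 e^{\mu t}-\rho(r+f)+f-S_0$ and the right-value $S_0 e^{\mu t}+f-S_0$, a reassuring consistency check. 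Choosing $y_0$ close enough to $-\mu t$ makes $\lim_\sigma C(y_0,\sigma)$ lie strictly below $\lim_\sigma C(-\mu t,\sigma)$ (the gap tending to $\rho(r+f)/2>0$); hence $C(y_0,\sigma)<C(-\mu t,\sigma)$ for small $\sigma$, ruling out $y^*(\sigma)=-\mu t$. Here the symbol $\nearrow$ is read as convergence to $-\mu t$ from below, which is all that Theorem~\ref{sig0:delta_approx} uses downstream; strict monotonicity in $\sigma$ plays no role there.

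I expect the main obstacle to be the boundary-layer behavior exactly at $y=-\mu t$. For $y$ bounded away from $-\mu t$ the factors $e^{-2y\mu/\sigma^2}$ are annihilated by the Gaussian tails through the strictly negative exponent $-\frac{(y+\mu t)^2}{2\sigma^2 t}$, but at $y=-\mu t$ this exponent vanishes and the cancellation becomes exact; extracting the limit of $C(-\mu t,\sigma)$ --- in particular the decisive factor $\frac12$ --- then requires tracking the subleading $O(\sigma)$ corrections in the tail ratios $N(\cdot)/\phi(\cdot)$ and a careful accounting of the $\sigma^{-1}$ contributions (notably the one generated by the rebate term $\rho(r+f)$ in (\ref{gbm:1st:Dy:L3})), for which the bounds (\ref{imp.inequ}) are the natural tool.
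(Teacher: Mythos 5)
Your proof is correct, and it takes a genuinely different route from the paper's, though both share the same two pillars: Lemma~\ref{sigma0:uniqueness} (which confines the minimizer to $(0,-\mu t]$) and the Gaussian-tail domination of every term carrying the factor $e^{-2y\mu/\sigma^2}$, via your exponent identity --- this is exactly what makes the paper's neglected terms of order $\mathcal{O}(e^{-1/\sigma^2}/\sigma)$. The paper argues through the first-order condition: it shows $\partial_y C(0,\cdot)\to-S_0<0$ while the limiting expression in (\ref{C:sig0:L3}) blows up like $\sigma^{-1}$ at $y=-\mu t$, producing an interior local minimum, and then derives a contradiction from $\partial_y C(y^*(\sigma),\cdot)=0$ if $y^*(\sigma)+\mu t$ stayed bounded away from $0$ (the limit of the derivative would be $-S_0e^{-\delta+\mu t}\neq 0$). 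You instead prove uniform convergence $\partial_y C\to -S_0e^{-y}$ on each compact $[0,y_0]$ with $y_0<-\mu t$, which confines $y^*(\sigma)$ to $[y_0,-\mu t]$ outright --- no appeal to stationarity of $y^*$, no subsequence extraction --- and you exclude the endpoint by comparing limiting values rather than derivatives: $C(y_0,\sigma)\to S_0e^{-y_0}-\rho(r+f)+f-S_0$ versus $C(-\mu t,\sigma)\to S_0e^{\mu t}-\frac{1}{2}\rho(r+f)+f-S_0$, the halved rebate coming from the execution probability $N(d_t)\to N(0)=\frac{1}{2}$ at the boundary. Both limits check out: at $y=-\mu t$ the cross terms satisfy $e^{2\mu^2 t/\sigma^2}\phi(f_t)=\mathcal{O}(1)$ and hence vanish at rate $\mathcal{O}(\sigma)$ after division by $|f_t|\sim 2|\mu|\sqrt{t}/\sigma$, so no finer boundary-layer expansion is actually needed than what you outline. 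What your route buys: a cleaner quantifier structure (for every $y_0$, eventually $y^*(\sigma)\geq y_0$) and the ``from below'' conclusion automatically. What the paper's route buys: the derivative blow-up computation at $y=-\mu t$ is reused verbatim in the proof of Theorem~\ref{sig0:delta_approx} to extract the rate of $\delta(\sigma)=y^*(\sigma)+\mu t$, so the paper gets existence, endpoint exclusion, and the subsequent asymptotics from one formula. Finally, your reading of $\nearrow$ as convergence to $-\mu t$ from below (not monotonicity in $\sigma$) is consistent with what the paper's own proof establishes and with the lemma's only downstream use.
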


\begin{proof}
Let us recall the expression of $\partial_y C$ given in (\ref{gbm:1st:Dy:L1})-(\ref{gbm:1st:Dy:L3}),  the notation in (\ref{TNFGBM}), and the notation $d_{t}, e_{t}, f_{t}$ introduced in (\ref{TNUP}), and $\mathcal{E}_1(z) = N(-z)/\phi(z) -1/z$. Note that the two terms in (\ref{gbm:1st:Dy:L1}), the first term in (\ref{gbm:1st:Dy:L2}), and the last term in (\ref{gbm:1st:Dy:L3}) can be approximated by $\phi(z)/z$ when $\sigma\to{}0$, regardless of the value of $y$. 
{The second term in (\ref{gbm:1st:Dy:L2}) can be approximated well by such an expression provided that $y > -\mu t$. However, from Lemma~\ref{sigma0:uniqueness}, we know that under small $\sigma$, $\partial_y C > 0$ when $y>-\mu t$, which implies that $y^*(\sigma) \leq -\mu t$.} So, for now we analyze only those terms that can be approximated well by $\phi(z)/z$. In that case, we can write:
\begin{align}
\nonumber
	\frac{\partial C}{\partial y} &=  -S_0 e^{-y}  N \left({d_{t}} \right)+\phi \left({d_{t}}\right) S_0 e^{-y} \bigg\{ \frac{2\mu}{\sigma^2} \frac{1}{f_t}-\frac{2\alpha_{+}}{\sigma^{2}}\frac{1}{e_t} \bigg\}+ \phi \left({d_{t}}\right) \frac{2\rho c}{\sigma\sqrt{t}} \bigg\{ 1-\frac{\alpha \sqrt{t}}{f_t} \bigg\} \\
	 \label{Eq1L4}
	 &\quad +\phi \left({d_{t}} \right) S_0 e^{-y}\bigg\{ \big(\frac{-2\mu}{\sigma^2}\big) \mathcal{E}_1 \left(-f_t\right) + \big( \frac{2\mu }{\sigma^2}+1 \big) \mathcal{E}_1 \left( -e_t\right) \bigg\}\\
	 \label{Eq1L5}
	&\quad -\phi \left({d_{t}} \right) \rho(r+f) \left(\frac{-2\mu}{\sigma^2}+1\right)\mathcal{E}_1\left(-f_t\right)
\end{align}
Now, we show that the terms involving $\mathcal{E}_{1}$ converge to $0$. To this end, recall that $\mathcal{E}_{1}(z)\sim -1/{z^{3}}.$
Also, $\sigma \to 0$, without loss of generality we can assume that $\frac{\sigma^2 t}{2} < {(-\mu t)}/{2}$. Since $y>0$, $y - \mu t -{\sigma^2 t}/{2} > {-\mu t}/{2} $. Then, all the terms in (\ref{Eq1L4})-(\ref{Eq1L5}) converge to $0$ at the order of $\mathcal{O}(\sigma)$.
Therefore, when $\sigma \to 0$, 
\begin{align}
\label{C:sig0:L1}
	\frac{\partial C}{\partial y }\to & -S_0 e^{-y}  N \left({d_{t}} \right)+\sigma\sqrt{t}\phi \left({d_{t}} \right)S_{0}e^{-y} \frac{y+\mu t + \frac{\sigma^2 t}{2}}{(y-\mu t - \frac{\sigma^2 t}{2})(y-\mu t + \frac{\sigma^2 t}{2})}  \\
\label{C:sig0:L3}
	 &\quad +\phi \left({d_{t}} \right) (r+f)\frac{2}{\sigma\sqrt{t}} \frac{y}{y-\mu t + \frac{\sigma^2 t}{2}}.
\end{align}
If we substitute $y$ with $-\mu t$ to the RHS of (\ref{C:sig0:L1})-(\ref{C:sig0:L3}), it converges to $\infty$ as  $\sigma\to{}0$,  because (\ref{C:sig0:L3})$\to{}\infty$.
This shows that there exists a local minimum in $(0,-\mu t)$, since $\lim_{\sigma\to{}0}\partial_{y}C(0,t)=-S_{0}$. Let us write this minimum as $y^{*}(\sigma)$. 

Suppose that $y^{*}(\sigma)+\mu t \to \delta\in (\mu t,0)$, as $\sigma\to{}0$. Then, for constant $\delta$, the first term of 
(\ref{C:sig0:L1}) converges to $ -S_0 e^{-\delta+\mu t}$, while the second term in (\ref{C:sig0:L1}) and (\ref{C:sig0:L3}) converges to 0 at the order of $\mathcal{O}(\exp(-1/\sigma^2)/\sigma)$.
Thus, if $\delta$ is fixed constant,
$\frac{\partial C}{\partial y } (y^*(\sigma) ) \to -S_0 e^{-\delta+\mu t}$,
which is a contradiction as it should converge to $0$. 
\end{proof}

\section{Details on the computation of $\rho(x,t)$ and Proofs of Section~\ref{NwSect} }\label{DtlsRho}
\subsection{Computation of $\rho(x,t)$ and Estimation of Parameters}
Let us recall that  
{$\rho(x,t)$} is the probability that {a bid limit}  order placed at level ${S}_{0}-x$ at time $0$ is executed {before time $t$} during the first time period at which the {best bid} price is at the level ${S}_{0}-x$, conditioning on the latter event to happen. As mentioned in Remark \ref{CmptRho} (find notation therein), a reasonable formula for $\rho(x,t)$ is given by
\begin{equation}\label{ModelRhoFnl}
		\rho(x,t):=\sum_{i=1}^{\infty}\sum_{j=0}^{Q^{b}_{x}(0)}{ f^{a}(i)}\int_{0}^{t}f_{\tau}(s|0<\tau<t)P({ {N^{b,x}_{s}}}=j)\alpha_{t-s}(i,Q^{b}_{x}(0)-j+1){ ds.}
\end{equation}
Our goal in this subsection is to specify all the elements for the computation of $\rho(x,t)$. Whenever possible, we also estimate the underlying parameters using real  data, which, in this work, consists of level I LOB data for MSFT from April 17th to April 28th 2015 (8 days).

For the density $f_{\tau}(s|0<\tau<t)$ in (\ref{ModelRhoFnl}), we consider the Bachelier and the Black-Scholes Model. To distinguished both models, we denote $\tau_{_{BM}}$ (respectively, $\tau_{_{GBM}}$) the hitting time of the level $S_{0}-x$ for a BM (respectively, GBM) starting at $S_{0}$. Then, from (\ref{JntCndSY}), we can deduce that
\begin{align}
	\label{ftaux}
	f_{\tau_{_{BM}}}(s|0<\tau_{_{BM}}<t) &= \frac{\frac{1}{\sigma\sqrt{s^3}}x \phi\left(\alpha_{t}(x)s\right)}{\left(N\left(-\alpha_t(x)\right)+e^{\frac{-2x \mu }{\sigma^2} }N\left(\beta_t(x)\right)\right)},\\
\label{ftaux2}
	f_{\tau_{_{GBM}}}(s|0<\tau_{_{GBM}}<t)&=\frac{\frac{y}{\sigma s\sqrt{s}}\phi \left(\frac{y+\alpha_{-}s }{\sigma\sqrt{s}} \right)}
	{ \left(N \left(\frac{-y-\alpha_{-}t }{\sigma\sqrt{t}} \right) + e^{-\frac{2y\mu}{\sigma^2}+y} N \left(\frac{-y+\alpha_{-}t }{\sigma\sqrt{t}} \right)\right) },
\end{align}
where we used the notations given in Eqs.~(\ref{TNFBM}) and (\ref{TNFGBM}).

Recall that  $f^{a}(i)$ is the distribution of the best ask queue after the best bid price drops and {a} new best ask queue {fills the gap, shrinking the spread to 1 tick.} In Figure~\ref{Fai_distibution}, we show the sample distribution of $f^{a}(i)$ from the MSFT data. 
\begin{figure}
\centering
	\includegraphics[width=0.5\textwidth]{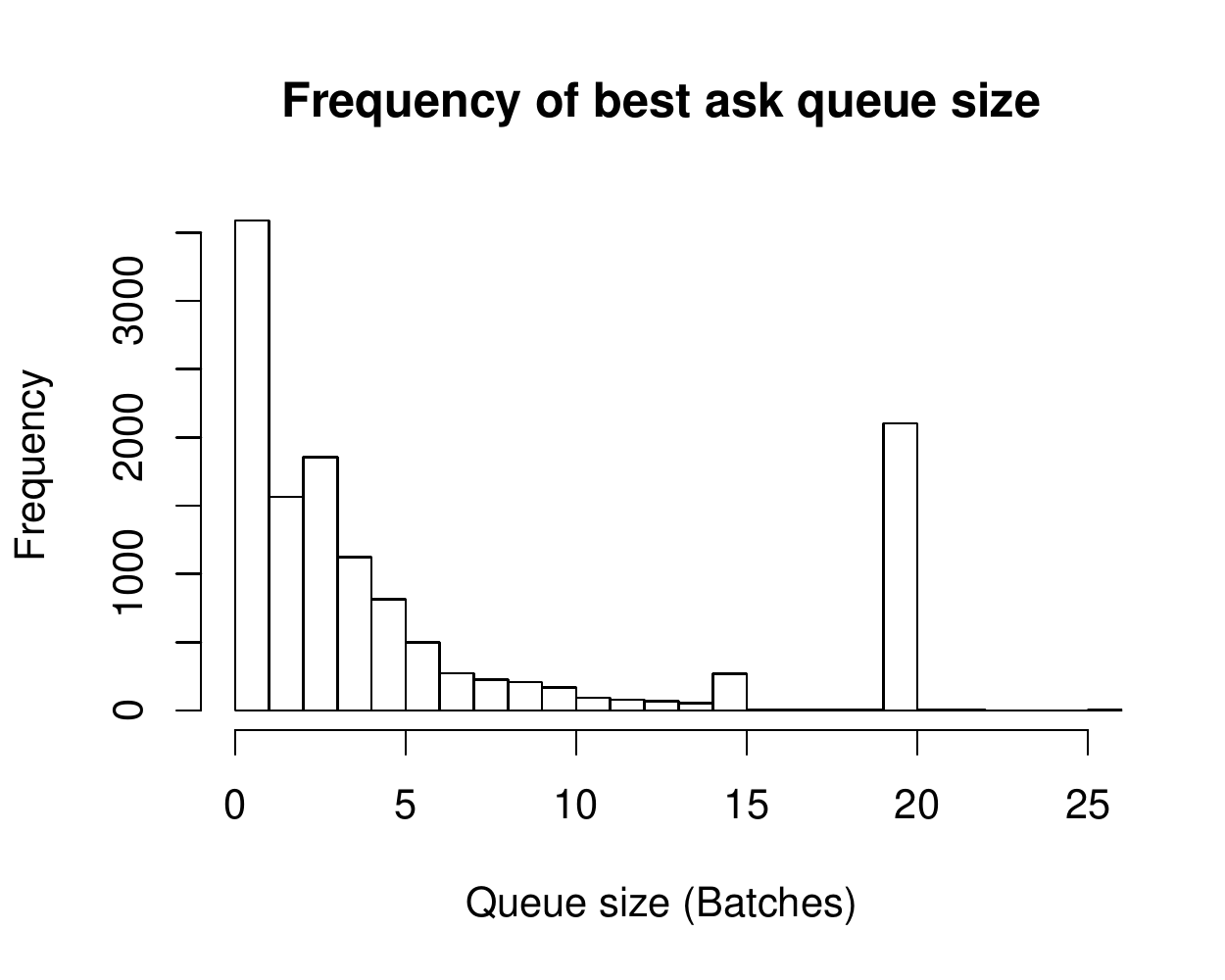}
	\caption{Distribution of best ask queue volume from MSFT data from April 17th to April 28th (8 days). The unit of queue size is a batch (100 stocks).}
\label{Fai_distibution}
\end{figure}

To determine $\alpha_{t}(i,j)$ and $P({ {N^{b,x}_{s}}}=j)$ in (\ref{ModelRhoFnl}), we need to specify the future flow of order, for which, we assume that the arrival, execution, and cancellation of limit orders at any level follow independent Poisson processes with respective intensity rates $\lambda_{\ell}$,  $\mu_{\ell}$, and $\theta_{\ell, {k}}$. Here, $\ell$ is either $a$ or $b$, depending on whether the order is in ask or bid side, and ${k}=1,2,\dots$ are the number of ticks away from the best bid or ask of the opposite side (so, $\theta_{a, 1}$ and $\theta_{b, 1}$ are the cancellation rates at the best ask and bid). For simplicity, we assume that $\theta_{a, {k}}=\theta_{b, {k}}=:\theta_{{k}}$, for ${k}=2,3,\dots$.  The arrival rates $\lambda_{a}$, $\lambda_{b}$, $\theta_{a,1}+\mu_{a}$, and $\theta_{b,1}+\mu_{b}$ are estimated using the MSFT data and are given in Table \ref{Table:Rates}. The chosen values for $\theta_{\ell,{k}}$ (${k}=2,3,\dots$) are borrowed from \cite{cont2010stochastic}.

From the assumptions in the previous paragraph, we have that 
\begin{align}\nonumber
	P({N^{b,k\eps}_{s}}=j) =\frac{e^{-\theta_k s} (\theta_k s)^j}{j!}, \text{ for } 0\leq j<Q^{b}_{k\eps}(0),\quad 
	P({N^{b,k\eps}_{s}}=Q^{b}_{k\eps}(0)) = \sum_{j=Q_{k\eps}(0)}^{\infty} \frac{e^{-\theta_k s} (\theta_{k} s)^j}{j!},
\end{align}
where $Q^{b}_{k\eps}(0)$ is the number of outstanding bid limit orders at the level $S_{0}-k\eps$ at time $0$. Finally, we turn to the computation of $\alpha_{u}(i,\ell)$, the probability that the best bid of a LOB gets depleted before the best ask and before time $u$ when there are $i$ and $\ell$ orders at the best ask and bid bid price, respectively. Note that if $\sigma_a^{i}$ and $\sigma_b^{\ell}$ are respectively the times until the best ask and bid queues get depleted when there are $i$ and $\ell$ shares in the queues at time $0$, then
 \begin{align}\label{alphat}
 	\alpha_{u}(i,\ell)=P(\sigma_b^{\ell}<\sigma_a^{i}, \sigma_b^{\ell} <u)
     	&=\int_0^{u} P(\sigma_b^{\ell} <s) P(\sigma_a^{i} \in ds)+P(\sigma_b^{\ell}<u)P(\sigma_a^{i}>u).
 \end{align}
Under our Poissonian setting, the distributions of $\sigma_a^{i}$ (already computed in \cite{cont2013price}) and $\sigma_b^{\ell}$ are as following:
 \begin{align}
\nonumber 
	g_a^{i}(s):=P(\sigma_a^{i} \in ds)&=\frac{i}{s}\left( \frac{\mu_a+\theta_a}{\lambda_a}\right)^{\frac{i}{2}} I_{i}\left( 2\sqrt{ \lambda_a (\theta_a+\mu_a)}s\right)e^{-s(\lambda_a+\theta_a+\mu_a)} ds,\\
 	\label{sigb}g_b^{\ell}(s):=P(\sigma_b^{\ell}\in ds)&=\frac{s^{\ell-1}e^{-s(\mu_b+\theta_b)}(\mu_b+\theta_b)^{\ell}}{\Gamma(\ell)} ds.
 \end{align}
 So, we can easily compute (\ref{alphat}) by numerical integration. 
 Note that $\alpha_{u}(i,\ell)$ is the same as $\rho(\eps,u)$, which is a proxy of $\rho(0^{+},t)$. Also, $\rho(0^+)$, which was defined as the limiting value of $\alpha_{u}(i,\ell)$ as $u\to\infty$, can be computed as
 \begin{equation}\label{LmtRho0}
{ \rho(0^+):= \lim\limits_{u\to\infty}\alpha_{u}(i,\ell) =\int_0^{\infty} P(\sigma_b^{\ell} <s) P(\sigma_a^{i} \in ds).} 
\end{equation}
From (\ref{sigb}), we also have that
 \[
 g_b^1(s)={e^{-s(\mu_b+\theta_b)}(\mu_b+\theta_b)}, \quad g_b^1(0)=\mu_b+\theta_b.
 \]

\begin{table}
	\begin{center}
	\begin{tabular}{|c| c |c|}
	\hline
	$\mu_a+\theta_{a,1}$ & Depletion rate of best ask queue & 19.32\\
	\hline
	$\lambda_a$ & Addition rate of best ask queue & 21.78\\
	\hline
	$\mu_b+\theta_{b,1}$ & Depletion rate of best bid queue & 18.68\\
	\hline
	$\lambda_b$ & Addition rate of best bid queue & 21.98\\
	\hline
\end{tabular}
\end{center}
\caption{Parameters for the addition and depletion rate of best ask and best bid queues from MSFT data. Unit of the rate is in batches, and the time unit is one second.}\label{Table:Rates}
\end{table}

\subsection{Proofs of Section~\ref{NwSect}}

\begin{proof}[\textbf{Proof of Proposition~\ref{Rhoxt:Diff:xINfty}}]
We show the proof of the result for the Bachelier model. The proof for the Black-Scholes framework can be done similarly.
Let us start by noting that, since we are assuming that $Q^{b}_{x}(0)=0$ for large enough $x$, $\rho(x,t)$ takes the form:
\begin{equation}\label{DfnAlphaQ0}
		\rho(x,t)=\sum_{i=1}^{\infty}f_{a}(i)\int_{0}^{t}f_{\tau}(s|0<\tau<t)\alpha_{t-s}(i,1){ ds}.
\end{equation}
By the continuity of $\partial_x f_{\tau}(s|0<\tau<t)$, we can use the Leibniz's rule for differentiation under the integral sign to get
\begin{align*}
\partial_x \rho(x,t) &=\sum_{i=1}^{\infty}f_{a}(i)\int_{0}^{t}\partial_x f_{\tau}(s|0<\tau<t)\alpha_{t-s}(i,1){ ds}\\
&= \sum_{i=1}^{\infty}f_{a}(i)\int_{0}^{t}
\frac{\frac{1}{\sigma\sqrt{s^3}}\phi\left(\frac{x+\mu s}{\sigma\sqrt{s}}\right)\phi\left(\frac{x+\mu t}{\sigma\sqrt{t}}\right)\frac{2x}{\sigma^2 \sqrt{t}}\left(1-\frac{t}{s}\right) }{P(\tau_x<t)^2}
\alpha_{t-s}(i,1){ ds},
 \end{align*}
 where we used the expression in (\ref{ftaux}) and the notation in (\ref{TNFBM}). Since the integrand above is negative, we conclude that $\partial_x \rho(x,t)\leq{}0$.
\end{proof}

\begin{proof}[\textbf{Proof of Proposition ~\ref{Rhoxtx2:xInfty:General}} ]
As in the proof of Proposition~\ref{Rhoxt:Diff:xINfty}, $\rho(x,t)$ takes the form (\ref{DfnAlphaQ0}) for large enough $x$.
Next, using the notation and formula in (\ref{alphat}), $\partial_u \alpha_{u}(i,\ell)=P(\sigma_b^{\ell} <u) g_a^{i}(u) +g_b^{\ell}(u)P(\sigma_a^{i}>u)- P(\sigma_b^{\ell}<u)g_a^i(u)$ and $\partial_u\alpha_{u}(i,1)|_{u=0}=g_b^1(0)$. Therfore, for any $\eps>0$,  there exists $\delta>0$  such that, for $s \in (t-\delta,t)$, $\left|\partial_u \alpha_{u}(i,1)|_{u=s}-g_b^1(0)\right|<\eps$. By the fact that $\alpha_{0}(i,1)=0$, there exists some $\theta\in[0,t]$ such that $\alpha_{t-s}(i,1)=\left(\partial_u \alpha_{u}(i,1)|_{u=\theta}\right)(t-s)$. Then, $I:=\int_{0}^{t}f_{\tau}(s|0<\tau<t)\alpha_{t-s}(i,1){ds}$ is such that
\begin{align}\nonumber
	I&
	=\int_{0}^{t}f_{\tau}(s|0<\tau<t)\left(\partial_u \alpha_{u}(i,1)|_{u=\theta}\right)(t-s){ ds},\\
	&\geq (g_b^1(0)-\eps) \int_{0}^{t}f_{\tau}(s|0<\tau<t)(t-s)ds\\
	&\quad 
+\int_{0}^{t-\delta}f_{\tau}(s|0<\tau<t)\left(\partial_u \alpha_{u}(i,1)|_{u=\theta}-g_b^1(0)\right) (t-s)ds.\label{SectionD:Ineq1}
\end{align}
Note that by (\ref{ftaux}), the final integral in (\ref{SectionD:Ineq1}) converges to 0 at the order of $\mathcal{O}(x^2 e^{-x^2 \delta/2\sigma^2t(t-\delta)})$. Therefore, it is sufficient to show the order of $\liminf_{x\to\infty}x^2\int_{0}^{t}f_{\tau}(s|0<\tau<t)(t-s)ds$ to complete the proof. By (\ref{imp.inequ}), $J(x,t)=\int_{0}^{t}f_{\tau}(s|0<\tau<t)(t-s)ds$ is such that
\begin{align}
\nonumber
	x^2J(x,t)&
=x^2t-\frac{x^3\left(  N\left(-\alpha_t\right) -e^{\frac{-2\mu x}{\sigma^2}}  N\left(\beta_t \right)\right) }{(-\mu)\left({N\left(-\alpha_t\right)+e^{\frac{-2x \mu }{\sigma^2} }N\left(\beta_t\right)}\right)}\\
&= \frac{x^2\phi\left(\alpha_t\right)\left(\frac{t}{\beta_t^3}-\frac{t}{\alpha_t^3}+\frac{x}{(-\mu)\beta_t^3} +\frac{x}{(-\mu)\alpha_t^3} +\mathcal{E}(x)\right)}{\left({N\left(-\alpha_t\right)+e^{\frac{-2x \mu }{\sigma^2} }N\left(\beta_t\right)}\right)},
\label{SectionD:Prop2:eq}
\end{align}
where $\mathcal{E}(x)$ converges to 0 at the order of $\mathcal{O}({1/x^4})$, and the denominator in (\ref{SectionD:Prop2:eq}) converges to $2\phi(\alpha_t)x/(x+\mu t)(x-\mu t)$. Therefore, 
\[
\lim_{x\to\infty}	x^2J(x,t)=
\lim_{x\to\infty}\frac{x^2\phi\left(\alpha_t\right)\sigma^3 t\sqrt{t}\left( \frac{4(-\mu t) x^3}{(x+\mu t)^3 (x-\mu t)^3}\right) }{2\phi(\alpha_t)x/(x+\mu t)(x-\mu t)} =2\sigma^2 t^2,
\]
and, finally get that $\liminf_{x\to\infty} x^2 \rho(x,t)	\geq g_b^1(0) 2\sigma^2 t^2$. In the geometric Brownian motion model, the only difference between $\rho(x,t)$ and $\tilde{\rho}(y,t)$ is that $y$ is used instead of $x$ and $\mu-\sigma^2/2$ is used instead of $\mu$. Following same steps, we will get the same lower bound for $\liminf_{y\to\infty} y^2 \tilde{\rho}(y,t)	$.	
\end{proof}


\end{document}